\RequirePackage{ifpdf}
\documentclass[12pt,a4paper,reqno]{amsart}
\usepackage{amsfonts,amsthm,latexsym,amsmath,amssymb,amscd}
\usepackage{graphics,graphicx}
\usepackage{microtype}
\usepackage[margin=2.6cm]{geometry}
\usepackage[colorlinks=true,linkcolor=blue,citecolor=blue]{hyperref}

\theoremstyle{plain}
\newtheorem{theorem}{Theorem}[section]
\newtheorem{proposition}[theorem]{Proposition}
\newtheorem{corollary}[theorem]{Corollary}
\newtheorem{lemma}[theorem]{Lemma}
\newtheorem{conjecture}[theorem]{Conjecture}

\newtheorem{THEO}{Theorem}

\newtheorem{THEOQ}{Theorem}

\theoremstyle{definition}

\newtheorem*{ack}{Acknowledgement}

\theoremstyle{remark}
\newtheorem{remark}[theorem]{Remark}
\newtheorem{example}[theorem]{Example}

\newcommand{\mathbbm}[1]{\mathbf{#1}}
\newcommand{\la}{\lambda}
\newcommand{\al}{\alpha}
\newcommand{\be}{\beta}
\newcommand{\si}{\sigma}
\newcommand{\eps}{\varepsilon}
\newcommand{\bC}{\mathbb C}
\newcommand{\bR}{\mathbb R}

\newcommand{\bE}{\mathbb E}
\newcommand{\bP}{\mathbb P}
\newcommand{\bCP}{\mathbb{CP}}
\newcommand{\bRP}{\mathbb{RP}}
\newcommand{\HH}{\mathcal H}

\newcommand{\cP}{\mathcal P}
\newcommand{\Ell}{\mathrm{Ell}}
\newcommand{\Disc}{\operatorname{Disc}}

\newcommand{\supp}{\operatorname{supp}}
\newcommand{\diam}{\operatorname{diam}}

\numberwithin{equation}{section}

\begin{document}

\title[Level crossings in random matrices III]
{Level crossings in random matrices III:\\ Analogues of Girko's circular law and Wigner's semicircle law}

\author[B.~Shapiro]{Boris Shapiro}
\address{Department of Mathematics, Stockholm University, SE-106 91 Stockholm, Sweden}
\email{shapiro@math.su.se}

\date{}

\keywords{random matrices, matrix pencils, level crossing, discriminant, circular law,
elliptic law, logarithmic potential}

\subjclass[2020]{15B52, 60B20, 15A18, 30C15, 31A15}

\begin{abstract}
For a random matrix pencil $A+\lambda B$, a \emph{level crossing} is a parameter
value $\lambda\in\mathbb{CP}^1$ at which the pencil has a multiple eigenvalue.
For a nondegenerate $n\times n$ pencil there are $n(n-1)$ such points, counted
with multiplicity. We study their empirical measure on the parameter sphere.

The main tool is an exact identity relating the chordal logarithmic potential of
the level-crossing measure to the normalized logarithmic energy of the spectrum of
$(A+\lambda B)/\sqrt{1+|\lambda|^2}$. For Gaussian Hermitian pencils the latter
matrix is an elliptic Ginibre matrix with parameter
$\tau(\lambda)=(1+\lambda^2)/(1+|\lambda|^2)$. We prove that the logarithmic
energy of the elliptic law is independent of $\tau$ and equals $-1/4$; thus the
circular and semicircular endpoints have the same energy.

This observation yields an unconditional large-$n$ theorem for independent
$GUE_n$ pencils: their empirical level-crossing measures converge in probability
to the uniform Fubini--Study measure on $\mathbb{CP}^1$. We also prove an exact
finite-$n$ symmetry theorem: whenever the law of $(A,B)$ is invariant under the
natural $SU(2)$ action, the full crossing process is rotation invariant and its
intensity is uniform, without Gaussian or moment assumptions. For real i.i.d.
pencils we reduce asymptotic uniformity to a scalar logarithmic-energy condition;
for real symmetric pencils we formulate the analogous conditional reduction.

Finally, we analyse diagonal deformation. The exactly diagonal Gaussian model has
Cauchy-distributed crossings, hence the uniform measure on $\mathbb{RP}^1$, and an
explicit singular energy profile. Monte Carlo experiments indicate a crossover
between the spherical and real-projective laws when the diagonal entries become of
order one. We state this crossover as a conjecture and give numerical evidence.
\end{abstract}

\maketitle

\section{Introduction}\label{s1}

\subsection{The problem}
Classical random matrix theory concerns the asymptotic distribution of the
eigenvalues of a single large random matrix; Wigner's semicircle law and Girko's
circular law are the two basic universality statements of this kind, see
\cite{Gi1,Gi2,Me,AGZ,BC}. The present paper is about a related but different object:
the distribution of the \emph{degeneracies} of a random one-parameter spectral
family
\begin{equation}\label{pencil}
        A+\la B,\qquad \la\in\bCP^1 .
\end{equation}
A \emph{level crossing} is a value of $\la$ for which $A+\la B$ has a multiple
eigenvalue. Thus the question is not where the eigenvalues of one random matrix
lie, but where the branch points of the spectral cover of the pencil sit in the
parameter sphere.

Such pencils occur throughout perturbation theory, with $A$ an unperturbed
operator, $B$ a perturbation and $\la$ the perturbation parameter \cite{Ka}; they
also arise in questions on the monodromy of spectra and on random Riemann surfaces,
cf.\ \cite{GP}.  After analytic continuation of a Hamiltonian parameter, generic
complex degeneracies are branch points of the eigenvalue sheets; their location
controls spectral monodromy and limits the domain of ordinary perturbation
expansions.  In random-matrix models of quantum spectra, the statistics of level
crossings and avoided crossings are therefore natural observables; see, for
example, \cite{ZVW}. The present paper continues \cite{ShZa1,ShZa2}.

Level crossings are the zeros of a discriminant. Writing
\[
        p_{A,B}(\la,t)=\det\bigl(A+\la B-tI\bigr),
        \qquad
        \Delta_{A,B}(\la)=\Disc_t\, p_{A,B}(\la,t),
\]
the level crossings are precisely the roots of $\Delta_{A,B}$, and
$\deg_\la\Delta_{A,B}=n(n-1)$. It is better to work homogeneously: for
$[\al:\be]\in\bCP^1$ put
\begin{equation}\label{eq:Dform}
        D_{A,B}(\al,\be):=\Disc_t\det\bigl(\al A+\be B-tI\bigr),
\end{equation}
a binary form of degree $N:=n(n-1)$. Whenever $D_{A,B}\not\equiv 0$ its zero
divisor consists of exactly $N$ points of $\bCP^1$ with multiplicity, and we let
\begin{equation}\label{eq:mun}
        \mu_{A,B}:=\frac1N\sum_{D_{A,B}(v)=0}\delta_{[v]}
\end{equation}
be the corresponding probability measure on $\bCP^1$, the \emph{empirical
level-crossing measure}. We stress that $\mu_{A,B}$ is a probability measure on a
\emph{compact} space for every realization; this simple observation will do real
work below.

Let $\si$ denote the rotation-invariant (Fubini--Study) probability measure on
$\bCP^1$; in the affine coordinate $\la=x+iy$,
\begin{equation}\label{eq:unif}
        d\si=\frac{dx\,dy}{\pi(1+|\la|^2)^2}.
\end{equation}
Following \cite{ShZa2} we call \eqref{eq:unif} the \emph{uniform law} for level
crossings. The basic exact result of \cite{ShZa2} is that complex Gaussian pencils
realize it identically in $n$.

\begin{THEOQ}[{\cite[Thm.~1, Prop.~1--2]{ShZa2}}]\label{th:GE}
Let $A,B$ be independent $n\times n$ complex Gaussian matrices with the same
entry variances in $A$ and $B$ (allowing a common variance on the diagonal different
from the off-diagonal variance). For every $n\ge2$, the expected empirical
level-crossing measure is exactly $\si$. The same conclusion holds for the
Gaussian subspace models treated in \cite{ShZa2}; see also the symmetry
formulation in \S\ref{sec:symmetry}.
\end{THEOQ}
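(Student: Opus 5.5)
The plan is to derive the statement from an $SU(2)$ symmetry of the pair $(A,B)$, in the spirit of the symmetry formulation announced for \S\ref{sec:symmetry}. The group $SU(2)$ acts on pairs by
\[
 g\cdot(A,B)=(aA+bB,\,-\bar b A+\bar a B),\qquad g=\begin{pmatrix} a& b\\ -\bar b&\bar a\end{pmatrix}.
\]
We have $\al(aA+bB)+\be(-\bar bA+\bar aB)=(a\al-\bar b\be)A+(b\al+\bar a\be)B$, so the pencil of $g\cdot(A,B)$ at $[\al:\be]$ equals the pencil of $(A,B)$ at $g^{T}[\al:\be]$. Hence
\[
D_{g\cdot(A,B)}(v)=D_{A,B}(g^{T}v),
\]
and the zero set of $D_{g\cdot(A,B)}$ is the image of the zero set of $D_{A,B}$ under the Möbius isometry $(g^T)^{-1}$ of the Fubini--Study metric. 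Multiplicities are preserved as well. It follows that $\mu_{g\cdot(A,B)}=((g^T)^{-1})_*\mu_{A,B}$.

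\textbf{Invariance of the law.} Let $x_{jk},y_{jk}$ be the $(j,k)$ entries of $A,B$. The assumptions say that the vectors $(x_{jk},y_{jk})$ are independent over entries. Each is a pair of independent centered circular complex Gaussians with a common variance $s_{jk}^2$, where $s_{jk}$ may depend on whether $j=k$. Such a pair is a standard complex Gaussian vector in $\bC^2$ scaled by $s_{jk}$, so its law is invariant under any unitary $2\times2$ matrix, in particular under $g$ acting entrywise as above. Therefore $g\cdot(A,B)\overset{d}{=}(A,B)$ for every $g\in SU(2)$.

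\textbf{Conclusion.} Put $\bar\mu=\bE\,\mu_{A,B}$; this is well defined because $\mu_{A,B}$ is a probability measure on a compact space. We need that $D_{A,B}\not\equiv0$ almost surely. The coefficients of $D_{A,B}$ are polynomials in the entries, and $D_{A,B}$ is not identically zero: take $A$ diagonal with distinct entries, so $D_{A,B}(1,0)\ne0$. A nonzero polynomial vanishes only on a null set for a Gaussian law with nondegenerate density on the relevant real space, and for the subspace models this holds provided the subspace contains such a diagonal matrix.

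From the two previous steps, $h_*\bar\mu=\bar\mu$ for every rotation $h$ of $\bCP^1$, since every element of $SO(3)$ arises as $(g^T)^{-1}$ for some $g\in SU(2)$. The action of $SO(3)$ on $S^2$ is transitive, and the isotropy group at a point acts transitively on circles of each radius about it. So a rotation-invariant Borel probability measure on $S^2$ is unique, and it is $\si$. One can see this by averaging any test function over $SO(3)$, or by noting that the Fourier coefficients on spherical harmonics of degree $\ge1$ must vanish. This gives $\bar\mu=\si$ for every $n\ge2$.

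For the Gaussian subspace models the same argument applies verbatim as long as the law on the subspace is invariant under the entrywise $U(2)$ mixing, i.e.\ $A,B$ are i.i.d.\ copies of a circular Gaussian on a common complex subspace. The only delicate point I expect is the nondegeneracy $D\not\equiv0$ a.s. in those models, handled by exhibiting one member of the subspace with simple spectrum.
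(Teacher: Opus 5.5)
Your proposal is correct and follows the paper's own route. The equivariance $D_{g\cdot(A,B)}(v)=D_{A,B}(g^{T}v)$ is Lemma~\ref{lem:equivariance} (written there with row vectors, $w\mapsto wg$), the invariance of the law of the Gaussian entry pairs is Corollary~\ref{cor:examples}(iv), and the step from there to $\bE\mu_{A,B}=\si$ via uniqueness of the rotation-invariant measure is the proof of Theorem~\ref{th:symmetry}. Your almost-sure nondegeneracy argument matches the paper's remark that $\{D\equiv0\}$ is a proper algebraic set; if the diagonal variance were allowed to vanish, a cyclic permutation matrix (zero diagonal, simple spectrum) would replace your diagonal witness.
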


The natural question, and the subject of this paper, is how far \eqref{eq:unif}
persists beyond the complex Gaussian case, and what replaces it when it fails.

\subsection{The mechanism}\label{ss:mechanism}
Everything here rests on one identity, which we now describe informally; see
\S\ref{sec:potential} for precise statements.

Let $\delta(p,q)\in[0,1]$ be the chordal distance on $\bCP^1$ and let
\[
        P^{\mu}(p):=\int_{\bCP^1}\log\delta(p,q)\,d\mu(q)
\]
be the chordal logarithmic potential of a probability measure $\mu$. Two elementary
facts drive the paper: $P^{\si}\equiv-\tfrac12$ is constant, and
\begin{equation}\label{eq:intro_laplace}
        \mu=\si+\frac1{2\pi}\Delta_\la P^{\mu},
\end{equation}
so that $\mu=\si$ if and only if $P^\mu$ is constant. Now normalize the pencil,
\[
        \widehat C_n(\la):=\frac{A_n+\la B_n}{\sqrt{1+|\la|^2}},
\]
and let $\zeta_1(\la),\dots,\zeta_n(\la)$ be its eigenvalues. Because the
discriminant is homogeneous of degree $N$ in the matrix entries, one gets the exact
identity, valid for every realization (Proposition~\ref{prop:key}),
\begin{equation}\label{eq:intro_key}
        P^{\mu_n}(\la)
        \;=\;
        \Lambda_n(\la)-\int_{\bCP^1}\Lambda_n\,d\si-\tfrac12,
        \qquad
        \Lambda_n(\la):=\frac1{N}\sum_{i\ne j}\log|\zeta_i(\la)-\zeta_j(\la)| .
\end{equation}
Thus \emph{the potential of the level-crossing measure is the normalized
logarithmic energy of the normalized spectrum}, up to a constant. All the
$\la$-dependence of the answer is carried by $\Lambda_n$.

For Gaussian Hermitian pencils, and more generally for the standard elliptic
Ginibre family, the normalized matrix $\widehat C_n(\la)$ is an elliptic random
matrix with parameter
\begin{equation}\label{eq:tau}
        \tau(\la)=\frac{1+\la^2}{1+|\la|^2},
        \qquad
        |\tau(\la)|^2=1-Y(\la)^2,
        \qquad
        Y(\la)=\frac{2\Im\la}{1+|\la|^2},
\end{equation}
$Y$ being the height coordinate on the sphere with respect to the axis through
$\la=\pm i$. In any model for which the limiting spectral measure is the elliptic
law $\Ell_{|\tau|}$, the corresponding limit of $\Lambda_n$ is therefore expected
to depend only on $Y$. Combining this with \eqref{eq:intro_laplace} gives the
\emph{master formula}. When the limiting energy profile $G$ is sufficiently smooth,
the absolutely continuous part of the limiting level-crossing measure has density
\begin{equation}\label{eq:intro_master}
        \varphi(Y)=1+2\,\frac{d}{dY}\Bigl[(1-Y^2)\,G'(Y)\Bigr]
\end{equation}
with respect to $\si$, where $G(Y)$ is the limiting energy profile. In general the
formula is interpreted distributionally; a kink in $G$ produces a singular measure
on the corresponding latitude. Precise statements are given in
Proposition~\ref{prop:master} and Theorem~\ref{th:criterion}.

\subsection{The key computation}
The family $\Ell_\tau$ interpolates between Girko's circular law ($\tau=0$) and
Wigner's semicircle law ($\tau=1$). Our main structural observation is:

\begin{THEO}\label{th:energy}
For every $\tau\in\bC$ with $|\tau|\le1$ the logarithmic energy of the elliptic
law is
\[
        I(\Ell_\tau)=\iint\log|z-w|\,d\Ell_\tau(z)\,d\Ell_\tau(w)=-\frac14 ,
\]
independently of $\tau$. More generally, if $\mu$ is a rotation-invariant
compactly supported probability measure on $\bC$ with finite logarithmic energy
and $T_\tau(z)=z+\tau\bar z$, then
$I\bigl((T_\tau)_*\mu\bigr)=I(\mu)+\log^+|\tau|$.
\end{THEO}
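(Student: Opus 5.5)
The general identity comes first; the elliptic statement then follows by specialization. Let $\mu$ be rotation invariant, compactly supported, with $I(\mu)>-\infty$. Then $\mu\otimes\mu$ has no mass on the diagonal, since $\log|z-w|=-\infty$ there. Linearity of $T_\tau$ gives
\[
        I\bigl((T_\tau)_*\mu\bigr)=\iint\log\bigl|T_\tau(z-w)\bigr|\,d\mu(z)\,d\mu(w).
\]
For $u=re^{i\theta}\neq0$ we have $|T_\tau u|=|u|\,|1+\tau e^{-2i\theta}|$, hence
\[
        \log|T_\tau(z-w)|=\log|z-w|+\log\bigl|1+\tau e^{-2i\arg(z-w)}\bigr|.
\]
The first term integrates to $I(\mu)$. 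For the second term I use the joint rotation $(z,w)\mapsto(e^{i\varphi}z,e^{i\varphi}w)$. It preserves $\mu\otimes\mu$ and shifts $\arg(z-w)$ by $\varphi$. So I can average over $\varphi\in[0,2\pi)$ and exchange the order of integration. The inner integral is
\[
        \frac1{2\pi}\int_0^{2\pi}\log|1+\tau e^{-2i(\theta+\varphi)}|\,d\varphi=\log^+|\tau|,
\]
by Jensen's formula, i.e.\ the mean value of $\log|1+\tau\zeta|$ over $|\zeta|=1$. This mean is independent of $\theta$, so the second term contributes exactly $\log^+|\tau|$. This proves $I((T_\tau)_*\mu)=I(\mu)+\log^+|\tau|$.

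The main technical point is justifying Fubini/Tonelli in this step, especially when $|\tau|=1$, where $\log|1+\tau\zeta|$ has a logarithmic singularity on the circle. I would split into positive and negative parts. The function $\log|z-w|$ is bounded above on the compact support and has integrable negative part by the finite-energy assumption. The function $\log|1+\tau e^{i\alpha}|$ is bounded above by $\log 2$, and its negative part is integrable in $\alpha$ uniformly in the other variables, since its $\varphi$-integral is finite and independent of $(z,w)$. Tonelli applied to the negative parts on the product space $(z,w,\varphi)$ then gives absolute integrability, and the rotation-average argument is legitimate. As a by-product, the image measure is compactly supported with finite energy, and the formula holds for all $|\tau|\le1$ including the degenerate segment case.

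For the elliptic law I identify $\Ell_\tau=(T_\tau)_*\Ell_0$, where $\Ell_0$ is the uniform (circular) law on the unit disk. For real $\tau$, $T_\tau(x+iy)=(1+\tau)x+i(1-\tau)y$ maps the unit disk affinely onto the ellipse with semi-axes $1+\tau$ and $1-\tau$. Affine maps preserve uniformity, which matches the standard normalization of the elliptic law. For $|\tau|=1$, the pushforward of the uniform disk under $x+iy\mapsto2x$ is the semicircle law on $[-2,2]$. For complex $\tau=|\tau|e^{i\psi}$, one checks $T_\tau=R_{\psi/2}\circ T_{|\tau|}\circ R_{-\psi/2}$ with $R$ rotations. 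This agrees with the rotated ellipse defining $\Ell_\tau$, and I would check it against the paper's convention for $\Ell_\tau$.

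It remains to compute $I(\Ell_0)$. The unit-disk equilibrium-type potential is $\int\log|z-w|\,d\Ell_0(w)=(|z|^2-1)/2$ for $|z|\le1$. Integrating against $\Ell_0$ gives $\tfrac12(\tfrac12-1)=-\tfrac14$. Since $\log^+|\tau|=0$ for $|\tau|\le1$, the general identity yields $I(\Ell_\tau)=-\tfrac14$ for all such $\tau$. As a consistency check at $|\tau|=1$: the semicircle of radius $2$ has energy $\log(2/2)-\tfrac14=-\tfrac14$.
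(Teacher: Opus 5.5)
Your proof is correct and follows the paper's argument. You write the increment as $T_\tau(z-w)$, use rotation invariance to make $\arg(z-w)$ uniform, and get the $\log^+|\tau|$ term from Jensen's formula. Then you push forward $\Ell_0$ and use $I(\Ell_0)=-\tfrac14$.

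Your joint-rotation averaging with the Tonelli justification is a more carefully justified version of the paper's step $S=Re^{i\Theta}$ with $\Theta$ uniform. The same holds for your explicit treatment of the degenerate semicircle case $|\tau|=1$ and of complex $\tau$ by conjugating with rotations.
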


The proof is three lines (\S\ref{sec:energy}): $\Ell_\tau=(T_\tau)_*\Ell_0$, the
increment $T_\tau z-T_\tau w$ equals $s+\tau\bar s$ with $s=z-w$, the argument of
$s$ is uniform and independent of $|s|$ by rotation invariance, and
$\frac1{2\pi}\int_0^{2\pi}\log|\tau+e^{i\psi}|\,d\psi=\log^+|\tau|$ by Jensen's
formula.

Theorem~\ref{th:energy} says that $G$ in \eqref{eq:intro_master} is constant
whenever the limiting spectral law of $\widehat C_n(\la)$ is elliptic, so
$\varphi\equiv1$ and the limit is uniform. In particular the Girko-type and the
Wigner-type level-crossing laws coincide. This corrects the expectation, stated in
\cite{ShZa2} and in earlier drafts of the present work, that the Hermitian limit
should be a non-uniform, $SO(2)$-invariant measure. What is true is that the
finite-$n$ Hermitian laws are non-uniform --- for instance $GUE_2$ has density
$2|Y|$ with respect to $\si$ --- but the non-uniformity disappears in the limit. The numerical data indicate slow
convergence and are consistent with an $O(n^{-1}\log n)$ first correction; we do
not prove this rate here. See Figure~\ref{fig:cdf} and Remark~\ref{rem:rate}.

\subsection{Main results}
Beyond Theorems~\ref{th:GE} and \ref{th:energy}, we prove the following.

\smallskip\noindent
\textbf{(1) An exact law from symmetry, without Gaussian or moment hypotheses.}

\begin{THEO}\label{th:symmetry}
Let $n\ge2$, and let $(A,B)$ be a random pair of $n\times n$ complex matrices whose joint
distribution is invariant under the $SU(2)$-action
$\mathfrak U:(A,B)\mapsto(uA+vB,\,-\bar vA+\bar uB)$, $|u|^2+|v|^2=1$, and assume
$D_{A,B}\not\equiv0$ almost surely. Then the level-crossing point process on
$\bCP^1$ is invariant under the induced action of $SO(3)$. In particular
$\bE\mu_{A,B}=\si$ for every $n$, and all correlation measures of the crossing
process are rotation invariant.
\end{THEO}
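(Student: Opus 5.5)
The plan is to show that the crossing divisor is \emph{equivariant} under the $SU(2)$-action. Once this holds, invariance of the law of $(A,B)$ transfers directly to invariance of the law of the point process.

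For $g=\begin{pmatrix}u&v\\-\bar v&\bar u\end{pmatrix}\in SU(2)$ write $g\cdot(A,B)=(uA+vB,\,-\bar vA+\bar uB)$. For $(\al,\be)\in\bC^2$ we have
\[
\al(uA+vB)+\be(-\bar vA+\bar uB)=(\al u-\be\bar v)A+(\al v+\be\bar u)B=\al' A+\be' B ,
\]
where $(\al',\be')=(\al,\be)\,g$, a linear substitution by a unitary matrix. Hence, by the definition \eqref{eq:Dform},
\[
D_{g\cdot(A,B)}(\al,\be)=D_{A,B}\bigl((\al,\be)g\bigr).
\]
In particular $D_{g\cdot(A,B)}\not\equiv0$ iff $D_{A,B}\not\equiv0$. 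The zeros of $D_{g\cdot(A,B)}$, with multiplicities, are therefore exactly the images of the zeros of $D_{A,B}$ under the projective map $[v]\mapsto[v g^{-1}]$; multiplicities are preserved because the substitution is a linear automorphism of binary forms. So $\mu_{g\cdot(A,B)}=(\rho_g)_*\mu_{A,B}$, where $\rho_g$ is the Möbius transformation of $\bCP^1$ induced by $g$ (up to taking inverse or transpose, which does not matter). This holds for the full counting measure $N\mu_{A,B}$ as a point configuration, not only for its normalization.

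Next I use the standard fact that the Möbius maps induced by $SU(2)$ are exactly the chordal isometries in $SO(3)$, via the double cover $SU(2)\to SO(3)$, and that all of $SO(3)$ arises. Let $\Xi(A,B)$ denote the crossing configuration. Since $g\cdot(A,B)\overset{d}{=}(A,B)$, we get $(\rho_g)_*\Xi(A,B)=\Xi(g\cdot(A,B))\overset{d}{=}\Xi(A,B)$. For this to make sense, $\Xi$ must be a measurable map into the space of finite configurations on $\bCP^1$. I expect this to be the only technical point: the zeros of a binary form depend continuously on its coefficients as a divisor, and the coefficients of $D_{A,B}$ are polynomials in the entries of $A,B$. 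The map is thus continuous on the full-measure set $\{D\not\equiv0\}$.

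Rotation invariance of the process gives rotation invariance of all its correlation measures, since these are determined by the law of the process and transform naturally under $\rho_g$. In particular the intensity $\bE\mu_{A,B}$ is an $SO(3)$-invariant probability measure on $\bCP^1$, and it is indeed a probability measure because each $\mu_{A,B}$ is one. By uniqueness of the rotation-invariant probability measure on $S^2$ (transitivity of $SO(3)$, Haar measure on the homogeneous space), $\bE\mu_{A,B}=\si$. No moment hypotheses are needed, because $\mu_{A,B}$ always has total mass one on a compact space.
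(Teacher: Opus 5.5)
Your proposal is correct and follows essentially the same route as the paper: the equivariance identity $D_{g\cdot(A,B)}(w)=D_{A,B}(wg)$ (the paper's Lemma~\ref{lem:equivariance}) transports the crossing divisor by the induced rotation, so invariance in law of $(A,B)$ gives rotation invariance of the crossing process, and uniqueness of the rotation-invariant probability measure on $S^2$ yields $\bE\mu_{A,B}=\si$. Your measurability remark is a detail the paper leaves implicit: roots of a binary form depend continuously on its coefficients, so $(A,B)\mapsto\mu_{A,B}$ is continuous on the invariant open set $\{D\not\equiv0\}$.
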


This contains Theorem~\ref{th:GE} and considerably more: no moments need exist. For
example, if the entry pairs $(a_{ij},b_{ij})\in\bC^2$ are i.i.d.\ with an
$SU(2)$-invariant law --- Gaussian, uniform on a sphere, or heavy-tailed with an
arbitrary radial part --- and the resulting pencil is nondegenerate almost surely,
then its expected empirical crossing measure is exactly uniform on $\bCP^1$ for
every $n$. See \S\ref{sec:symmetry}.

\smallskip\noindent
\textbf{(2) An unconditional theorem in the Hermitian class.}

\begin{THEO}\label{th:GUEmain}
Let $A_n,B_n$ be independent $GUE_n$ matrices. Then the empirical level-crossing
measures $\mu_n$ of $A_n+\la B_n$ converge, in probability and weakly on $\bCP^1$,
to the uniform measure $\si$. The same holds when $A_n$ and $B_n$ are independently
drawn from any Gaussian ensembles for which $\widehat C_n(\la)$ is a complex
elliptic Ginibre matrix --- in particular for $GUE$/Ginibre mixed pencils.
\end{THEO}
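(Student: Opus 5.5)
Our plan is to run the mechanism of \S\ref{ss:mechanism} in three steps: prove $\Lambda_n\to-\tfrac14$ pointwise, upgrade this to $L^1(\si)$ convergence, and then use \eqref{eq:intro_laplace}.

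\textbf{Pointwise convergence.} Fix $\la\in\bCP^1$. For independent $GUE_n$ matrices $A_n,B_n$, the normalized matrix $\widehat C_n(\la)=(A_n+\la B_n)/\sqrt{1+|\la|^2}$ is a complex elliptic Ginibre matrix with parameter $\tau(\la)$ from \eqref{eq:tau}, so its empirical spectral measure converges to $\Ell_{|\tau(\la)|}$. This is Girko's elliptic law; in the Gaussian case it is exactly solvable, since the eigenvalue density is explicit when $|\tau|<1$ and the matrix is Hermitian up to rotation when $|\tau|=1$. We claim that the normalized energy $\Lambda_n(\la)$ converges in probability to $I(\Ell_\tau)$, which equals $-\tfrac14$ by Theorem~\ref{th:energy}. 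Weak convergence alone does not give this, because $\log|z-w|$ is unbounded. We would truncate, writing $\log|z-w|=\max(\log|z-w|,-M)+(\text{negative remainder})$. The truncated part converges by weak convergence together with tightness of the spectrum, using $\|\widehat C_n\|=O(1)$ with high probability. The remainder must be shown negligible, i.e.\ few eigenvalue pairs are closer than $e^{-M}$.

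For $|\tau|<1$ we would bound the remainder using the explicit determinantal structure: the two-point function of the elliptic Ginibre ensemble vanishes quadratically on the diagonal and has a uniform bulk density bound, so $\bE\frac1N\sum_{i\ne j}\log^-|\zeta_i-\zeta_j|\,\mathbf 1_{|\zeta_i-\zeta_j|<e^{-M}}\to0$ as $M\to\infty$, uniformly in $n$. For the Hermitian latitude $|\tau|=1$, which is a null set, we argue the same way with the GUE sine-kernel bounds, or simply skip it, since a null set is irrelevant for the next step.

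\textbf{From pointwise to $L^1$.} Convergence in probability for each fixed $\la$, together with the dominated-convergence criterion below, gives $\int|\Lambda_n+\tfrac14|\,d\si\to0$ in probability. The criterion is uniform integrability of $\Lambda_n(\la)$ in $(\la,\omega)$ with respect to $\si\otimes\bP$. The upper bound is easy, since $\Lambda_n\le\log(2\|\widehat C_n\|)$. For the lower bound we would use the previous step's estimate of $\bE|\Lambda_n(\la)|^{1+\eps}$, or an estimate of $\bE\Lambda_n^-$ with a mild extra moment. The estimate should hold uniformly in $\la$: the elliptic kernel bounds are uniform in $\tau$ on compacts of $|\tau|<1$, and near $|\tau|=1$ the GUE repulsion is stronger. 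This uniformity near the real circle $|\tau|\to1$ is the main obstacle.

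\textbf{Conclusion.} By \eqref{eq:intro_key}, $P^{\mu_n}=\Lambda_n-\int\Lambda_n\,d\si-\tfrac12$, so $P^{\mu_n}\to-\tfrac12$ in $L^1(\si)$ in probability. For a test function $f\in C^\infty(\bCP^1)$, \eqref{eq:intro_laplace} gives
\[
\int f\,d\mu_n-\int f\,d\si=\frac1{2\pi}\int P^{\mu_n}\Delta f ,
\]
and the right-hand side tends to $\frac1{2\pi}\int(-\tfrac12)\Delta f=0$ in probability. Since $\bCP^1$ is compact, smooth functions are dense in $C(\bCP^1)$, and the $\mu_n$ are probability measures, this yields weak convergence in probability. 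The mixed GUE/Ginibre case is identical: all that was used is that $\widehat C_n(\la)$ is complex elliptic Ginibre, with the same kernel estimates.
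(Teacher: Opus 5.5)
\textbf{Gap in your second step: uniformity near $\bRP^1$.} Your first step is sound. It is essentially the content of Proposition~\ref{prop:gasenergy}, provided the density bound you invoke holds on all of $\bC$ and not only in the bulk. Lemma~\ref{lem:christoffel}, $K_n(z,z)\le\frac n\pi e^{\sup\Delta V/4}$, gives exactly that; the quadratic vanishing of $\rho_2$ on the diagonal is not needed. The gap is the one you flag yourself. You require uniform integrability of $\Lambda_n(\la)$ with respect to $\si\otimes\bP$ on the whole sphere, including a neighbourhood of $\bRP^1$ where $|\tau(\la)|\to1$, and nothing in the proposal supplies it. The heuristic that ``GUE repulsion is stronger'' there does not become an estimate with the tools you name. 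As $|\tau|\to1$ the spectrum collapses onto a strip of width $O(1-|\tau|)$, and the planar one-point function is of order $n/(1-|\tau|^2)$. Both the bound $\rho_2\le\rho_1\otimes\rho_1$ and the Christoffel constant $e^{M/4}$, with $M=4/(1-|\tau|^2)$, blow up. Uniform control near the real circle would need weak non-Hermiticity kernel asymptotics, which is substantial work not in your plan.

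\textbf{Why this uniformity is unnecessary.} The paper works on $\Omega=\bC\setminus\bR$ and tests only against $f\in C_c^\infty(\Omega)$. For such $f$ the random constant $\int\Lambda_n\,d\si$ in \eqref{eq:key} drops out, because $\int\Delta f\,dx\,dy=0$. Hence $\int f\,d\mu_n-\int f\,d\si=\frac1{2\pi}\int\Lambda_n\,\Delta f\,dx\,dy$. This needs only $L^1$ convergence of $\Lambda_n$ on compacts where $|\tau|\le1-\delta$, which is precisely where your kernel bounds are uniform. No mass can hide on $\bRP^1$: the $\mu_n$ are probability measures on a compact space and $\si(\bRP^1)=0$. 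So every subsequential limit that agrees with $\si$ off $\bRP^1$ equals $\si$ (Lemma~\ref{lem:mass}).

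\textbf{Alternative repair within your own scheme.} Lemma~\ref{lem:pot_basic}(iv) bounds $P^{\mu_n}$ in $L^2(\si)$ deterministically. Hence $\Lambda_n-\int\Lambda_n\,d\si$ is automatically uniformly integrable in $\la$, and only the random additive constant needs control. A short argument handles it: with high probability $|\Lambda_n+\tfrac14|\le\eta$ on a set of $\si$-measure at least $\tfrac12$, which makes the constant tight. Combined with $\int P^{\mu_n}\,d\si=-\tfrac12$, this shows that pointwise convergence in probability for a.e.\ $\la\notin\bR$ already forces $P^{\mu_n}\to-\tfrac12$ in $L^1(\si)$ in probability. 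No uniformity in $\la$ is needed at all.
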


 The proof is unconditional: the
uniform circular-law hypothesis (UCL) is not needed, because after normalization
the expected potential depends on $\la$ only through the scalar $|\tau(\la)|$ and
dominated convergence suffices; the logarithmic tail hypothesis (LT) becomes a
lemma via Schur's inequality; and the small-repulsion hypothesis (SR) is replaced
by a Christoffel-function bound $K_n(z,z)\le \frac n\pi e^{\sup\Delta V/4}$ valid
for any determinantal Coulomb gas, which we prove by a sub-mean-value argument
(Lemma~\ref{lem:christoffel}).

\smallskip\noindent
\textbf{(3) Conditional reductions for the real classes.}
For real i.i.d.\ pencils the global spectral input is the circular law, so the
remaining issue is the scalar convergence
\begin{equation}\label{eq:E}
        \bE\,\Lambda_n(\la)\longrightarrow -\tfrac14 ,
        \tag{E}
\end{equation}
for almost every fixed $\la$, together with a local uniform integrability bound.
For real symmetric Gaussian pencils the normalized matrix is complex symmetric;
we formulate the analogous reduction conditional also on the appropriate global
elliptic law.  In both cases the new criterion is substantially weaker than the
uniform circular-law, small-repulsion and logarithmic-tail hypotheses used in the
earlier approach. See Theorem~\ref{th:criterion} and \S\ref{sec:real}.

\smallskip\noindent
\textbf{(4) No separate hypothesis is needed to exclude mass on $\bRP^1$.}
For real ensembles crossings may approach the real projective line.  Once the
limiting measure off $\bRP^1$ has total mass one, however, compactness and
conservation of mass leave no room for an additional singular component on
$\bRP^1$ (Lemma~\ref{lem:mass}).

\smallskip\noindent
\textbf{(5) Diagonal deformation: a rigorous endpoint and a conjectural crossover.}
The diagonal model is exactly solvable.  If both matrices are real diagonal
Gaussians, every crossing is real and its one-point law is Cauchy; the empirical
crossing measure converges to the uniform measure on $\bRP^1$.  Its energy profile
is
\[
G_{\rm diag}(Y)=\mathrm{const}+\frac12\log\frac{1+|Y|}{2},
\]
whose kink at $Y=0$ produces precisely the equatorial singular mass in the
distributional master formula.  For a real symmetric pencil with the diagonal
scaled by $s$, the diagonal perturbation has operator norm of order
$s\sqrt{\log n/n}$, while a typical diagonal entry becomes order one when
$s\asymp\sqrt n$.  These observations identify the natural scales but, because
the normalized pencil is non-normal away from $\bRP^1$, they do not by themselves
prove stability of the level-crossing measure.  Our simulations at
$s=\varsigma\sqrt{n/2}$ show a clear family interpolating between the spherical and
equatorial laws.  We formulate this as Conjecture~\ref{conj:transition} rather
than as a theorem.

\subsection{Organization}
\S\ref{sec:potential} develops the chordal potential theory and proves the key
identity and the master formula. \S\ref{sec:symmetry} proves the exact symmetry
theorem, and \S\ref{sec:energy} computes the energy of the elliptic law.
\S\ref{sec:apriori} contains the a priori estimates used in the convergence
criterion. \S\ref{sec:hermitian} proves the unconditional $GUE$ theorem, while
\S\ref{sec:real} treats the conditional real classes. \S\ref{sec:transition}
analyses diagonal deformation, separating the rigorous diagonal endpoint from the
conjectural crossover. \S\ref{sec:numerics} reports the simulations, and
\S\ref{sec:problems} lists open problems.

\section{The discriminant form and chordal potential theory}\label{sec:potential}

\subsection{Level crossings as the divisor of a binary form}
Fix $n\ge2$ and set $N:=n(n-1)$. Let $\al A+\be B$ be the homogeneous pencil and
let $D_{A,B}$ be as in \eqref{eq:Dform}.

\begin{lemma}\label{lem:form}
$D_{A,B}$ is a binary form of degree $N$ in $(\al,\be)$. If
$\xi_1(\al,\be),\dots,\xi_n(\al,\be)$ denote the eigenvalues of $\al A+\be B$ then
$D_{A,B}=\prod_{i<j}(\xi_i-\xi_j)^2$.
\end{lemma}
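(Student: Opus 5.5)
The plan is to work from the definition of the discriminant of a monic polynomial. For a monic polynomial $f(t)=\prod_{i=1}^n(t-\xi_i)$, the discriminant $\Disc_t f$ is, by definition (or by the standard identity with the resultant $\mathrm{Res}(f,f')$ up to the sign $(-1)^{n(n-1)/2}$), equal to $\prod_{i<j}(\xi_i-\xi_j)^2$. It is also a universal polynomial with integer coefficients in the coefficients of $f$: it is symmetric in the roots, so it is a polynomial in the elementary symmetric functions by the fundamental theorem on symmetric polynomials.

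First, observe that $\det(\al A+\be B-tI)$ is, up to the sign $(-1)^n$, a monic polynomial in $t$ whose roots are $\xi_1(\al,\be),\dots,\xi_n(\al,\be)$. The overall sign does not affect the discriminant, since $\Disc$ of a monic polynomial is defined through its roots. This gives the product formula $D_{A,B}=\prod_{i<j}(\xi_i-\xi_j)^2$ pointwise in $(\al,\be)$.

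Second, for polynomiality: the coefficient of $t^{n-k}$ is $\pm e_k(\xi)$, which is a sum of $k\times k$ principal minors of $\al A+\be B$. Each such coefficient is therefore a homogeneous polynomial of degree $k$ in $(\al,\be)$. Since $\Disc$ is a polynomial in $e_1,\dots,e_n$, the function $D_{A,B}$ is a polynomial in $(\al,\be)$.

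Third, for homogeneity: scaling $(\al,\be)\mapsto(c\al,c\be)$ scales the matrix, hence every eigenvalue, by $c$. So $\xi_i-\xi_j\mapsto c(\xi_i-\xi_j)$, and the product over the $\binom n2$ pairs, squared, picks up the factor $c^{2\binom n2}=c^{n(n-1)}=c^N$. A polynomial satisfying $D(c\al,c\be)=c^N D(\al,\be)$ for all $c\in\bC$ is a homogeneous form of degree $N$; one sees this by comparing the homogeneous components. The form may vanish identically, and "degree $N$" is understood in that sense.

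The only subtle point is that the eigenvalues $\xi_i(\al,\be)$ are not individually polynomial, or even continuous as labelled functions. This is handled by passing through symmetric functions as above, so no real obstacle is expected.
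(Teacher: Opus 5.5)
Your proposal is correct and follows essentially the same route as the paper: the discriminant is a polynomial in the coefficients of the characteristic polynomial, those coefficients are homogeneous in $(\al,\be)$, and the product formula gives degree $2\binom n2=N$ because the eigenvalues scale with degree one. Your remarks on the $(-1)^n$ sign and the homogeneous-components argument simply make explicit details the paper's three-line proof leaves implicit.
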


\begin{proof}
The coefficients of the characteristic polynomial of $\al A+\be B$ are homogeneous
polynomials in $(\al,\be)$, and the discriminant of a monic polynomial is a
polynomial in its coefficients; the product formula is the definition of the
discriminant. Since each $\xi_i$ is homogeneous of degree $1$ as an algebraic
function of $(\al,\be)$, the product $\prod_{i<j}(\xi_i-\xi_j)^2$ is homogeneous of
degree $2\binom n2=N$.
\end{proof}

We say the pencil is \emph{nondegenerate} if $D_{A,B}\not\equiv0$, equivalently if
$\al A+\be B$ has simple spectrum for some (hence generic) $(\al,\be)$. In that
case \eqref{eq:mun} defines a probability measure on $\bCP^1$. For the Gaussian and other absolutely continuous ensembles considered below,
nondegeneracy holds almost surely because $\{D\equiv0\}$ is a proper algebraic
subset of the parameter space.  In the general symmetry theorem we impose
nondegeneracy explicitly, since an invariant law may have atoms.

In the affine chart $[\al:\be]=[1:\la]$ we write $\Delta_n(\la):=D_{A,B}(1,\la)$,
so $\Delta_n$ is a polynomial of degree $\le N$ in $\la$, of degree exactly $N$
unless some crossings sit at $\la=\infty$. Working with the form rather than with
$\Delta_n$ removes all bookkeeping about crossings at infinity and about the total
mass.

\subsection{The chordal potential}
For $p,q\in\bCP^1$ with unit representatives $v,w\in\bC^2$ put
\[
        \delta(p,q):=|\det(v,w)|\in[0,1],
\]
independent of the choice of unit representatives. In the affine coordinate,
\begin{equation}\label{eq:chordal}
        \delta(\la,\mu)=\frac{|\la-\mu|}{\sqrt{(1+|\la|^2)(1+|\mu|^2)}} .
\end{equation}
This is one half of the Euclidean chordal distance under the stereographic
identification $\bCP^1\cong S^2$ with the unit sphere. For a probability measure
$\mu$ on $\bCP^1$ define
\[
        P^{\mu}(p):=\int_{\bCP^1}\log\delta(p,q)\,d\mu(q)\in[-\infty,0].
\]

\begin{lemma}\label{lem:pot_basic}
Let $\mu$ be a probability measure on $\bCP^1$. Then:
\begin{enumerate}
\item[(i)] $P^{\si}\equiv-\tfrac12$;
\item[(ii)] $\displaystyle\int_{\bCP^1}P^{\mu}\,d\si=-\tfrac12$;
\item[(iii)] in the affine chart, as distributions on $\bC$,
$\displaystyle \frac1{2\pi}\Delta_\la P^{\mu}=\mu-\si$; in particular $\mu=\si$ if
and only if $P^\mu$ is ($\si$-a.e.) constant;
\item[(iv)] for every $1\le r<\infty$ there is $C_r<\infty$, independent of $\mu$,
with $\int|P^{\mu}|^r\,d\si\le C_r$.
\end{enumerate}
\end{lemma}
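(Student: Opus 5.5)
The plan is to prove (i) by a direct computation, and then obtain (ii)–(iv) from (i), Fubini/symmetry, and the explicit Laplacian of $\log\delta(\cdot,q)$ in the affine chart.

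\emph{Step 1: part (i).} By $SU(2)$-invariance of $\delta$ and of $\si$, $P^\si$ is constant, so it suffices to evaluate it at $p=0$. There $\delta(0,\mu)=|\mu|/\sqrt{1+|\mu|^2}$, and so
\[
P^\si(0)=\int_{\bC}\tfrac12\log\frac{|\mu|^2}{1+|\mu|^2}\,\frac{dx\,dy}{\pi(1+|\mu|^2)^2}.
\]
Substituting $t=|\mu|^2$, and using $dx\,dy=\pi\,dt$ after angular integration, this becomes $\tfrac12\int_0^\infty\log\frac{t}{1+t}\,\frac{dt}{(1+t)^2}$. With $u=t/(1+t)\in(0,1)$ we have $du=dt/(1+t)^2$, so the integral equals $\tfrac12\int_0^1\log u\,du=-\tfrac12$.

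\emph{Step 2: part (ii).} Apply Fubini, which is legitimate because $\log\delta\le0$. We get $\int P^\mu\,d\si=\int\!\!\int\log\delta(p,q)\,d\si(p)\,d\mu(q)=\int P^\si(q)\,d\mu(q)=-\tfrac12$, using the symmetry $\delta(p,q)=\delta(q,p)$ and (i).

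\emph{Step 3: part (iii).} Write, for fixed $q=\mu_0$,
\[
\log\delta(\la,\mu_0)=\log|\la-\mu_0|-\tfrac12\log(1+|\la|^2)-\tfrac12\log(1+|\mu_0|^2).
\]
Since $\frac1{2\pi}\Delta\log|\la-\mu_0|=\delta_{\mu_0}$ and $\Delta\log(1+|\la|^2)=4/(1+|\la|^2)^2$, we obtain $\frac1{2\pi}\Delta_\la\log\delta(\la,\mu_0)=\delta_{\mu_0}-\si$. For $q=\infty$ the first term is absent and we get $-\si$ alone; this is consistent, since the mass at $\infty$ lies outside the chart. Integrating in $q$ against $\mu$ requires exchanging the distributional Laplacian with the $\mu$-integral. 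To justify this, pair with a test function $\phi$ and use Fubini. The needed integrability, $\int\!\!\int|\log\delta||\Delta\phi|\,dx\,d\mu<\infty$, holds because $\sup_q\int_K|\log\delta(\cdot,q)|\,dx<\infty$ on compact sets $K$. This gives $\frac1{2\pi}\Delta P^\mu=\mu|_{\bC}-\si$. The mass of $\mu$ at $\infty$ is invisible in the chart, so I would state the conclusion on $\bC$, or note that it also holds in the chart around $\infty$, which makes the identity global on $\bCP^1$.

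For the ``iff'': if $P^\mu$ is a.e. constant then $\mu=\si$ on $\bC$ and also on the other chart, so $\mu=\si$. Conversely, $\mu=\si$ gives constancy by (i). For a direct argument in the converse direction, observe that $\Delta P^\mu=0$ on both charts means $P^\mu$ is harmonic on the compact sphere; it is also upper semicontinuous and subharmonic, hence constant by Weyl's lemma and the maximum principle.

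\emph{Step 4: part (iv).} By Jensen/Hölder, $|P^\mu(p)|^r\le\int|\log\delta(p,q)|^r\,d\mu(q)$. Integrating in $p$ and using Fubini together with invariance gives
\[
\int|P^\mu|^r\,d\si\le\sup_q\int|\log\delta(p,q)|^r\,d\si(p)=\int|\log\delta(p,0)|^r\,d\si(p).
\]
By the substitution of Step 1 this last quantity equals $2^{-r}\int_0^1|\log u|^r\,du=2^{-r}\Gamma(r+1)$, which is finite and independent of $\mu$.

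The only delicate step is the distributional exchange in (iii), together with the bookkeeping of the point at infinity; the uniform local integrability bound used there is the same one that drives (iv).
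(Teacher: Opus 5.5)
Your proposal is correct and follows the paper's proof essentially step by step: the same explicit substitution $u=t/(1+t)$ for (i), Fubini with (i) for (ii), the kernel Laplacian $\frac1{2\pi}\Delta_\la\log\delta(\la,q)=\delta_q-\si$ for (iii), and Jensen plus rotation invariance for (iv). Your version of (iii) is slightly more careful than the paper's. You pair with a test function, use the uniform bound on $\int_K|\log\delta(\cdot,q)|$, and track a possible atom at $\infty$; the paper instead splits $P^{\mu}$ into three separate $\mu$-integrals, which can be $\infty-\infty$ for heavy-tailed $\mu$. You also obtain the explicit constant $C_r=2^{-r}\Gamma(r+1)$.
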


\begin{proof}
(i) By rotation invariance we may take $p=[1:0]$, i.e.\ $\la_p=0$. Using
\eqref{eq:chordal} and \eqref{eq:unif}, and substituting $s=r^2$ and then
$u=s/(1+s)$,
\[
        P^{\si}=\int_{\bC}\log\frac{|\mu|}{\sqrt{1+|\mu|^2}}\,
        \frac{dx\,dy}{\pi(1+|\mu|^2)^2}
        =\frac12\int_0^{\infty}\log\frac{s}{1+s}\,\frac{ds}{(1+s)^2}
        =\frac12\int_0^1\log u\,du=-\frac12 .
\]
(ii) By Fubini and (i), $\int P^{\mu}d\si=\iint\log\delta\,d\si\,d\mu
=\int P^{\si}d\mu=-\tfrac12$.

(iii) By \eqref{eq:chordal},
\[
        P^{\mu}(\la)=\int\log|\la-\mu|\,d\mu(\mu)
        -\tfrac12\log(1+|\la|^2)
        -\tfrac12\int\log(1+|\mu|^2)\,d\mu(\mu),
\]
and $\frac1{2\pi}\Delta_\la\int\log|\la-\mu|d\mu=\mu$, while
$\frac1{2\pi}\Delta_\la\bigl[\tfrac12\log(1+|\la|^2)\bigr]
=\pi^{-1}(1+|\la|^2)^{-2}$ is the density of $\si$. The last statement follows
since $\Delta P^\mu=0$ forces $\mu=\si$, and conversely by (i).

(iv) By Jensen's inequality applied to $|\cdot|^r$ and Fubini,
\[
 \int|P^{\mu}|^r\,d\si
 \le \iint|\log\delta(p,q)|^r\,d\si(p)d\mu(q)=C_r,
\]
where $C_r=\int|\log\delta(p,q)|^r\,d\si(p)$ is finite and independent of $q$ by
rotation invariance.
\end{proof}

Part (iii) is the form of the Poincar\'e--Lelong formula adapted to the sphere: it
says that a probability measure on $\bCP^1$ is uniform precisely when its chordal
potential is constant, and it exhibits the deviation from uniformity as a
Laplacian. Part (iv) gives, for free, uniform integrability of the potentials of an
arbitrary family of probability measures; we use it to convert weak convergence of
measures into $L^1$ convergence of potentials.

\begin{lemma}[Weak convergence and potentials]\label{lem:weakL1}
If $\mu_k\to\mu$ weakly on $\bCP^1$ then $P^{\mu_k}\to P^{\mu}$ in $L^1(\si)$.
Conversely, if $P^{\mu_k}\to \Pi$ in $L^1(\si)$ then $\mu_k\to\si+\frac1{2\pi}
\Delta\Pi$ weakly, and $\Pi=P^{\mu}$ for the limit $\mu$.
\end{lemma}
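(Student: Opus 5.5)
The plan is to prove the forward direction by writing $P^{\mu}(p)=\int\log\delta(p,q)\,d\mu(q)$ and truncating the kernel. For $M>0$ put $k_M(p,q):=\max(\log\delta(p,q),-M)$, a continuous function on the compact space $\bCP^1\times\bCP^1$, and $P^\mu_M(p):=\int k_M(p,q)\,d\mu(q)$. Since $k_M$ is uniformly continuous, $\mu_k\to\mu$ weakly gives $P^{\mu_k}_M\to P^\mu_M$ uniformly in $p$, hence in $L^1(\si)$. For the error term, Fubini gives
\[
\int|P^{\nu}-P^{\nu}_M|\,d\si\le\iint\bigl(|\log\delta(p,q)|-M\bigr)^+\,d\si(p)\,d\nu(q)=\eps(M),
\]
and by rotation invariance the inner integral is independent of $q$. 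It tends to $0$ as $M\to\infty$ because $\log\delta(\cdot,q)\in L^1(\si)$, which is essentially Lemma~\ref{lem:pot_basic}(iv) with $r=1$. This bound holds uniformly in $\nu$, so a standard $3\eps$ argument gives $P^{\mu_k}\to P^\mu$ in $L^1(\si)$. I expect no real difficulty in this direction.

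For the converse, suppose $P^{\mu_k}\to\Pi$ in $L^1(\si)$. By compactness of $\bCP^1$ (Prokhorov), every subsequence of $(\mu_k)$ has a further subsequence converging weakly to some probability measure $\nu$. The forward direction then gives $P^{\mu_{k'}}\to P^{\nu}$ in $L^1(\si)$, so $P^\nu=\Pi$ $\si$-a.e. By Lemma~\ref{lem:pot_basic}(iii), $\nu=\si+\frac1{2\pi}\Delta\Pi$ as distributions on the affine chart $\bC$. A probability measure on $\bCP^1$ is determined by its restriction to $\bC$ together with its total mass, since the mass at $\infty$ equals $1-\nu(\bC)$. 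Hence $\nu$ is determined by $\Pi$ alone and is the same for every subsequence.

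It follows that the whole sequence converges weakly to $\mu:=\nu$, which equals $\si+\frac1{2\pi}\Delta\Pi$, and that $\Pi=P^\mu$ $\si$-a.e. The only subtle point is the meaning of $\si+\frac1{2\pi}\Delta\Pi$ on all of $\bCP^1$, including the point $\infty$. I handle this either through the mass-completion remark above or by interpreting $\Delta$ as the spherical Laplacian, in which case Lemma~\ref{lem:pot_basic}(iii) applies in both affine charts. This identification of the limit is where I expect the main care to be needed; the rest is routine.
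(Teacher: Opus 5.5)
Your proof is correct and follows the paper's route. The forward direction is the same truncation of $\log\delta$, with a truncation error that is independent of the measure by Fubini and rotation invariance (the paper truncates at $\log\eps$ and uses pointwise bounded convergence instead of uniform convergence, which makes no difference); for the converse the paper just invokes Lemma~\ref{lem:pot_basic}(iii) through distributional convergence of the Laplacians, and your compactness-plus-uniqueness argument, including the treatment of the mass at $\infty$, is a correct and more explicit version of that step.
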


\begin{proof}
For $0<\eps<1$ put $\log_\eps t:=\max(\log t,\log\eps)$ and
$P^{\mu}_\eps(p):=\int\log_\eps\delta(p,q)\,d\mu(q)$. The kernel $\log_\eps\delta$
is bounded and continuous, so $P^{\mu_k}_\eps\to P^{\mu}_\eps$ pointwise and
boundedly, hence in $L^1(\si)$. Moreover $0\le P^{\mu}_\eps-P^{\mu}$ for every
$\mu$, and by Fubini and rotation invariance
\[
        \int\bigl(P^{\mu}_\eps-P^{\mu}\bigr)d\si
        =\iint\bigl(\log_\eps\delta-\log\delta\bigr)d\si(p)\,d\mu(q)=c_\eps,
\]
where $c_\eps:=\int(\log_\eps\delta(p,q)-\log\delta(p,q))\,d\si(p)$ does not depend
on $q$ and satisfies $c_\eps\downarrow0$ as $\eps\downarrow0$ by monotone
convergence, since $P^{\si}=-\tfrac12$ is finite. Hence
\[
        \|P^{\mu_k}-P^{\mu}\|_{L^1(\si)}
        \le c_\eps+\|P^{\mu_k}_\eps-P^{\mu}_\eps\|_{L^1(\si)}+c_\eps
        \xrightarrow[k\to\infty]{}2c_\eps ,
\]
and letting $\eps\downarrow0$ proves the first statement. The converse is immediate
from Lemma~\ref{lem:pot_basic}(iii), since $L^1$ convergence implies distributional
convergence of Laplacians.
\end{proof}

\subsection{The key identity}
We now relate $P^{\mu_n}$ to the spectrum of the normalized pencil. Assume the
entries of $A_n,B_n$ have variance $1/n$, so that for a unit vector
$v=(\al,\be)\in\bC^2$ the matrix $\al A_n+\be B_n$ is normalized in the usual way;
in the affine chart $v=(1,\la)/\sqrt{1+|\la|^2}$ this is exactly
$\widehat C_n(\la)=(A_n+\la B_n)/\sqrt{1+|\la|^2}$.

\begin{proposition}[Key identity]\label{prop:key}
Let $(A,B)$ be a nondegenerate pencil, let $\mu_n=\mu_{A,B}$ and let
\[
        \Lambda_n(p):=\frac1N\sum_{i\ne j}\log\bigl|\zeta_i(v)-\zeta_j(v)\bigr|,
        \qquad v\ \text{a unit representative of }p,
\]
where $\zeta_1(v),\dots,\zeta_n(v)$ are the eigenvalues of $\al A+\be B$. Then
$\Lambda_n$ is well defined on $\bCP^1$ and, for every realization,
\begin{equation}\label{eq:key}
        P^{\mu_n}(p)=\Lambda_n(p)-\int_{\bCP^1}\Lambda_n\,d\si-\frac12 .
\end{equation}
Equivalently, in the affine chart,
\begin{equation}\label{eq:keyaffine}
        \frac1N\log|\Delta_n(\la)|
        =\frac12\log(1+|\la|^2)+\Lambda_n(\la),
        \qquad
        \Lambda_n(\la)=\frac1N\sum_{i\ne j}\log|\zeta_i(\la)-\zeta_j(\la)| ,
\end{equation}
with $\zeta_i(\la)$ the eigenvalues of $\widehat C_n(\la)$.
\end{proposition}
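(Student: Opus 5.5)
Since $D_{A,B}$ is a nonzero binary form of degree $N$, we factor it as a product of linear forms and read off its modulus on unit vectors. Over $\bC$ every binary form factors, so $D_{A,B}(v)=c\prod_{k=1}^N\det(v,w_k)$ with $w_k$ unit representatives of the zeros $q_k$ (repeated with multiplicity) and $c\ne0$ a constant. For a unit vector $v$ representing $p$ this gives $|D_{A,B}(v)|=|c|\prod_k\delta(p,q_k)$, and hence
\[
\frac1N\log|D_{A,B}(v)|=\frac1N\log|c|+P^{\mu_n}(p).
\]

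Next we compute the same quantity from the eigenvalues. By Lemma~\ref{lem:form}, $|D_{A,B}(v)|=\prod_{i<j}|\zeta_i(v)-\zeta_j(v)|^2=\prod_{i\ne j}|\zeta_i(v)-\zeta_j(v)|$. Therefore $\frac1N\log|D_{A,B}(v)|=\Lambda_n(p)$. This holds as an identity in $[-\infty,\infty)$, and it also shows that $\Lambda_n$ is well defined. Indeed, replacing $v$ by $e^{i\theta}v$ multiplies every eigenvalue by $e^{i\theta}$ and leaves the differences' moduli unchanged; the eigenvalues enter only through a symmetric function, so their ordering is irrelevant.

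Combining the two computations gives $\Lambda_n=P^{\mu_n}+\frac1N\log|c|$ on $\bCP^1$. To identify the constant we integrate against $\si$. By Lemma~\ref{lem:pot_basic}(ii), $\int P^{\mu_n}d\si=-\frac12$, so $\frac1N\log|c|=\int\Lambda_n\,d\si+\frac12$. The integral is finite because $P^{\mu_n}\in L^1(\si)$ by Lemma~\ref{lem:pot_basic}(iv), and $\Lambda_n$ differs from it by a constant. Substituting back yields \eqref{eq:key}.

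For the affine form we take $v=(1,\la)/\sqrt{1+|\la|^2}$. Homogeneity of degree $N$ gives $\Delta_n(\la)=D_{A,B}(1,\la)=(1+|\la|^2)^{N/2}D_{A,B}(v)$. Moreover the eigenvalues of $\al A+\be B$ at this $v$ are exactly the eigenvalues of $\widehat C_n(\la)$. Taking $\frac1N\log|\cdot|$ of the first relation and applying the eigenvalue computation of the second paragraph gives \eqref{eq:keyaffine}.

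No step here is a real obstacle. The only care needed is bookkeeping of the phase and normalization conventions: the choice of unit representatives must not matter, which is settled by modulus invariance, and crossings at $\la=\infty$ must be accounted for, which is why we work with the form $D_{A,B}$ rather than with $\Delta_n$.
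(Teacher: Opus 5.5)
Your proof is correct and is essentially the paper's own argument. You factor $D_{A,B}$ into linear forms to get $P^{\mu_n}$ plus the constant $\frac1N\log|c|$, identify $\frac1N\log|D_{A,B}(v)|$ with $\Lambda_n$ via Lemma~\ref{lem:form}, fix the constant by $\si$-averaging with Lemma~\ref{lem:pot_basic}(ii), and obtain the affine form from degree-$N$ homogeneity; the only addition is your explicit check, via Lemma~\ref{lem:pot_basic}(iv), that $\int\Lambda_n\,d\si$ is finite, which the paper leaves implicit.
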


\begin{proof}
Replacing $v$ by $e^{i\theta}v$ multiplies all $\zeta_i$ by $e^{i\theta}$ and does
not change $\Lambda_n$, so $\Lambda_n$ is well defined on $\bCP^1$. By
Lemma~\ref{lem:form}, $|D_{A,B}(v)|=\prod_{i<j}|\zeta_i(v)-\zeta_j(v)|^2$, whence
\begin{equation}\label{eq:DLambda}
        \frac1N\log|D_{A,B}(v)|=\Lambda_n(p).
\end{equation}
On the other hand, factoring the binary form over $\bC$ and normalizing the roots,
$D_{A,B}(v)=c\prod_{k=1}^N\det(v,v_k)$ with unit representatives $v_k$ of the
crossings $q_k$ and some constant $c\ne0$, so that
\begin{equation}\label{eq:DP}
        \frac1N\log|D_{A,B}(v)|=\frac1N\log|c|+\frac1N\sum_k\log\delta(p,q_k)
        =\frac1N\log|c|+P^{\mu_n}(p).
\end{equation}
Comparing \eqref{eq:DLambda} and \eqref{eq:DP} gives
$P^{\mu_n}=\Lambda_n-\frac1N\log|c|$; integrating against $\si$ and using
Lemma~\ref{lem:pot_basic}(ii) identifies the constant, which yields \eqref{eq:key}.
For \eqref{eq:keyaffine}, take $v=(1,\la)/\sqrt{1+|\la|^2}$ and use homogeneity:
$D_{A,B}(v)=(1+|\la|^2)^{-N/2}\Delta_n(\la)$.
\end{proof}

Identity \eqref{eq:key} is the exact form of the heuristic in \S\ref{ss:mechanism}.
It has two immediate consequences worth recording. First, the additive constant in
\eqref{eq:key} is not a nuisance: it is determined by $\si$-averaging, so no
normalization is left undetermined. Second, since the left-hand side of
\eqref{eq:key} determines $\mu_n$ by Lemma~\ref{lem:pot_basic}(iii), \emph{any}
asymptotic information about the scalar function $\Lambda_n$ translates directly
into information about level crossings.

\subsection{Conservation of mass}
The next lemma is the reason we may ignore the real projective line.

\begin{lemma}[No hidden mass]\label{lem:mass}
Let $\nu_k$ be probability measures on $\bCP^1$, let $\Sigma\subset\bCP^1$ be
closed with $\si(\Sigma)=0$, and let $\Psi$ be a probability measure with
$\Psi(\Sigma)=0$. If $\int f\,d\nu_k\to\int f\,d\Psi$ for every continuous $f$
supported in $\bCP^1\setminus\Sigma$, then $\nu_k\to\Psi$ weakly on $\bCP^1$.
\end{lemma}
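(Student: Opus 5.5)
The plan is to use compactness of the space of probability measures on $\bCP^1$ together with a uniqueness argument for subsequential limits. Since $\bCP^1$ is compact, Prokhorov's theorem makes the sequence $(\nu_k)$ relatively compact in the weak topology. It therefore suffices to show that every weakly convergent subsequence $\nu_{k_j}\to\nu$ has limit $\nu=\Psi$; a standard subsequence argument then gives convergence of the whole sequence.

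Fix such a subsequential limit $\nu$, a probability measure on $\bCP^1$. Let $f$ be continuous with compact support in the open set $U:=\bCP^1\setminus\Sigma$. Such an $f$ is continuous on all of $\bCP^1$, so weak convergence gives $\int f\,d\nu_{k_j}\to\int f\,d\nu$. The hypothesis gives $\int f\,d\nu_{k_j}\to\int f\,d\Psi$, so $\int f\,d\nu=\int f\,d\Psi$ for every $f\in C_c(U)$. By the Riesz representation theorem on the locally compact (metrizable, $\sigma$-compact) open set $U$, the restrictions agree: $\nu|_U=\Psi|_U$. In particular,
\[
\nu(U)=\Psi(U)=1-\Psi(\Sigma)=1 .
\]

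Now conservation of mass finishes the argument. Since $\nu$ is a probability measure, $\nu(\Sigma)=1-\nu(U)=0$. Hence $\nu=\nu|_U=\Psi|_U=\Psi$, which identifies every subsequential limit as $\Psi$.

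The only delicate point is interpreting the phrase "continuous $f$ supported in $\bCP^1\setminus\Sigma$". If it means $\supp f\subset U$ with $\supp f$ closed in $\bCP^1$, then the support is automatically compact, and such functions are dense enough to determine Radon measures on $U$ (by inner regularity, via Urysohn approximation of compact subsets of $U$). So the Riesz step is routine. The essential input is the hypothesis $\Psi(\Sigma)=0$, which gives $\Psi$ total mass one on $U$, together with the fact that weak limits on a compact space stay probability measures, so no mass can escape to or hide on $\Sigma$. The assumption $\si(\Sigma)=0$ is not actually needed for this argument; it presumably matters only in the later application.
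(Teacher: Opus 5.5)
Your proof is correct and is essentially the paper's argument: compactness gives subsequential limits $\nu$, and test functions supported in $U=\bCP^1\setminus\Sigma$ force $\nu|_U=\Psi|_U$. Then $\Psi(U)=1$ together with total mass one excludes any mass on $\Sigma$. Your side remark is also right: the hypothesis $\sigma(\Sigma)=0$ is not used, and the paper's proof does not use it either.
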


\begin{proof}
By compactness of $\bCP^1$ every subsequence has a weakly convergent
sub-subsequence, say $\nu_{k_j}\to\nu$. For $f$ continuous with support in the open
set $U=\bCP^1\setminus\Sigma$ we get $\int f d\nu=\int f d\Psi$, hence
$\nu|_U=\Psi|_U$. Since $\Psi(U)=1$ and $\nu$ is a probability measure,
$\nu(U)=1$ forces $\nu(\Sigma)=0$ and $\nu=\Psi$. All subsequential limits agree,
so $\nu_k\to\Psi$.
\end{proof}

In particular, if the limiting density is known to be uniform off $\bRP^1$, then it
is uniform, and no separate hypothesis excluding an $O(n^2)$ cloud of almost-real
crossings is required. This removes the hypothesis
$\lim_{\eps\downarrow0}\limsup_n\bE N_n(\eps)/n(n-1)=0$ from all statements below.

\subsection{The master formula}
It remains to record how a limiting energy profile determines the limiting measure.
Recall $Y=Y(\la)$ from \eqref{eq:tau}; $Y$ is the height coordinate on
$S^2\cong\bCP^1$ along the axis through $\la=\pm i$, its distribution under $\si$
is uniform on $[-1,1]$, and the spherical Laplacian of a function of $Y$ alone is
$\Delta_{S^2}G=\partial_Y\bigl[(1-Y^2)\partial_YG\bigr]$.

\begin{proposition}[Master formula]\label{prop:master}
Suppose $\Lambda_n\to G$ in $L^1(\si)$ in probability, where $G$ is a
deterministic function of $Y$ alone. Then $\mu_n\to\Psi$ weakly in probability on
$\bCP^1$, where $\Psi$ is axisymmetric.  If $m:=Y_*\Psi$ denotes its marginal on
$[-1,1]$, then, in the sense of distributions on $(-1,1)$,
\begin{equation}\label{eq:masterdist}
        dm(Y)=\frac12\,dY
        +d\!\left[(1-Y^2)G'(Y)\right].
\end{equation}
Thus a jump of $(1-Y^2)G'(Y)$ contributes an atom of the same mass to the
$Y$-marginal, uniformly distributed along the corresponding latitude.

If $G\in C^2(-1,1)$, the absolutely continuous part of $\Psi$ on
$\{-1<Y<1\}$ has density
\begin{equation}\label{eq:master}
        \varphi(Y)=1+2\,\frac{d}{dY}\Bigl[(1-Y^2)G'(Y)\Bigr]
\end{equation}
with respect to $\si$.

If, in addition, $G$ is even and locally $C^1$ on $(0,1)$, then at every point
$0<y<1$ where the one-sided derivative is defined,
\begin{equation}\label{eq:cdf}
        \Psi\{|Y|\le y\}=y+2(1-y^2)G'(y).
\end{equation}
The formula includes any singular mass at $Y=0$. In particular, for a globally
smooth profile, $G$ is constant if and only if $\Psi=\si$.
\end{proposition}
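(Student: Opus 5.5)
The argument has three steps: convert convergence of $\Lambda_n$ into convergence of the potentials $P^{\mu_n}$, identify the limit measure through Lemma~\ref{lem:pot_basic}(iii), and then compute the resulting distribution in the coordinate $Y$.

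\textbf{Step 1 (convergence).} Since $\Lambda_n\to G$ in $L^1(\si)$ in probability, integrating against $\si$ gives $\int\Lambda_n\,d\si\to\int G\,d\si$ in probability. By Proposition~\ref{prop:key},
\[
P^{\mu_n}\to\Pi:=G-\int G\,d\si-\tfrac12
\]
in $L^1(\si)$ in probability. To upgrade this to weak convergence in probability, take any subsequence and extract a further subsequence along which the convergence is almost sure in $L^1(\si)$. Along it, the converse part of Lemma~\ref{lem:weakL1} gives $\mu_n\to\Psi:=\si+\frac1{2\pi}\Delta\Pi$ weakly almost surely, with $\Pi=P^\Psi$. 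The limit $\Psi$ is deterministic and does not depend on the subsequence, so $\mu_n\to\Psi$ weakly in probability. Axisymmetry follows because $\si$ is rotation invariant and $\Pi$ depends only on $Y$, so its Laplacian commutes with rotations about the $Y$-axis.

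\textbf{Step 2 (the marginal).} We need to pass from the affine Laplacian to the spherical one. The factor $\pi^{-1}(1+|\la|^2)^{-2}$ in $d\si$ is exactly the conformal factor, so $\frac1{2\pi}\Delta_\la\Pi\,dx\,dy=\frac12\Delta_{S^2}\Pi\,d\si$. The normalization is consistent: $\Delta_{S^2}$ on the unit sphere has area $4\pi$, and $d\si$ is area$/4\pi$. For $\Pi=G(Y)$ we use $\Delta_{S^2}G=\partial_Y[(1-Y^2)G']$. Since $Y_*\si=\frac12dY$, pushing forward to $Y$ gives
\[
dm=\tfrac12\,dY+\tfrac12\cdot\tfrac12\cdot 2\,d[(1-Y^2)G'].
\]
This needs care. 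Pair the distribution with a test function $f(Y)$ and integrate by parts on the sphere, using $\int h(Y)\,d\si=\frac12\int_{-1}^1 h\,dY$. This yields $\int f\,dm=\frac12\int f\,dY+\frac14\int f\,\partial_Y[(1-Y^2)G']\,dY$.

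This clashes with \eqref{eq:masterdist} by a factor of $4$ unless the identification of $\Delta_\la$ differs. The reason is that $Y$ is the height on the \emph{unit} sphere while $\delta$ is half the chordal distance, so the curvature normalization must be rechecked. I will therefore verify the constant directly on a test case rather than trust the scaling. Take $\mu=\delta_{[1:i]}$. Its potential is $\log\delta=\frac12\log\frac{1-Y}{2}$, with $Y=1$ at $\la=i$. Apply the formula with $G=\frac12\log(1-Y)$, so $(1-Y^2)G'=-\frac12(1+Y)$. The formula must then produce the jump compensating $\frac12dY$ and concentrate all the mass at $Y=1$. Matching this gives the coefficient $1$ in front of $d[(1-Y^2)G']$, which fixes the constant. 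I expect this bookkeeping of constants to be the main obstacle, and I will settle it by this calibration.

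\textbf{Step 3 (consequences).} The density formula \eqref{eq:master} follows from Step 2 by differentiating with respect to $\frac12dY$ when $G\in C^2$. For \eqref{eq:cdf}, use evenness: $\Psi\{|Y|\le y\}=m[-y,y]$. Integrating \eqref{eq:masterdist} from $-y$ to $y$ gives $y+[(1-Y^2)G']_{-y}^{y}=y+2(1-y^2)G'(y)$, because $G'$ is odd. Any atom at $0$ lies inside $[-y,y]$ and is automatically included. Finally, for smooth $G$, \eqref{eq:master} shows $\Psi=\si$ exactly when $(1-Y^2)G'$ is constant. Boundedness of $G$ near $Y=\pm1$, which follows from integrability via Lemma~\ref{lem:pot_basic}(iv), forces that constant to be zero, so $G$ is constant.
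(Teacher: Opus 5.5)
Your route is the paper's. The key identity gives $P^{\mu_n}\to\Pi=G-\int G\,d\si-\frac12$ in $L^1(\si)$ in probability. Lemma~\ref{lem:weakL1}, with your subsequence argument (which is fine), identifies $\Psi=\si+\frac1{2\pi}\Delta_\la\Pi$. One then passes to $\Delta_{S^2}$ and pushes forward by $Y$.

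\textbf{The conversion in Step 2 is wrong.} The identity $\frac1{2\pi}\Delta_\la\Pi\,dx\,dy=\frac12\Delta_{S^2}\Pi\,d\si$ is false. Since $\Delta_\la=4(1+|\la|^2)^{-2}\Delta_{S^2}$ and $dx\,dy=\pi(1+|\la|^2)^2\,d\si$ (equivalently, $\Delta_\la\Pi\,dx\,dy=\Delta_{S^2}\Pi\,dA$ with $dA=4\pi\,d\si$), the correct identity is
\[
\frac1{2\pi}\Delta_\la\Pi\,dx\,dy=2\Delta_{S^2}\Pi\,d\si .
\]
With $\int h(Y)\,d\si=\frac12\int_{-1}^1h\,dY$, the coefficient of $d[(1-Y^2)G']$ is then $2\cdot\frac12=1$, exactly as in \eqref{eq:masterdist}.

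The ``factor $4$'' clash comes only from this slip; your next line even turns $\frac12$ into $\frac14$. Your diagnosis of it is wrong on both counts:
\begin{itemize}
\item $\delta$ being half the chordal distance shifts every potential by the constant $-\log2$, which the Laplacian annihilates.
\item $Y$ is already the height on the unit sphere, on which $\Delta_{S^2}=\partial_Y(1-Y^2)\partial_Y$ (on functions of $Y$) is taken.
\end{itemize}
The calibration with $\delta_{[1:i]}$ is itself legitimate. The conversion constant is universal, Lemma~\ref{lem:pot_basic}(iii) holds for a point mass, and $(1-Y^2)G'=-\frac12(1+Y)$ forces the coefficient $1$. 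So the formula you end with is right. But a derivation that produces $\frac14$ and is then overridden by a test case, without locating the error, does not establish the constant. Replace it by the one-line correct computation and keep the calibration only as a consistency check.

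\textbf{The final sentence is also not justified as written.} Integrability (Lemma~\ref{lem:pot_basic}(iv)) does not give boundedness near $Y=\pm1$. For example, $G=\frac c2\log\frac{1+Y}{1-Y}$ lies in every $L^r(\si)$ and satisfies $(1-Y^2)G'\equiv c$. It equals $c\bigl(P^{\delta_{[1:-i]}}-P^{\delta_{[1:i]}}\bigr)$, whose Laplacian is the pair of polar atoms $c\bigl(\delta_{[1:-i]}-\delta_{[1:i]}\bigr)$. These atoms are invisible to \eqref{eq:master} on $\{-1<Y<1\}$. There are two clean fixes:
\begin{itemize}
\item invoke the global smoothness assumed in the statement; or
\item more simply, use $\Pi=P^\Psi$, which you already have from Step 1. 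By Lemma~\ref{lem:pot_basic}(i),(iii), $\Psi=\si$ iff $P^\Psi\equiv-\frac12$ iff $G$ is ($\si$-a.e.) constant, with no regularity needed.
\end{itemize}
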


\begin{proof}
By \eqref{eq:key}, $P^{\mu_n}\to G-\int G\,d\si-\tfrac12$ in $L^1(\si)$ in
probability, and Lemma~\ref{lem:weakL1} gives
$\mu_n\to\si+\frac1{2\pi}\Delta_\la G$. Passing from the affine Laplacian to the
spherical one,
$\Delta_\la=4(1+|\la|^2)^{-2}\Delta_{S^2}$ and
$dx\,dy=(1+|\la|^2)^2dA/4$, hence
\[
\frac1{2\pi}\Delta_\la G\,dx\,dy
   =\frac1{2\pi}\Delta_{S^2}G\,dA
   =2\Delta_{S^2}G\,d\si .
\]
For a function of $Y$,
$\Delta_{S^2}G=\partial_Y[(1-Y^2)G']$ distributionally.  Since the
$Y$-marginal of $\si$ is $\tfrac12dY$, pushing the preceding identity forward by
$Y$ gives \eqref{eq:masterdist}.  Under $C^2$ regularity this is equivalent to
\eqref{eq:master}.  If $G$ is even, then $(1-Y^2)G'(Y)$ is odd away from its
possible jumps; integrating \eqref{eq:masterdist} over $[-y,y]$ gives
\eqref{eq:cdf}.
\end{proof}

\begin{remark}[Finite-$n$ mean law]\label{rem:finite_master}
Assume $\Lambda_n$ is integrable. Taking expectations in the key identity gives
the exact finite-$n$ relation
\[
 \bE\mu_n=\si+\frac1{2\pi}\Delta_\la\,\bE\Lambda_n
\]
in the sense of distributions.  In an axisymmetric ensemble for which
$g_n(Y):=\bE\Lambda_n$ is even, this becomes the same one-dimensional master
formula as above.  In particular, at points $0<y<1$ where $g_n$ is differentiable,
\[
\bE\mu_n\{|Y|\le y\}=y+2(1-y^2)g_n'(y),
\]
with any atom at $Y=0$ automatically included.  This is the identity used to
reconstruct the finite-$n$ curves in \S\ref{sec:numerics}; it is not an
asymptotic approximation.
\end{remark}

\begin{example}\label{ex:GUE2}
For $A,B$ independent $GUE_2$ matrices, \cite[Thm.~3]{ShZa2} gives the exact
density $\frac4\pi|y|(1+|\la|^2)^{-3}dx\,dy$, i.e.\ $\varphi(Y)=2|Y|$ with respect
to $\si$, and $\Psi\{|Y|\le y\}=y^2$. Inverting \eqref{eq:cdf} yields the exact
energy profile of the $2\times2$ complex elliptic Ginibre ensemble:
\begin{align}\label{eq:g2}
 G_2(Y)&=G_2(0)-\tfrac12\bigl[\,|Y|-\log(1+|Y|)\,\bigr],\\
 G_2(1)-G_2(0)&=-\tfrac12(1-\log2)=-0.153426\ldots .
\end{align}
This prediction is confirmed by simulation to four digits (\S\ref{sec:numerics}),
which is a stringent test of the whole framework.
\end{example}

\section{Exact uniform law from symmetry}\label{sec:symmetry}

Consider the $SU(2)$-action on pairs of matrices used in \cite{ShZa2}: for
$\mathfrak U=\begin{pmatrix} u&-\bar v\\ v&\bar u\end{pmatrix}\in SU(2)$,
\begin{equation}\label{eq:action}
        \mathfrak U:(A,B)\longmapsto (uA+vB,\;-\bar vA+\bar uB).
\end{equation}

\begin{lemma}[Equivariance]\label{lem:equivariance}
Write $g=\begin{pmatrix} u&v\\ -\bar v&\bar u\end{pmatrix}\in SU(2)$ and let
$(A',B')=\mathfrak U(A,B)$. Then, for row vectors $w=(\al,\be)$,
\[
        D_{A',B'}(w)=D_{A,B}(wg).
\]
Consequently $\mu_{A',B'}=(R_g)_*\mu_{A,B}$, where $R_g[w]=[wg^{-1}]$ is the
rotation of $\bCP^1\cong S^2$ induced by $g$.
\end{lemma}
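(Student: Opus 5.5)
The plan is to compute the pencil $\al A'+\be B'$ directly and identify it with $\al'A+\be'B$ for a linear change of the row vector $w$. Then I invoke the definition \eqref{eq:Dform}, and finally translate the identity of forms into an identity of zero divisors.

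\emph{Step 1: the pencil identity.} With $(A',B')=(uA+vB,\,-\bar vA+\bar uB)$, expand
\[
\al A'+\be B'=(\al u-\be\bar v)A+(\al v+\be\bar u)B .
\]
The row vector $(\al u-\be\bar v,\ \al v+\be\bar u)$ is exactly $wg$ for $g=\begin{pmatrix} u&v\\ -\bar v&\bar u\end{pmatrix}$, since $(\al,\be)g=(\al u-\be\bar v,\ \al v+\be\bar u)$. Hence $\al A'+\be B'=(wg)_1A+(wg)_2B$ as matrices, so their characteristic polynomials in $t$ coincide. Taking $\Disc_t$ gives $D_{A',B'}(w)=D_{A,B}(wg)$ identically in $w$.

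\emph{Step 2: zeros.} In particular, $D_{A',B'}\not\equiv0$ if and only if $D_{A,B}\not\equiv0$, because $g$ is invertible. Both forms have degree $N$. A point $[w]$ is a zero of $D_{A',B'}$ of multiplicity $m$ exactly when $[wg]$ is a zero of $D_{A,B}$ of multiplicity $m$. Multiplicities are preserved because precomposition with an invertible linear map carries the linear factor $\det(w g,v_k)$ to a nonzero multiple of $\det(w,v_kg^{-1})$. Indeed $\det(wg,v_k)=\det\bigl((w,v_kg^{-1})g\bigr)=\det g\cdot\det(w,v_kg^{-1})$, where rows are stacked, and $\det g=|u|^2+|v|^2=1$. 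So the crossings of $(A',B')$ are $\{[v_kg^{-1}]\}=\{R_g(q_k)\}$ with multiplicity, which gives $\mu_{A',B'}=(R_g)_*\mu_{A,B}$.

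\emph{Step 3: $R_g$ is a rotation.} Since $g\in SU(2)$, right multiplication by $g^{-1}$ preserves the Hermitian norm of row vectors and preserves $\det$. It therefore preserves $\delta$, so $R_g$ is an isometry for the chordal metric, i.e.\ a rotation of $S^2$. The map $g\mapsto R_g$ is the standard $SU(2)\to SO(3)$ covering, with $R_{gh}=R_gR_h$.

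The only step needing care is the bookkeeping in Step 2: checking the convention $[wg^{-1}]$ versus $[wg]$, and keeping row versus column vectors consistent. There is no analytic obstacle.
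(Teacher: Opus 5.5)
Your proposal is correct and follows the paper's own proof. You expand $\al A'+\be B'$, recognize the coefficient vector as $wg$, pass to $\Disc_t$, and transport the crossing divisor by $[w]\mapsto[wg^{-1}]$; your factor computation $\det(wg,v_k)=\det g\cdot\det(w,v_kg^{-1})$ just makes explicit the ``matching multiplicities'' that the paper asserts.

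One small fix is needed in Step 3. Being a chordal isometry does not by itself make $R_g$ a rotation, since it could reverse orientation. Add that $R_g$ is a M\"obius transformation of $\bCP^1$ and hence orientation-preserving; alternatively, argue from the connectedness of $SU(2)$.
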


\begin{proof}
Directly,
\[
        \al A'+\be B'=\al(uA+vB)+\be(-\bar vA+\bar uB)
        =(\al u-\be\bar v)A+(\al v+\be\bar u)B,
\]
and $(\al u-\be\bar v,\ \al v+\be\bar u)=wg$. Hence the pencils
$\{\al A'+\be B'\}$ and $\{\al A+\be B\}$ coincide after the substitution
$w\mapsto wg$, and the same is true of their discriminants. Therefore
$D_{A',B'}(w)=0$ iff $D_{A,B}(wg)=0$, with matching multiplicities, so the crossing
divisor of $(A',B')$ is the image of that of $(A,B)$ under $[w]\mapsto[wg^{-1}]$.
The action of $SU(2)$ on $\bCP^1$ by such substitutions is exactly the action of
$SO(3)=PSU(2)$ by rotations in the identification $\bCP^1\cong S^2$.
\end{proof}

\begin{proof}[Proof of Theorem~\ref{th:symmetry}]
By hypothesis $(A',B')=\mathfrak U(A,B)$ has the same law as $(A,B)$; by
Lemma~\ref{lem:equivariance} the random measure $(R_g)_*\mu_{A,B}$ has the same law
as $\mu_{A,B}$, for every $g\in SU(2)$. Hence the level-crossing point process is
invariant in law under all rotations of $S^2$. In particular its intensity measure
$\bE\mu_{A,B}$ is a rotation-invariant probability measure on $S^2$, and the only
such measure is $\si$. The statement about correlation measures is the same
argument applied to the factorial moment measures.
\end{proof}

Theorem~\ref{th:symmetry} is a purely structural statement: it uses no moment
assumption, no independence of entries, and no Gaussian input. We now list what it
covers.

\begin{corollary}[Non-Gaussian exact uniform laws]\label{cor:examples}
Assume in each case below that the pencil is nondegenerate almost surely. Then
$\bE\mu_{A,B}=\si$ for every $n\ge2$.
\begin{enumerate}
\item[(i)] The entry pairs $\bigl((a_{ij},b_{ij})\bigr)_{i,j}$ are i.i.d.\
$\bC^2$-valued random vectors whose common law is invariant under the standard
action of $SU(2)$ on $\bC^2$. Equivalently, in polar form
$(a_{ij},b_{ij})=R_{ij}\,\Theta_{ij}$ with $\Theta_{ij}$ uniform on the unit sphere
$S^3\subset\bC^2$ and $R_{ij}\ge0$ arbitrary, independent of $\Theta_{ij}$. No
moments are required; $R_{ij}$ may be heavy tailed.
\item[(ii)] More generally, the entry pairs need only be jointly
$SU(2)$-invariant, the same $\mathfrak U$ acting simultaneously in all coordinates;
independence across $(i,j)$ is irrelevant.
\item[(iii)] Any linear pattern imposed identically on $A$ and $B$: complex
symmetric, Toeplitz, Hankel, banded, banded Toeplitz, circulant or diagonal
matrices, or matrices supported on any fixed set of entries, provided the surviving
entry pairs satisfy (ii). This recovers and extends
\cite[Prop.~2]{ShZa2}.
\item[(iv)] $A,B$ i.i.d.\ complex Ginibre with an arbitrary common variance on the
diagonal, recovering Theorem~\ref{th:GE}.
\end{enumerate}
\end{corollary}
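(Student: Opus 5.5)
The plan is to reduce every case to Theorem~\ref{th:symmetry}. For each item one checks that the joint law of $(A,B)$ is invariant under the action \eqref{eq:action}. Nondegeneracy is assumed, so the conclusion $\bE\mu_{A,B}=\si$ then follows at once.

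The elementary observation is that $\mathfrak U$ acts entrywise. The $(i,j)$ entries of $(A',B')=\mathfrak U(A,B)$ are $(ua_{ij}+vb_{ij},\,-\bar v a_{ij}+\bar u b_{ij})$. This is the image of the vector $(a_{ij},b_{ij})\in\bC^2$ under the linear map $M=\begin{pmatrix}u&v\\-\bar v&\bar u\end{pmatrix}$, which is again in $SU(2)$. Hence $\mathfrak U$ applies the same unitary $M$ simultaneously to every entry pair.

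\textbf{Case (ii).} Suppose the whole family of entry pairs is jointly invariant under simultaneous application of any $M\in SU(2)$. Then $\mathfrak U(A,B)$ has the same law as $(A,B)$. This is exactly the hypothesis of Theorem~\ref{th:symmetry}.

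\textbf{Case (i).} If the pairs are i.i.d.\ and each marginal law is $SU(2)$-invariant, then the product law is invariant under simultaneous application of $M$, so (i) is a special case of (ii). For the stated polar equivalence, I will use that $SU(2)$ acts transitively on $S^3$. An invariant law on $\bC^2$ must therefore be a mixture of uniform measures on spheres $|z|=r$, one for each radius $r$. This is exactly the representation $R\,\Theta$ with $\Theta$ uniform on $S^3$ and independent of $R$. Conversely, any such $R\,\Theta$ is invariant. No integrability enters anywhere.

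\textbf{Case (iii).} A linear pattern imposed identically on $A$ and $B$ is a linear subspace $L$ of matrices with $A,B\in L$. Since $uA+vB$ and $-\bar vA+\bar uB$ again lie in $L$, the action preserves the patterned class. Its effect on the free parameters is again the same $M$ applied to each surviving parameter pair, for example $(a_{ij},b_{ij})$ with $i\le j$ in the symmetric case. One point needs care: the parameterisation must be identical for $A$ and $B$, so that the coordinates transform pair by pair. Granting this, (ii) for the surviving pairs gives invariance, and Theorem~\ref{th:symmetry} applies.

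\textbf{Case (iv).} For i.i.d.\ complex Ginibre $A,B$, each pair $(a_{ij},b_{ij})$ is a standard complex Gaussian vector in $\bC^2$ with covariance $\sigma_{ij}^2I$. Here $\sigma_{ij}^2$ takes one value on the diagonal and another off it. Unitary maps preserve $\sigma^2 I$, so each pair's law is $SU(2)$-invariant, and independence across entries gives (i).

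The only mildly delicate step is the polar characterization in (i), which is where transitivity of $SU(2)$ on $S^3$ is used. Everything else is bookkeeping.
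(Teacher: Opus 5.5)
Your proposal is correct and follows essentially the same route as the paper. You observe that $\mathfrak U$ acts entrywise by one fixed $SU(2)$ matrix on every pair $(a_{ij},b_{ij})$, so joint invariance of the entry pairs gives invariance of the law of $(A,B)$ and Theorem~\ref{th:symmetry} applies; you then use transitivity of $SU(2)$ on $S^3$ for the polar form in (i), preservation of the linear pattern in (iii), and $U(2)$-invariance of the circular Gaussian vector in (iv), exactly as the paper does. Your identification of the acting matrix as $\begin{pmatrix}u&v\\-\bar v&\bar u\end{pmatrix}$ (the matrix $g$ of Lemma~\ref{lem:equivariance}) is slightly more accurate than the paper's phrase ``$\mathfrak U$ applied to the column vector''; since both matrices lie in $SU(2)$, this changes nothing.
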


\begin{proof}
In each case the action \eqref{eq:action} acts on the pair of matrices
entrywise: $(a_{ij},b_{ij})\mapsto(ua_{ij}+vb_{ij},\,-\bar va_{ij}+\bar ub_{ij})$,
i.e.\ by the fixed matrix $\mathfrak U$ applied to the column vector
$(a_{ij},b_{ij})^{\mathsf T}\in\bC^2$. Invariance of the joint law of the entry
pairs therefore gives invariance of the law of $(A,B)$, and
Theorem~\ref{th:symmetry} applies. For (i) note that an $SU(2)$-invariant law on
$\bC^2$ is exactly one whose angular part is uniform on $S^3$, since $SU(2)$ acts
transitively on $S^3$. For (iii), the pattern is preserved because $\mathfrak U$
acts within each coordinate pair separately. For (iv), a standard complex Gaussian
vector in $\bC^2$ is $U(2)$-invariant.
\end{proof}

\begin{remark}
Theorem~\ref{th:symmetry} explains \emph{why} the complex Gaussian answer of
\cite{ShZa2} is independent of $n$ and of the variance: those are not features of
the Gaussian computation but of a symmetry. It also clarifies the notion of a
subspace \emph{adjusted to the $SU(2)$-action}: the content of that condition is
precisely that the induced measure on $W\times W$ stays invariant.
\end{remark}

\begin{remark}[Real ensembles keep a circle of symmetry]\label{rem:SO2}
If $(A,B)$ is invariant only under the real subgroup $SO(2)\subset SU(2)$, i.e.\
under $(A,B)\mapsto(\cos\theta\,A+\sin\theta\,B,\,-\sin\theta\,A+\cos\theta\,B)$,
then by the same argument $\bE\mu_{A,B}$ is invariant under the corresponding
one-parameter group of rotations of $S^2$. The fixed points of that group are
$\la=\pm i$; the invariant coordinate is precisely $Y$ of \eqref{eq:tau}. This
applies to any pencil of i.i.d.\ real Gaussian ensembles --- $GOE$, real Ginibre,
real symmetric with scaled diagonal --- because the entry pairs
$(a_{ij},b_{ij})\in\bR^2$ are then standard real Gaussian vectors, hence
$SO(2)$-invariant. Combined with the deterministic symmetry
$\Delta_n(\bar\la)=\overline{\Delta_n(\la)}$, valid whenever $A,B$ are real or
Hermitian, we conclude that for all these ensembles $\bE\mu_n$ is a function of
$|Y|$ alone. This proves the claim, stated without proof in \cite{ShZa2}, that
$\cP^{GUE}_n(\theta,\phi)$ does not depend on the azimuthal angle.
\end{remark}

\section{The logarithmic energy of the elliptic law}\label{sec:energy}

Recall that the \emph{elliptic law} $\Ell_\tau$, $\tau\in[0,1]$, is the uniform
probability measure on the ellipse with semi-axes $1+\tau$ and $1-\tau$; for
$\tau\in\bC$, $|\tau|\le1$, one sets $\Ell_\tau:=(e^{i\theta})_*\Ell_{|\tau|}$
where $\tau=|\tau|e^{2i\theta}$. Thus $\Ell_0$ is Girko's circular law and $\Ell_1$
is Wigner's semicircle law on $[-2,2]$. Write
$I(\mu)=\iint\log|z-w|\,d\mu(z)d\mu(w)$.

\begin{proof}[Proof of Theorem~\ref{th:energy}]
Let $\mu$ be rotation invariant on $\bC$ with $I(\mu)>-\infty$, let
$T_\tau(z)=z+\tau\bar z$ and let $Z,W$ be i.i.d.\ with law $\mu$. Since
$T_\tau Z-T_\tau W=S+\tau\bar S$ with $S:=Z-W$, and since the law of $S$ is
rotation invariant, we may write $S=Re^{i\Theta}$ with $\Theta$ uniform on
$[0,2\pi)$ and independent of $R=|S|$. Then
$|S+\tau\bar S|=R\,|e^{i\Theta}+\tau e^{-i\Theta}|=R\,|e^{2i\Theta}+\tau|$, so
\[
        I\bigl((T_\tau)_*\mu\bigr)
        =\bE\log R+\bE\log|\tau+e^{2i\Theta}|
        =I(\mu)+\frac1{2\pi}\int_0^{2\pi}\log|\tau+e^{i\psi}|\,d\psi
        =I(\mu)+\log^+|\tau|,
\]
the last step by Jensen's formula applied to the polynomial $z\mapsto z+\tau$ on
the unit circle.

Now take $\mu=\Ell_0$, the normalized area measure of the unit disc. The real
linear map $T_\tau$ has constant Jacobian and takes the unit disc onto the ellipse
with semi-axes $1+\tau$ and $1-\tau$, hence $(T_\tau)_*\Ell_0=\Ell_\tau$ for
$\tau\in[0,1]$. Finally $I(\Ell_0)=-\tfrac14$: the logarithmic potential of the
uniform measure on the unit disc is $\log|z|$ for $|z|\ge1$ and
$\tfrac12(|z|^2-1)$ for $|z|\le1$, so
$I(\Ell_0)=\tfrac12\bigl(\int|z|^2d\Ell_0-1\bigr)=\tfrac12(\tfrac12-1)=-\tfrac14$.
Rotation of the plane does not change logarithmic energy, so the complex case
follows.
\end{proof}

\begin{remark}
It is instructive to check the two endpoints independently. For the semicircle law
$\mu_{sc}$ on $[-2,2]$ one has $\int\log|x-y|d\mu_{sc}(y)=\tfrac{x^2}4-\tfrac12$
on $[-2,2]$, so $I(\mu_{sc})=\tfrac14\cdot1-\tfrac12=-\tfrac14$, matching
$I(\Ell_0)$. The coincidence of the logarithmic energies of Girko's and Wigner's
laws is the analytic reason for the coincidence of the two level-crossing limit
laws.
\end{remark}

We shall also need the corresponding statement for Gaussian clouds, which governs
the diagonal-dominated regime of \S\ref{sec:transition}.

\begin{lemma}[Energy of an elliptic Gaussian cloud]\label{lem:gaussenergy}
Let $\gamma_\tau$, $\tau\in[0,1]$, be the centered complex Gaussian law on $\bC$
with $\bE|\zeta|^2=1$ and $\bE\zeta^2=\tau$. Then
\[
        I(\gamma_\tau)=I(\gamma_0)+\tfrac12\log\frac{1+\sqrt{1-\tau^2}}2 .
\]
\end{lemma}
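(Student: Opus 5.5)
Mirroring the proof of Theorem~\ref{th:energy}, we reduce the computation to the law of the difference of two independent samples.

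Let $Z,W$ be i.i.d.\ with law $\gamma_\tau$ and put $S=Z-W$. Then $S$ is a centered complex Gaussian with $\bE|S|^2=2$ and $\bE S^2=2\tau$, i.e.\ $S\stackrel{d}{=}\sqrt2\,Z$. Hence
\[
I(\gamma_\tau)=\bE\log|S|=\tfrac12\log2+\bE\log|Z|,
\]
and it suffices to show that $\bE\log|Z|$ under $\gamma_\tau$ exceeds its value under $\gamma_0$ by $\tfrac12\log\frac{1+\sqrt{1-\tau^2}}2$.

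Next we write $\gamma_\tau$ as a real-linear image of $\gamma_0$, as in the elliptic case. Take $\xi\sim\gamma_0$ and set $Z=a\xi+b\bar\xi$ with real $a,b\ge0$ satisfying
\[
a^2+b^2=1,\qquad 2ab=\tau .
\]
These moments match $\gamma_\tau$, since $\bE\xi^2=0$ and $\bE|\xi|^2=1$ give $\bE|Z|^2=a^2+b^2$ and $\bE Z^2=2ab$. Explicitly,
\[
a=\tfrac12\bigl(\sqrt{1+\tau}+\sqrt{1-\tau}\bigr),\qquad b=\tfrac12\bigl(\sqrt{1+\tau}-\sqrt{1-\tau}\bigr),
\]
so $a\ge b$ and $a^2=\frac{1+\sqrt{1-\tau^2}}2$.

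Because $\gamma_0$ is rotation invariant, we can write $\xi=Re^{i\Theta}$ with $\Theta$ uniform and independent of $R$. Then
\[
|Z|=R\,|a+be^{-2i\Theta}| .
\]
Jensen's formula, exactly as in the proof of Theorem~\ref{th:energy}, gives
\[
\frac1{2\pi}\int_0^{2\pi}\log|a+be^{i\psi}|\,d\psi=\log\max(a,b)=\log a .
\]
Therefore $\bE\log|Z|=\bE\log R+\log a$, and so
\[
I(\gamma_\tau)=I(\gamma_0)+\log a=I(\gamma_0)+\tfrac12\log\frac{1+\sqrt{1-\tau^2}}2 .
\]
Equivalently, this is the general statement of Theorem~\ref{th:energy} applied to $(T_{b/a})_*\bigl((a\cdot)_*\gamma_0\bigr)$: the scaling by $a$ contributes $\log a$, and $\log^+(b/a)=0$ since $b\le a$.

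The only delicate points are the following.

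\emph{Finiteness of the energy.} Gaussian laws have finite logarithmic energy, and at $\tau=1$ the law is a real Gaussian, where the logarithmic singularity is still integrable.

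\emph{The endpoint $\tau=1$.} Here $a=b$, Jensen's formula still applies with boundary zeros, and the right-hand side gives $-\tfrac12\log2$.

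\emph{Choice of root.} One must take $a\ge b$, which selects the correct branch; this is the main thing to check. We expect no real obstacle beyond it.
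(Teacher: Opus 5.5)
Your proof is correct and is essentially the paper's argument: the same real-linear representation $Z=a\xi+b\bar\xi$ with $a\ge b$, $a^2=\frac{1+\sqrt{1-\tau^2}}2$, combined with the Jensen-formula computation behind Theorem~\ref{th:energy}, which the paper invokes directly by writing $\zeta=a\,T_{b/a}(\eta)$. Your variant first uses $Z-W\stackrel{d}{=}\sqrt2\,Z$ to reduce to a single expectation and then runs Jensen by hand. This has the minor advantage of not leaning on Theorem~\ref{th:energy} as stated, since that statement assumes compact support and a Gaussian law does not satisfy it.
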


\begin{proof}
Write $\zeta=a\eta+b\bar\eta$ with $\eta$ standard complex Gaussian
($\bE|\eta|^2=1$, $\bE\eta^2=0$) and $a\ge b\ge0$. Then $a^2+b^2=1$ and $2ab=\tau$,
so $a=\tfrac12(\sqrt{1+\tau}+\sqrt{1-\tau})$ and $b/a\le1$. Since
$\zeta=a\,T_{b/a}(\eta)$ and $\eta$ has a rotation-invariant law,
Theorem~\ref{th:energy} gives
$I(\gamma_\tau)=I(\gamma_0)+\log a+\log^+(b/a)=I(\gamma_0)+\log a$, and
$a^2=\tfrac12\bigl(1+\sqrt{1-\tau^2}\bigr)$.
\end{proof}

Note the contrast with Theorem~\ref{th:energy}: for the elliptic \emph{law} the
energy is constant, whereas for the elliptic Gaussian \emph{cloud} it is not. In
the language of \eqref{eq:tau}, $\sqrt{1-\tau^2}=|Y|$, so
Lemma~\ref{lem:gaussenergy} gives the profile
$G_{\rm diag}(Y)=\mathrm{const}+\tfrac12\log\frac{1+|Y|}2$ used in the exactly diagonal endpoint, Theorem~\ref{th:diag_endpoint}.

\section{A priori estimates}\label{sec:apriori}

Throughout this section $\widehat C_n=\widehat C_n(\la)$ is an $n\times n$ random
matrix with eigenvalues $\zeta_1,\dots,\zeta_n$,
\[
        L_n:=\frac1n\sum_k\delta_{\zeta_k},
        \qquad
        M_n:=\frac1{N}\sum_{i\ne j}\delta_{(\zeta_i,\zeta_j)},
        \qquad
        \Lambda_n=\iint\log|z-w|\,dM_n(z,w),
\]
and for $0<\eps<1<R$ we use the two-sided truncation
\[
        \Phi_{\eps,R}(z,w):=\max\bigl\{\log\eps,\ \min(\log|z-w|,\log R)\bigr\},
\]
a bounded continuous function on $\bC^2$ with values in $[\log\eps,\log R]$.

\subsection{The tail is universal}
The hypothesis (LT) of the earlier treatment is not needed: it is a consequence of
Schur's inequality and a second-moment bound on the entries.

\begin{lemma}[Universal logarithmic tail bound]\label{lem:tail}
Let $T_n:=\frac1n\|\widehat C_n\|_{HS}^2$ and suppose $\bE T_n^2\le K$. Then for
all $R\ge2$
\[
        \frac1{N}\,\bE\sum_{i\ne j}
        \mathbbm 1_{\{\max(|\zeta_i|,|\zeta_j|)>R\}}\,\log(2+|\zeta_i|+|\zeta_j|)
        \;\le\;\frac{C(\sqrt K+K^{3/4})}{R},
\]
with an absolute constant $C$. In particular, for each fixed bound on $K$ the tail is
$O(R^{-1})$, uniformly in $n$.  The hypothesis holds locally uniformly in $\la$
for the ensembles considered below with uniformly bounded fourth moments; for
Gaussian ensembles $\bE T_n^2=O(1)$.
\end{lemma}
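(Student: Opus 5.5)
Schur's inequality controls the empirical second moment of the spectrum: $\sum_k|\zeta_k|^2\le\|\widehat C_n\|_{HS}^2=nT_n$. Hence the empirical spectral measure $L_n$ satisfies $\int|z|^2\,dL_n\le T_n$. The tail sum will be bounded pointwise, for each realization, by quantities involving $T_n$ alone, and then averaged using $\bE T_n^2\le K$.

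\textbf{Step 1 (symmetrize).} The indicator $\mathbbm 1_{\{\max(|\zeta_i|,|\zeta_j|)>R\}}$ is at most $\mathbbm 1_{\{|\zeta_i|>R\}}+\mathbbm 1_{\{|\zeta_j|>R\}}$. The weight satisfies $\log(2+|\zeta_i|+|\zeta_j|)\le\log(2+|\zeta_i|)+\log(2+|\zeta_j|)$, since both sides exceed $\log 2$ and $(2+a)(2+b)\ge 2+a+b$. By symmetry in $i,j$ it suffices to bound two kinds of terms. The first is
\[
\frac1N\sum_{i\ne j}\mathbbm 1_{\{|\zeta_i|>R\}}\log(2+|\zeta_i|)\le \frac1n\sum_i\mathbbm 1_{\{|\zeta_i|>R\}}\log(2+|\zeta_i|),
\]
using $(n-1)/N=1/n$. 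The second is the mixed term
\[
\frac1N\sum_{i\ne j}\mathbbm 1_{\{|\zeta_i|>R\}}\log(2+|\zeta_j|)\le \Bigl(\frac1n\#\{i:|\zeta_i|>R\}\Bigr)\cdot\frac{1}{n-1}\sum_j\log(2+|\zeta_j|).
\]

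\textbf{Step 2 (pointwise bounds via Schur).} For $|z|>R\ge2$ we have $\log(2+|z|)\le\log(2|z|)\le C|z|^{1/2}\le C|z|^2/R^{3/2}$. This gives the first term at most $C\,T_n R^{-3/2}$, which is at most $C\,T_nR^{-1}$. For the mixed term, Chebyshev gives $\frac1n\#\{|\zeta_i|>R\}\le T_n/R^2$. Using $\log(2+t)\le C(1+t^{1/2})$ and concavity, $\frac1n\sum_j\log(2+|\zeta_j|)\le C(1+T_n^{1/4})$, and the factor $n/(n-1)\le2$ is harmless. So the mixed term is at most $C\,T_n(1+T_n^{1/4})R^{-2}$.

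\textbf{Step 3 (take expectations).} Since $\bE T_n\le\sqrt{\bE T_n^2}\le\sqrt K$ and $\bE T_n^{5/4}\le(\bE T_n^2)^{5/8}\le K^{5/8}$, the total is at most
\[
C\bigl(\sqrt K+K^{5/8}\bigr)R^{-1},
\]
after using $R^{-2}\le R^{-1}$. To match the stated form, note $K^{5/8}\le\sqrt K+K^{3/4}$: the left side is below $K^{3/4}$ when $K\ge1$ and below $\sqrt K$ when $K\le1$.

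\textbf{Step 4 (the remark about $\bE T_n^2$).} For Gaussian ensembles $nT_n$ is a quadratic form in $n^2$ Gaussians, each of variance $1/n$ locally uniformly in $\la$. So $\bE T_n^2=(\bE T_n)^2+\mathrm{Var}\,T_n=O(1)$. The same computation applies under uniformly bounded fourth moments of the entries.

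The only delicate point is the mixed term, where a large $\zeta_i$ is paired with arbitrary $\zeta_j$. It is handled by the product structure: a tail count times the average of $\log(2+|\zeta_j|)$, followed by Hölder.
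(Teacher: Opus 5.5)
Your proof is correct and follows essentially the same route as the paper: Schur's inequality $\sum_k|\zeta_k|^2\le nT_n$, symmetrization via $\log(2+s+t)\le\log(2+s)+\log(2+t)$, a direct bound on the self term, Chebyshev combined with a concavity bound on the average of $\log(2+|\zeta_j|)$ for the mixed term, and H\"older in $\omega$. The only difference is cosmetic: you use square-root bounds on the logarithm and obtain $\sqrt K+K^{5/8}$, which you correctly absorb into $\sqrt K+K^{3/4}$, whereas the paper uses $\log(2+s)\le 2s$ and $\log(2+\sqrt{T_n})\le 1+\sqrt{T_n}$.
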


\begin{proof}
By Schur's inequality $\sum_k|\zeta_k|^2\le\|\widehat C_n\|_{HS}^2=nT_n$. Using
$\log(2+s+t)\le\log(2+s)+\log(2+t)$ and symmetry, the left-hand side is at most
\begin{align*}
 &\frac2{N}\bE\sum_{i\ne j}\mathbbm 1_{\{|\zeta_i|>R\}}
 \bigl[\log(2+|\zeta_i|)+\log(2+|\zeta_j|)\bigr]\\
 &\qquad\le 2\,\bE\Biggl[
 \frac1n\sum_i\mathbbm 1_{\{|\zeta_i|>R\}}\log(2+|\zeta_i|)
 +\frac{\#\{i:|\zeta_i|>R\}}{n}
   \frac1{n-1}\sum_j\log(2+|\zeta_j|)\Biggr].
\end{align*}
For $s\ge R\ge2$ we have $\log(2+s)\le 2s$, so the first term is bounded by
$\frac2n\sum_i\mathbbm 1_{\{|\zeta_i|>R\}}|\zeta_i|\le\frac2{Rn}\sum_i|\zeta_i|^2
\le 2T_n/R$. For the second, Chebyshev gives $\#\{|\zeta_i|>R\}/n\le T_n/R^2$,
while by concavity of the logarithm and Cauchy--Schwarz
$\frac1n\sum_j\log(2+|\zeta_j|)\le\log(2+\sqrt{T_n})\le 1+\sqrt{T_n}$. Collecting
terms and applying Cauchy--Schwarz in $\omega$,
\[
        \text{LHS}\le \frac{4\,\bE T_n}{R}
        +\frac{C\,\bE\bigl[T_n(1+\sqrt{T_n})\bigr]}{R^2}
        \le \frac{C'(\sqrt K+K^{3/4})}{R},
\]
using $\bE T_n\le\sqrt K$ and $\bE T_n^{3/2}\le (\bE T_n^2)^{3/4}\le K^{3/4}$.
\end{proof}

\subsection{The one-sided bound is unconditional}
Truncating a logarithm from below can only increase it. This trivial remark gives
the upper bound on $\Lambda_n$ for free, and it is the half of the argument that
does not require any local information about eigenvalue repulsion.

\begin{lemma}[Upper bound]\label{lem:upper}
Assume $L_n\to\mu$ weakly in probability for a compactly supported probability
measure $\mu$, and assume the conclusion of Lemma~\ref{lem:tail}. Then
\[
        \limsup_{n\to\infty}\ \Lambda_n\ \le\ I(\mu)
        \qquad\text{in probability},
\]
i.e.\ for every $\eta>0$, $\bP\bigl(\Lambda_n>I(\mu)+\eta\bigr)\to0$. The same
statement holds for $\bE\Lambda_n$.
\end{lemma}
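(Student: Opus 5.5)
The idea is to bound $\log|z-w|$ from above by a bounded continuous kernel plus a tail term. Since $\log t\le\max(\log\eps,\min(\log t,\log R))$ whenever $t\le R$, and $\log t\le\log(2+|z|+|w|)$ always, we have pointwise
\[
 \log|z-w|\le \Phi_{\eps,R}(z,w)+\mathbbm 1_{\{\max(|z|,|w|)>R/2\}}\log(2+|z|+|w|).
\]
Indeed, if both moduli are at most $R/2$ then $|z-w|\le R$ and the truncation only raises the value from below. Integrating against $M_n$ gives
\[
 \Lambda_n\le \iint\Phi_{\eps,R}\,dM_n+\mathcal T_n(R/2),
\]
where $\mathcal T_n(R)$ is the random tail quantity of Lemma~\ref{lem:tail}. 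The argument then has three steps.

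First, we handle the bounded part. We have $M_n=\frac{n^2}{N}L_n\otimes L_n-\frac1{N}\sum_i\delta_{(\zeta_i,\zeta_i)}$, and the diagonal correction contributes $-\frac{n}{N}\log\eps$, which is $O(|\log\eps|/n)\to0$. Weak convergence $L_n\to\mu$ in probability implies $L_n\otimes L_n\to\mu\otimes\mu$ weakly in probability; one can pass to almost surely convergent subsequences, or use the product test-function argument. Hence $\iint\Phi_{\eps,R}\,dM_n\to\iint\Phi_{\eps,R}\,d\mu\otimes\mu$ in probability, because $\Phi_{\eps,R}$ is bounded and continuous.

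Second, we let the truncations go. Pick $R$ larger than twice the diameter of $\supp\mu$ plus $2$. Then the cap at $\log R$ is inactive on $\supp\mu\times\supp\mu$, and $\iint\Phi_{\eps,R}\,d\mu\otimes\mu=\iint\max(\log\eps,\log|z-w|)\,d\mu\,d\mu$. This decreases to $I(\mu)$ as $\eps\downarrow0$ by monotone convergence, which is valid even if $I(\mu)=-\infty$, since the kernel is bounded above on the compact support. Fix $\eta>0$ and choose $\eps$ so that this quantity is below $I(\mu)+\eta/3$; if $I(\mu)=-\infty$, interpret the claim as $\Lambda_n\to-\infty$ and choose $\eps$ below any prescribed level.

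Third, we control the tail. By Lemma~\ref{lem:tail} and Markov's inequality, $\bP(\mathcal T_n(R/2)>\eta/3)\le 6C(\sqrt K+K^{3/4})/(\eta R)$ uniformly in $n$, and this can be made arbitrarily small by enlarging $R$. Enlarging $R$ does not affect the second step. Combining the three steps, $\limsup_n\bP(\Lambda_n>I(\mu)+\eta)\le\delta$ for any $\delta>0$, which gives the claim.

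For the statement about $\bE\Lambda_n$, we take expectations in the pointwise bound instead. The bounded part converges in mean by bounded convergence in probability, since it lies in $[\log\eps,\log R]$. The expected tail is $O(1/R)$ directly by Lemma~\ref{lem:tail}. This gives $\limsup\bE\Lambda_n\le I(\mu)+\eta$.

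The main obstacle is the first step: passing weak convergence in probability to the off-diagonal pair measure $M_n$ cleanly. This is handled by the explicit decomposition above, with the diagonal removed at cost $O(|\log\eps|/n)$ because the kernel is truncated below.
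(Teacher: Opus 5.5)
Your proposal is correct and follows the paper's proof essentially step for step. It uses the same pointwise bound $\log|z-w|\le\Phi_{\eps,R}+\mathbbm 1_{\{\max(|z|,|w|)>R/2\}}\log(2+|z|+|w|)$, the same diagonal correction expressing $\iint\Phi_{\eps,R}\,dM_n$ through $L_n\otimes L_n$, and the same order of limits $n\to\infty$, then $R\to\infty$, then $\eps\downarrow0$ by monotone convergence; your bookkeeping is slightly more careful than the paper's, since you apply Lemma~\ref{lem:tail} at level $R/2$, make the Markov step explicit, and allow $I(\mu)=-\infty$.
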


\begin{proof}
Since
\[
 \log|z-w|\le\Phi_{\eps,R}(z,w)+\log^+\frac{|z-w|}{R},
 \qquad
 \log^+\frac{|z-w|}R
 \le\log(2+|z|+|w|)\mathbbm 1_{\{\max(|z|,|w|)>R/2\}},
\]
Lemma~\ref{lem:tail} gives
\[
        \Lambda_n\le\iint\Phi_{\eps,R}\,dM_n+\mathcal T_n(R),
        \qquad \bE\mathcal T_n(R)\le C_K/R .
\]
Next, directly from the definitions,
\begin{equation}\label{eq:diagcorr}
        \iint\Phi_{\eps,R}\,dM_n
        =\frac{n}{n-1}\iint\Phi_{\eps,R}\,dL_n\,dL_n-\frac{\log\eps}{n-1},
\end{equation}
since $\Phi_{\eps,R}(z,z)=\log\eps$. As $\Phi_{\eps,R}$ is bounded and continuous,
weak convergence of $L_n$ in probability yields
$\iint\Phi_{\eps,R}dM_n\to\iint\Phi_{\eps,R}\,d\mu\,d\mu$ in probability. Now fix
$R$ larger than the diameter of $\supp\mu$ and let $n\to\infty$, then
$R\to\infty$; finally $\Phi_{\eps,R}=\log_\eps|z-w|\downarrow\log|z-w|$ as
$\eps\downarrow0$, so monotone convergence gives
$\iint\Phi_{\eps,R}d\mu\,d\mu\downarrow I(\mu)>-\infty$, which is the claim.
\end{proof}

\subsection{From a one-sided bound plus a mean to \texorpdfstring{$L^1$}{L1}}
The following elementary lemma is what allows us to work with expectations only. It
says that for a sequence dominated from above by something that converges, the
convergence of the mean forces convergence in $L^1$; no concentration inequality is
needed.

\begin{lemma}[$L^1$ upgrade]\label{lem:usc_upgrade}
Let $(X_n)$ be integrable real random variables and suppose that for each $k$ there
are random variables $X_n^{(k)}$ and $\mathcal T_n^{(k)}\ge0$ with
\begin{enumerate}
\item[(a)] $X_n\le X_n^{(k)}+\mathcal T_n^{(k)}$ and
$\sup_n\|X_n^{(k)}\|_\infty<\infty$ for each $k$;
\item[(b)] $X_n^{(k)}\to c_k$ in probability as $n\to\infty$, and
$\limsup_n\bE\mathcal T_n^{(k)}\le\rho_k$;
\item[(c)] $c_k\to L$ and $\rho_k\to0$ as $k\to\infty$;
\item[(d)] $\bE X_n\to L$.
\end{enumerate}
Then $X_n\to L$ in $L^1$, hence in probability.
\end{lemma}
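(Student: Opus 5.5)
The plan is to show that the positive and negative parts of $X_n-L$ both vanish in $L^1$. The mean convergence (d) will transfer control from one part to the other. Write $X_n-L=(X_n-L)^+-(X_n-L)^-$, so that
\[
\bE|X_n-L|=2\,\bE(X_n-L)^+-\bE(X_n-L).
\]
By (d), $\bE(X_n-L)\to0$. It therefore suffices to prove $\limsup_n\bE(X_n-L)^+=0$. This is the only place (d) enters, and it explains why no lower bound or concentration for $X_n$ is needed.

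For the positive part, fix $k$. Hypothesis (a) and subadditivity of $t\mapsto t^+$ give
\[
(X_n-L)^+\le (X_n^{(k)}-L)^+ +\mathcal T_n^{(k)}\le |X_n^{(k)}-c_k|+|c_k-L|+\mathcal T_n^{(k)},
\]
where $\mathcal T_n^{(k)}\ge0$ is used. Taking expectations, the middle term is the constant $|c_k-L|$. By (b), $\limsup_n\bE\mathcal T_n^{(k)}\le\rho_k$. For the first term, the variables $X_n^{(k)}$ are uniformly bounded in $n$ by some $M_k$ (from (a)), and $|c_k|\le M_k$ since $c_k$ is a limit in probability of them. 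Hence $|X_n^{(k)}-c_k|\le 2M_k$, and convergence in probability together with bounded convergence gives $\bE|X_n^{(k)}-c_k|\to0$. Altogether
\[
\limsup_n\bE(X_n-L)^+\le |c_k-L|+\rho_k ,
\]
and letting $k\to\infty$ with (c) gives zero.

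Combining the two steps, $\bE|X_n-L|\to0$, and convergence in probability follows by Markov's inequality. All the $X_n$ are integrable, so every expectation above is finite. I expect no step to be a genuine obstacle. The only point needing care is the bounded-convergence step, which is exactly why (a) demands $\sup_n\|X_n^{(k)}\|_\infty<\infty$ rather than mere integrability.
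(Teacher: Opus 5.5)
Your proof is correct and uses the same mechanism as the paper: the identity $|x|=2x^+-x$, with (d) controlling the negative part, and the one-sided bound (a) plus bounded convergence controlling the positive part. The only difference is bookkeeping: the paper applies the identity to $X_n-X_n^{(k)}$ and then passes to $L$ via the triangle inequality through $c_k$, whereas you apply it directly to $X_n-L$, which is slightly more streamlined.
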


\begin{proof}
Fix $k$. By (a), $(X_n-X_n^{(k)})^+\le\mathcal T_n^{(k)}$, hence
\[
        \bE|X_n-X_n^{(k)}|
        =\bE(X_n^{(k)}-X_n)+2\bE(X_n-X_n^{(k)})^+
        \le \bE X_n^{(k)}-\bE X_n+2\bE\mathcal T_n^{(k)} .
\]
By (a) and (b) and bounded convergence, $\bE X_n^{(k)}\to c_k$; with (d),
$\limsup_n\bE|X_n-X_n^{(k)}|\le c_k-L+2\rho_k$. Also
$\bE|X_n^{(k)}-c_k|\to0$. Therefore
$\limsup_n\bE|X_n-L|\le (c_k-L)+2\rho_k+|c_k-L|$, and letting $k\to\infty$
finishes the proof.
\end{proof}

\subsection{A Christoffel bound replacing level repulsion}
Finally we treat the near-diagonal contribution, which is the only genuinely local
input. For determinantal Coulomb gases it is controlled by the crudest possible
bound $\rho_2\le\rho_1\otimes\rho_1$ together with a uniform bound on the one-point
function. No eigenvalue repulsion is needed --- which is fortunate, since it is
exactly the repulsion estimates that are hard to obtain uniformly in a parameter.

Consider a $\beta=2$ two-dimensional Coulomb gas: $n$ points in $\bC$ with joint
density
\begin{equation}\label{eq:gas}
        \frac1{Z_n}\prod_{i<j}|z_i-z_j|^2\,
        e^{-n\sum_k V(z_k)},
\end{equation}
with $V\in C^2(\bC)$ growing fast enough. Such a gas is determinantal with
correlation kernel $K_n(z,w)=\sum_{k<n}p_k(z)\overline{p_k(w)}e^{-\frac n2(V(z)+V(w))}$,
where $(p_k)$ are the orthonormal polynomials for the weight $e^{-nV}dA$; the
one- and two-point functions are $\rho_1=K_n(z,z)$ and
$\rho_2(z,w)=K_n(z,z)K_n(w,w)-|K_n(z,w)|^2$, and $K_n$ is the kernel of an
orthogonal projection of rank $n$.

\begin{lemma}[Christoffel bound]\label{lem:christoffel}
If $\sup_{\bC}\Delta V\le M<\infty$ then
$K_n(z,z)\le \frac n\pi\,e^{M/4}$ for all $z\in\bC$ and all $n\ge1$.
\end{lemma}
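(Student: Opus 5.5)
The plan is to combine the extremal (Christoffel) characterization of the diagonal of the kernel with a sub-mean-value inequality for a suitably corrected weighted polynomial. First I recall that, since $K_n$ is the reproducing kernel of the space $\mathcal P_n e^{-nV/2}$, where $\mathcal P_n$ is the set of polynomials of degree $<n$, with the $L^2(dA)$ norm, one has
\[
 K_n(z,z)=\sup\Bigl\{\,|p(z)|^2e^{-nV(z)}\ :\ p\in\mathcal P_n,\ \int_{\bC}|p|^2e^{-nV}\,dA=1\Bigr\}.
\]
This follows from Cauchy--Schwarz applied to $p(z)e^{-nV(z)/2}=\int K_n(z,w)p(w)e^{-nV(w)/2}\,dA(w)$, with equality for $p$ proportional to $\sum_k \overline{p_k(z)}p_k$. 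So it suffices to bound $|p(z)|^2e^{-nV(z)}$ by $\frac n\pi e^{M/4}\int|p|^2e^{-nV}dA$ for every polynomial $p$, of any degree.

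Next, I fix $z$ and consider
\[
 u(w):=\log|p(w)|^2-nV(w)+\tfrac{nM}{4}|w-z|^2 .
\]
Here $\log|p|^2$ is subharmonic, and $\Delta\bigl(-nV+\tfrac{nM}4|w-z|^2\bigr)=n(M-\Delta V)\ge0$ because $\Delta|w|^2=4$. The correction term is $C^2$ since $V\in C^2$, so $u$ is subharmonic. Hence $e^u$ is subharmonic as well, and the area sub-mean-value property on the disc $D(z,r)$ gives
\[
 |p(z)|^2e^{-nV(z)}=e^{u(z)}\le\frac1{\pi r^2}\int_{D(z,r)}|p|^2e^{-nV}e^{\frac{nM}4|w-z|^2}dA
 \le \frac{e^{nMr^2/4}}{\pi r^2}\int_{\bC}|p|^2e^{-nV}dA .
\]

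Finally I choose $r^2=1/n$, which yields $\frac n\pi e^{M/4}$; taking the supremum over $p$ gives the claim. The choice $r^2=1/n$ is essentially optimal, since it minimizes $e^{nMr^2/4}/r^2$ up to the constant.

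The only delicate points are justifying the extremal characterization, which needs $V$ to grow fast enough that $\mathcal P_n\subset L^2(e^{-nV}dA)$ and the $p_k$ exist, and checking the subharmonicity of $u$ at zeros of $p$, where $u=-\infty$ is allowed. Neither is a real obstacle. The key idea is the quadratic correction $\tfrac{nM}{4}|w-z|^2$ that absorbs $\Delta V$.
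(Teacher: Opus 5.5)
Your proof is correct and follows essentially the same route as the paper. Both arguments use the extremal characterization of $K_n(z,z)$ and the subharmonicity of $\log|p|^2-nV+\tfrac{nM}{4}|w-z|^2$ (the paper writes this as $\log|f|^2+\tfrac{nM}{4}|w-z|^2$ with $f=pe^{-nV/2}$), followed by the area sub-mean-value inequality for $e^u$ on $D(z,n^{-1/2})$.
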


\begin{proof}
By the reproducing property, $K_n(z,z)=\sup\{|f(z)|^2:f\in\mathcal H_n,\
\|f\|_{L^2(dA)}=1\}$, where $\mathcal H_n$ is the span of the $p_ke^{-nV/2}$, i.e.\
the space of $f=pe^{-nV/2}$ with $p$ a polynomial of degree $<n$. Fix such an $f$
with $\|f\|=1$ and fix $z$. The function
\[
        u(w):=\log|f(w)|^2+\frac{nM}4|w-z|^2
\]
satisfies, in the sense of distributions,
$\Delta u=4\pi\textstyle\sum_{p(\zeta)=0}\delta_\zeta-n\Delta V+nM\ge0$, so $u$ is
subharmonic, hence so is $e^{u}$. The sub-mean-value property on the disc
$D(z,r)$ with $r=n^{-1/2}$ gives
\[
        |f(z)|^2=e^{u(z)}\le\frac1{\pi r^2}\int_{D(z,r)}e^{u}\,dA
        =\frac1{\pi r^2}\int_{D(z,r)}|f(w)|^2e^{\frac{nM}4|w-z|^2}dA(w)
        \le\frac{e^{M/4}}{\pi r^2}\|f\|^2 ,
\]
since $\frac{nM}4|w-z|^2\le \frac{M}4$ on $D(z,r)$. As $r^{-2}=n$, the claim
follows.
\end{proof}

\begin{proposition}[Near-diagonal estimate]\label{prop:neardiag}
For a gas \eqref{eq:gas} with $\sup\Delta V\le M$ one has, for all $0<\eps\le1$ and
all $n\ge2$,
\[
        \frac1N\,\bE\sum_{i\ne j}\mathbbm 1_{\{|\zeta_i-\zeta_j|\le\eps\}}
        \bigl|\log|\zeta_i-\zeta_j|\bigr|
        \;\le\;
        \frac{n}{n-1}\,e^{M/4}\,\eps^2\Bigl(|\log\eps|+\tfrac12\Bigr).
\]
In particular the bound is uniform in $n$ and tends to $0$ as $\eps\downarrow0$.
\end{proposition}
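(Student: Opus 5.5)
The plan is to write the expectation as an integral against the two-point function, replace $\rho_2$ by $\rho_1\otimes\rho_1$, and use the Christoffel bound to turn the inner integral into a radial integral over a disc of radius $\eps$.

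First, by the definition of the correlation functions of the determinantal gas \eqref{eq:gas},
\[
\bE\sum_{i\ne j}F(\zeta_i,\zeta_j)=\iint F(z,w)\,\rho_2(z,w)\,dA(z)\,dA(w)
\]
for nonnegative measurable $F$. We apply this with
\[
F(z,w)=\mathbbm 1_{\{|z-w|\le\eps\}}\bigl|\log|z-w|\bigr|.
\]
Since $\rho_2(z,w)=K_n(z,z)K_n(w,w)-|K_n(z,w)|^2\le\rho_1(z)\rho_1(w)$, and $F\ge0$, the expectation is at most
\[
\iint F\,\rho_1\otimes\rho_1\,dA\,dA .
\]

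Next, we bound the inner integral uniformly. Lemma~\ref{lem:christoffel} gives $\rho_1(w)\le\frac n\pi e^{M/4}$, so for fixed $z$
\[
\int F(z,w)\rho_1(w)\,dA(w)\le\frac n\pi e^{M/4}\int_{|s|\le\eps}|\log|s||\,dA(s).
\]
For $\eps\le1$ we have $|\log|s||=-\log|s|$, and in polar coordinates
\[
\int_{|s|\le\eps}-\log|s|\,dA=2\pi\int_0^\eps(-r\log r)\,dr=\pi\eps^2\Bigl(|\log\eps|+\tfrac12\Bigr),
\]
using $\int_0^\eps r\log r\,dr=\frac{\eps^2}{2}\log\eps-\frac{\eps^2}4$.

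Then we integrate in $z$, using $\int\rho_1\,dA=n$. This gives
\[
\bE\sum_{i\ne j}F(\zeta_i,\zeta_j)\le n\cdot n\,e^{M/4}\eps^2\Bigl(|\log\eps|+\tfrac12\Bigr).
\]
Dividing by $N=n(n-1)$ yields the factor $\frac{n}{n-1}$ and exactly the stated bound. This bound is uniform in $n$ and tends to $0$ as $\eps\downarrow0$.

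No step is a serious obstacle; the point needing care is the inequality $\rho_2\le\rho_1\otimes\rho_1$. It follows from positive semidefiniteness of the kernel, since $|K_n(z,w)|^2\ge0$ is simply subtracted. It is essential that we never need a lower bound on $\rho_2$ (repulsion), only the crude upper bound combined with the Christoffel estimate. We also check the measurability and positivity conditions needed for Tonelli, which hold since $F\ge0$.
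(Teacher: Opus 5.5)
Your proposal is correct and is essentially the paper's proof: you write the expectation against $\rho_2$, bound $\rho_2\le K_n(z,z)K_n(w,w)$, use Lemma~\ref{lem:christoffel} for the inner $w$-integral and $\int K_n(z,z)\,dA=n$ for the outer one, with the same radial computation $\pi\eps^2(|\log\eps|+\tfrac12)$ and the same division by $N=n(n-1)$. (Small remark: the inequality $\rho_2\le\rho_1\otimes\rho_1$ follows directly from the determinantal formula and $|K_n(z,w)|^2\ge0$; positive semidefiniteness of the kernel is not needed for it.)
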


\begin{proof}
The left-hand side equals
$\frac1N\iint_{|z-w|\le\eps}\bigl|\log|z-w|\bigr|\rho_2(z,w)\,dA(z)dA(w)$. Since
$\rho_2\le K_n(z,z)K_n(w,w)$, Lemma~\ref{lem:christoffel} and
$\int K_n(z,z)dA=n$ give the bound
\begin{align*}
 &\frac1N\Bigl(\sup_wK_n(w,w)\Bigr)\int K_n(z,z)
 \Bigl(\int_{|w-z|\le\eps}\bigl|\log|z-w|\bigr|dA(w)\Bigr)dA(z)\\
 &\qquad\le \frac{1}{N}\frac{n e^{M/4}}{\pi}\,n\,
 \pi\eps^2\bigl(|\log\eps|+\tfrac12\bigr).
\end{align*}
using $\int_{|w-z|\le\eps}|\log|z-w||dA(w)=2\pi\int_0^\eps r|\log r|dr
=\pi\eps^2(|\log\eps|+\tfrac12)$ for $0<\eps\le1$.
\end{proof}

\begin{proposition}[Energy convergence for determinantal gases]\label{prop:gasenergy}
Let $(\zeta^{(n)})$ be gases \eqref{eq:gas} with potentials $V_n$ satisfying
$\sup_n\sup_\bC\Delta V_n\le M$, with $L_n\to\mu$ weakly in probability for a
compactly supported $\mu$ with $I(\mu)>-\infty$, and with
$\sup_n\bE T_n^2<\infty$. Then $\bE\Lambda_n\to I(\mu)$ and $\Lambda_n\to I(\mu)$
in $L^1$. Moreover $\sup_n\bE|\Lambda_n|<\infty$, with a bound depending only on
$M$, on $\sup_n\bE T_n^2$ and on $\diam\supp\mu$.
\end{proposition}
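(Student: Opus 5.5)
We would prove Proposition~\ref{prop:gasenergy} by a two-sided sandwich for $\Lambda_n$, feeding the result into the $L^1$ upgrade Lemma~\ref{lem:usc_upgrade}. Lemma~\ref{lem:upper} already supplies the upper half: the tail Lemma~\ref{lem:tail} (applicable because $\sup_n\bE T_n^2<\infty$) and weak convergence of $L_n$ give $\limsup\Lambda_n\le I(\mu)$. What remains is the matching lower bound for the mean, $\liminf_n\bE\Lambda_n\ge I(\mu)$. For this we split $\log|z-w|$ into three regions: near-diagonal ($|z-w|\le\eps$), tail ($\max(|z|,|w|)>R/2$), and bulk, where $\log|z-w|$ coincides with $\Phi_{\eps,R}$.

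\textbf{Lower bound on the mean.} Pointwise on $\bC^2$,
\[
\log|z-w|\ \ge\ \Phi_{\eps,R}(z,w)-\mathbbm 1_{\{|z-w|\le\eps\}}\bigl|\log|z-w|\bigr|-\log^+\tfrac{|z-w|}{R},
\]
since for $|z-w|\le\eps$ we have $\Phi_{\eps,R}=\log\eps\le0$, and for $|z-w|\ge R$ only the $\log^+$ term is lost. Integrating against $M_n$ and taking expectations gives three terms:
\begin{itemize}
\item[(1)] The near-diagonal term is at most $2e^{M/4}\eps^2(|\log\eps|+\tfrac12)$ by Proposition~\ref{prop:neardiag}, uniformly in $n$.
\item[(2)] The $\log^+$ term is at most $C_K/R$ by Lemma~\ref{lem:tail}, as in the proof of Lemma~\ref{lem:upper}.
\item[(3)] For the $\Phi_{\eps,R}$ term, identity \eqref{eq:diagcorr} and bounded convergence give $\bE\iint\Phi_{\eps,R}\,dM_n\to\iint\Phi_{\eps,R}\,d\mu\,d\mu$.
\end{itemize}
So $\liminf_n\bE\Lambda_n\ge\iint\Phi_{\eps,R}\,d\mu\,d\mu-o_\eps(1)-C_K/R$. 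Take $R>\diam\supp\mu$, so that on $\supp\mu\times\supp\mu$ the truncation reduces to $\log_\eps|z-w|\ge\log|z-w|$. Then $\liminf_n\bE\Lambda_n\ge I(\mu)-o_\eps(1)-C_K/R$, and letting $\eps\downarrow0$, $R\to\infty$ gives $\liminf_n\bE\Lambda_n\ge I(\mu)$.

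\textbf{Upper bound on the mean.} Integrating the upper bound from the proof of Lemma~\ref{lem:upper} in expectation gives $\bE\Lambda_n\le\bE\iint\Phi_{\eps,R}\,dM_n+C_K/R$. Together with monotone convergence $\iint\log_\eps|z-w|\,d\mu\,d\mu\downarrow I(\mu)$, this yields $\limsup_n\bE\Lambda_n\le I(\mu)$. Hence $\bE\Lambda_n\to I(\mu)$.

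\textbf{$L^1$ convergence.} Apply Lemma~\ref{lem:usc_upgrade} with the following choices:
\begin{itemize}
\item $X_n=\Lambda_n$;
\item $X_n^{(k)}=\iint\Phi_{\eps_k,R_k}\,dM_n$, which is bounded by $\max(|\log\eps_k|,\log R_k)$;
\item $\mathcal T_n^{(k)}$ the tail term, so that $\rho_k=C_K/R_k$;
\item $c_k=\iint\Phi_{\eps_k,R_k}\,d\mu\,d\mu\to I(\mu)$.
\end{itemize}
Integrability of $\Lambda_n$ follows from the same decomposition.

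\textbf{Uniform bound on $\bE|\Lambda_n|$.} Write $|\Lambda_n|\le\Lambda_n^++\Lambda_n^-$ and use the decomposition with fixed $\eps=1$, $R=\max(2,2\diam\supp\mu)$. The positive part is bounded by $\log R$ plus the tail term $C_K/R$. The negative part is bounded by the near-diagonal term with $\eps=1$, which Proposition~\ref{prop:neardiag} controls by $e^{M/4}$ times a constant, using $n/(n-1)\le2$. This gives a bound depending only on $M$, $K$ and $\diam\supp\mu$.

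\textbf{Main obstacle.} The delicate step is the near-diagonal control. Weak convergence cannot see the logarithmic singularity, and only the Christoffel bound in Proposition~\ref{prop:neardiag} prevents an anomalous concentration of close pairs. Its uniformity in $n$, which comes from the hypothesis $\sup_n\sup\Delta V_n\le M$, is exactly what allows the order of limits $n\to\infty$ first, then $\eps\to0$.
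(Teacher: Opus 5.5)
Your proposal is correct and follows the paper's proof in all essentials. The near-diagonal and tail terms are controlled by Proposition~\ref{prop:neardiag} and Lemma~\ref{lem:tail}, the $L^1$ statement comes from Lemma~\ref{lem:usc_upgrade} together with Lemma~\ref{lem:upper}, and the uniform bound on $\bE|\Lambda_n|$ uses the same three-region split. The one minor variation is the bulk term: you get $\bE\iint\Phi_{\eps,R}\,dM_n\to\iint\Phi_{\eps,R}\,d\mu\,d\mu$ from \eqref{eq:diagcorr}, weak convergence of $L_n$ in probability and bounded convergence, while the paper writes $\rho_2=\rho_1\otimes\rho_1-|K_n|^2$, discards the $|K_n|^2$ part via $\iint|K_n|^2=n$, and needs only convergence of $\bE L_n$. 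Both are valid under the stated hypotheses.
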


\begin{proof}
Write $\Lambda_n=\iint\Phi_{\eps,R}dM_n+E^{\rm near}_n(\eps)+E^{\rm far}_n(R)$
where $E^{\rm near}$ collects the part of the integrand supported on
$\{|z-w|\le\eps\}$ and $E^{\rm far}$ that on $\{|z-w|>R\}$. By
Proposition~\ref{prop:neardiag}, $\bE|E^{\rm near}_n(\eps)|\le
2e^{M/4}\eps^2(|\log\eps|+\tfrac12)$ uniformly in $n$; by Lemma~\ref{lem:tail},
$\bE|E^{\rm far}_n(R)|\le C/R$ uniformly in $n$. For the truncated term, use
$\rho_2=\rho_1\otimes\rho_1-|K_n|^2$: since $K_n$ is a rank-$n$ projection,
$\iint|K_n(z,w)|^2dA\,dA=\operatorname{tr}K_n=n$, so
\[
        \Bigl|\;\frac1N\iint\Phi_{\eps,R}\,|K_n|^2\Bigr|
        \le\frac{\max(|\log\eps|,\log R)\,n}{N}\longrightarrow0 ,
\]
while $\frac1N\iint\Phi_{\eps,R}\rho_1\otimes\rho_1
=\frac{n^2}{N}\iint\Phi_{\eps,R}\,d\bE L_n\,d\bE L_n
\to\iint\Phi_{\eps,R}\,d\mu\,d\mu$ by weak convergence. Letting $n\to\infty$,
then $R\to\infty$ and $\eps\downarrow0$, gives $\bE\Lambda_n\to I(\mu)$. The $L^1$
statement follows from Lemma~\ref{lem:usc_upgrade} with
$X_n=\Lambda_n$, $X^{(k)}_n=\iint\Phi_{\eps_k,R_k}dM_n$ and
$\mathcal T^{(k)}_n$ the far-field error, using Lemma~\ref{lem:upper}.

For the last assertion, split $\bE|\Lambda_n|$ according to $|z-w|\le1$,
$1<|z-w|\le R$ and $|z-w|>R$. The first part is at most $e^{M/4}$ by
Proposition~\ref{prop:neardiag} with $\eps=1$; the second is at most $\log R$; the
third is at most $C_K/R$ by Lemma~\ref{lem:tail}. Taking $R$ fixed gives a
bound independent of $n$.
\end{proof}

\subsection{The general criterion}
We can now state the criterion that organizes all the limit theorems below.

\begin{theorem}[Criterion]\label{th:criterion}
Let $(A_n,B_n)$ be nondegenerate random pencils with entries of variance $1/n$, and
let $\Omega\subseteq\bC$ be open with $\si(\bCP^1\setminus\Omega)=0$. Assume:
\begin{enumerate}
\item[\rm (G)] for almost every $\la\in\Omega$, the empirical spectral distribution
of $\widehat C_n(\la)$ converges weakly in probability to a compactly supported
probability measure $\mu_\la$ with $I(\mu_\la)>-\infty$;
\item[\rm (T)] $\sup_n\bE T_n(\la)^2<\infty$, locally uniformly on $\Omega$;
\item[\rm (E)] for almost every $\la\in\Omega$,
$\bE\Lambda_n(\la)\to I(\mu_\la)$;
\item[\rm (B)] $\sup_n\sup_{\la\in K}\bE|\Lambda_n(\la)|<\infty$ for every compact
$K\subset\Omega$.
\end{enumerate}
Set $G(\la):=I(\mu_\la)$ almost everywhere and
$\Psi:=\si+\frac1{2\pi}\Delta_\la G$ on $\Omega$. Then
$\Lambda_n(\la)\to G(\la)$ in $L^1(\bP)$ for almost every $\la\in\Omega$, and every
weak limit point $\nu$ of $\mu_n$ satisfies $\nu|_\Omega=\Psi|_\Omega$. If moreover
$\Psi(\Omega)=1$, then $\mu_n\to\Psi$ weakly in probability on $\bCP^1$. If $G$
is a function of $Y$ alone, the limit is described by the distributional master
formula of Proposition~\ref{prop:master}.
\end{theorem}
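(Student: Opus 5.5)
The plan has two stages: pointwise $L^1(\bP)$ convergence of $\Lambda_n(\la)$, then a passage from pointwise convergence to convergence of the potentials and of the measures. Throughout we lean on the key identity \eqref{eq:key} and on Lemmas~\ref{lem:weakL1} and \ref{lem:mass}.

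\textbf{Step 1: pointwise convergence.} Fix $\la\in\Omega$ outside the exceptional null set where (G) or (E) fails. We apply Lemma~\ref{lem:usc_upgrade} with $X_n=\Lambda_n(\la)$ and $X_n^{(k)}=\iint\Phi_{\eps_k,R_k}\,dM_n$, where $\eps_k\downarrow0$ and $R_k\uparrow\infty$. We take $\mathcal T_n^{(k)}$ to be the far-field error from the proof of Lemma~\ref{lem:upper}.
\begin{itemize}
\item Condition (a) holds because $\log|z-w|\le\Phi_{\eps,R}+\log^+(|z-w|/R)$ and $\Phi_{\eps,R}$ is bounded.
\item Condition (b) follows from (G) via \eqref{eq:diagcorr}, giving $X_n^{(k)}\to c_k=\iint\Phi_{\eps_k,R_k}\,d\mu_\la\,d\mu_\la$. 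Lemma~\ref{lem:tail} together with (T) gives $\rho_k=C/R_k$.
\item Condition (c) holds by monotone convergence, since $c_k\to I(\mu_\la)$ using $I(\mu_\la)>-\infty$.
\item Condition (d) is exactly hypothesis (E).
\end{itemize}
Hence $\Lambda_n(\la)\to G(\la)$ in $L^1(\bP)$ for a.e.\ $\la\in\Omega$.

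\textbf{Step 2: convergence of potentials on compacts.} Fix a compact $K\subset\Omega$. We want $\bE\int_K|\Lambda_n-G|\,d\si\to0$. By Fubini this equals $\int_K\bE|\Lambda_n(\la)-G(\la)|\,d\si(\la)$. The integrand tends to $0$ a.e. by Step 1, and it is bounded by $\sup_{n,\la\in K}\bE|\Lambda_n|+|G|$. The first term is finite by (B). For $G$, Fatou gives $|G(\la)|\le\liminf_n\bE|\Lambda_n(\la)|$, which is bounded on $K$ by (B). Dominated convergence on the finite measure space $(K,\si)$ then finishes this step.

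This is the step I expect to need the most care. Measurability of $(\omega,\la)\mapsto\Lambda_n$ has to be checked; it holds because $\Lambda_n=\frac1N\log|D_{A,B}(v)|$ is jointly measurable. One must also note that no global integrability is available off $\Omega$.

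\textbf{Step 3: identifying limit points.} By \eqref{eq:key}, $P^{\mu_n}=\Lambda_n-c_n$ with the random constant $c_n=\int\Lambda_n\,d\si+\tfrac12$, which we cannot control. We avoid needing it. Take a subsequence along which $\mu_n\to\nu$ weakly; by a diagonal argument over an exhaustion by compacts $K_j$, we may also assume $\Lambda_n\to G$ in $L^1(K_j,\si)$ almost surely for each $j$. This is legitimate for limit points in distribution, e.g.\ via Skorokhod representation on the compact space of probability measures.

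Lemma~\ref{lem:weakL1} gives $P^{\mu_n}\to P^\nu$ in $L^1(\si)$. Restricting to $K_j$, the difference $P^{\mu_n}-\Lambda_n=-c_n$ converges in $L^1(K_j)$ to $P^\nu-G$. A sequence of constants converging in $L^1$ of a set of positive measure converges to a constant $c$, so $P^\nu=G-c$ on $\Omega$. Taking Laplacians with Lemma~\ref{lem:pot_basic}(iii) yields $\nu|_\Omega=\si+\frac1{2\pi}\Delta G=\Psi|_\Omega$.

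\textbf{Step 4: global convergence.} If $\Psi(\Omega)=1$, then every limit point $\nu$ agrees with $\Psi$ on $\Omega$ and $\nu(\Omega)=\Psi(\Omega)=1$, so $\nu$ puts no mass on $\bCP^1\setminus\Omega$; this is the mass-conservation argument of Lemma~\ref{lem:mass}. Hence $\nu=\Psi$. Since all limit points are the same deterministic measure, $\mu_n\to\Psi$ weakly in probability. When $G$ depends on $Y$ only, the description of $\Psi$ is the computation already carried out in the proof of Proposition~\ref{prop:master}.
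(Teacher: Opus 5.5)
Your proof is correct; Steps 1, 2 and 4 are the paper's argument, and Step 3 takes a different route. The shared steps are Lemma~\ref{lem:usc_upgrade} at a.e.\ fixed $\la$, then Fubini plus dominated convergence via (B), then Lemma~\ref{lem:mass}. Step 3 disposes of the uncontrolled constant $c_n=\int\Lambda_n\,d\si+\frac12$ in a genuinely different way.

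The paper never touches $c_n$. For $f\in C_c^\infty(\Omega)$, Lemma~\ref{lem:pot_basic}(iii) and \eqref{eq:key} give $\int f\,d\mu_n=\int f\,d\si+\frac1{2\pi}\int(\Delta f)\,\Lambda_n\,dx\,dy$, because $\int\Delta f\,dx\,dy=0$ kills the constant. Hence $\int f\,d\mu_n\to\int f\,d\Psi$ in probability for each fixed $f$, straight from the $L^1_{\rm loc}(\Omega)$ convergence of Step 2.

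You instead use the forward half of Lemma~\ref{lem:weakL1} along subsequences. You then observe that $c_n$ is forced to converge, since it is the $L^1(K_j)$-limit of $P^{\mu_n}-\Lambda_n$ on a set of positive measure. This costs some subsequence bookkeeping, but no Skorokhod representation is needed if you extract in the right order:
\begin{enumerate}
\item first take a deterministic subsequence along which, almost surely, $\Lambda_n\to G$ in $L^1(K_j)$ for all $j$;
\item then, for each such $\omega$, take $\omega$-dependent further subsequences with $\mu_n(\omega)\to\nu$, using compactness of the space of probability measures on $\bCP^1$.
\end{enumerate}
As you wrote it, your first extraction may depend on $\omega$, and ``we may also assume almost surely'' then does not parse.

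In return your route gives more than the paper's. It identifies the potential $P^\nu=G-c$ itself on $\Omega$, not just its Laplacian. Moreover, when $\Psi(\Omega)=1$, the limit constant is forced to be the deterministic value $c=\int G\,d\si+\frac12$, because $\int P^\Psi\,d\si=-\frac12$. Thus $c_n\to c$ in probability, and together with Lemma~\ref{lem:weakL1} this gives $\Lambda_n\to G$ in $L^1(\si)$ globally, in probability. That is precisely the hypothesis under which Proposition~\ref{prop:master} is stated, whereas the paper's argument only produces $L^1_{\rm loc}(\Omega)$ convergence.
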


\begin{proof}
For almost every $\la\in\Omega$, hypotheses (G), (T), (E) and
Lemmas~\ref{lem:tail}, \ref{lem:upper}, \ref{lem:usc_upgrade} give
$\Lambda_n(\la)\to G(\la)$ in $L^1(\bP)$, exactly as in the proof of
Proposition~\ref{prop:gasenergy}. By (B), the functions
$\bE|\Lambda_n(\la)-G(\la)|$ are locally dominated, so dominated convergence
and Fubini give
$\bE\int_K|\Lambda_n-G|\,dx\,dy\to0$ for every compact $K\subset\Omega$, i.e.\
$\Lambda_n\to G$ in $L^1_{\rm loc}(\Omega)$ in probability.

Let now $f\in C_c^\infty(\Omega)$. By Lemma~\ref{lem:pot_basic}(iii) and the key
identity \eqref{eq:key},
\[
        \int f\,d\mu_n=\int f\,d\si+\frac1{2\pi}\int(\Delta f)\,P^{\mu_n}\,dx\,dy
        =\int f\,d\si+\frac1{2\pi}\int(\Delta f)\,\Lambda_n\,dx\,dy ,
\]
where the $\la$-independent constant of \eqref{eq:key} drops out because
$\int_{\bC}\Delta f\,dx\,dy=0$. Hence
$\int f\,d\mu_n\to\int f\,d\si+\frac1{2\pi}\int(\Delta f)G=\int f\,d\Psi$ in
probability. Thus any weak limit point of $\mu_n$ agrees with $\Psi$ on $\Omega$. If $\Psi(\Omega)=1$ then
Lemma~\ref{lem:mass} upgrades this to weak convergence on $\bCP^1$. The last claim
is Proposition~\ref{prop:master}.
\end{proof}

\begin{remark}\label{rem:comparison}
It is worth comparing the hypotheses of Theorem~\ref{th:criterion} with those used
previously. The uniform circular law (UCL) --- convergence of the spectral
distribution uniformly in $\la$ on compacta --- is not assumed: almost-everywhere
pointwise convergence is enough, since the uniformity required for the
potential-theoretic step is supplied by (B) and dominated convergence. The
logarithmic tail hypothesis (LT) has become Lemma~\ref{lem:tail}. The small
repulsion hypothesis (SR), a statement in probability, uniform in $\la$, about
near-diagonal pairs, has been replaced by (E), a statement about the convergence of
a single scalar expectation at almost every fixed $\la$; and (E) is a theorem, not a
hypothesis, whenever the relevant ensemble is a determinantal Coulomb gas with
bounded $\Delta V$ (Proposition~\ref{prop:gasenergy}). Finally, no hypothesis
excluding concentration near $\bRP^1$ appears: Lemma~\ref{lem:mass} makes it
superfluous.
\end{remark}

\section{The Hermitian case: an unconditional theorem}\label{sec:hermitian}

Recall that the $GUE_n$-ensemble is the probability distribution on the space
$\HH_n$ of Hermitian $n\times n$ matrices with density proportional to
$e^{-\frac n2\operatorname{tr}H^2}$, so that
$\bE H_{ii}^2=\bE|H_{ij}|^2=1/n$ and the spectrum fills $[-2,2]$ in the limit. As
in \cite{ShZa2} we record the exactly solvable case $n=2$.

\begin{THEOQ}[{\cite[Thm.~3]{ShZa2}}]\label{th:GUE2}
If $A,B$ are independent $GUE_2$ matrices, the distribution of level crossings is
\begin{equation}\label{eq:densGUE2}
        \cP_{GUE_2}(x,y)\,dx\,dy=\frac4\pi\,\frac{|y|\,dx\,dy}{(1+x^2+y^2)^3},
\end{equation}
i.e.\ it has density $2|Y|$ with respect to $\si$.
\end{THEOQ}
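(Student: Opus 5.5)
The plan is to compute the crossing law directly, since for $n=2$ the discriminant is an explicit quadratic in $\la$. I will not use the potential-theoretic machinery for this.

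\emph{Reduction to two random vectors in $\bR^3$.} For a $2\times2$ matrix $M$ one has $\Disc_t\det(M-tI)=(\operatorname{tr}M)^2-4\det M$. This quantity is unchanged by adding a scalar matrix, so only the traceless parts of $A$ and $B$ matter. Write these traceless parts in the Pauli basis $\sigma_1,\sigma_2,\sigma_3$:
\[
A_0=a\cdot\sigma,\qquad B_0=b\cdot\sigma,\qquad a=\bigl(\Re A_{12},\,-\Im A_{12},\,\tfrac12(A_{11}-A_{22})\bigr)\in\bR^3,
\]
and similarly for $b$. Since $(x\cdot\sigma)^2=(x\cdot x)I$ for the complex bilinear dot product, we get
\[
\Delta(\la)=4\,(a+\la b)\cdot(a+\la b)=4\bigl(|a|^2+2\la\,a\cdot b+\la^2|b|^2\bigr).
\]
Let $\theta\in[0,\pi]$ be the angle between $a$ and $b$, and $r=|a|/|b|$. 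The two crossings are then
\[
\la_\pm=-r\,e^{\mp i\theta}.
\]
They form a complex conjugate pair with modulus $r$ and argument $\varphi=\pi\mp\theta$.

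\emph{Distribution of $(r,\theta)$.} For $GUE_2$ with the paper's normalization, the three coordinates of $a$ are i.i.d.\ centered Gaussians with variance $1/4$:
\[
\bE\Bigl[\tfrac14(A_{11}-A_{22})^2\Bigr]=\tfrac14,\qquad \bE(\Re A_{12})^2=\bE(\Im A_{12})^2=\tfrac14.
\]
So $a$ and $b$ are independent isotropic Gaussian vectors in $\bR^3$. Isotropy makes $r$ and $\theta$ independent, with $\cos\theta$ uniform on $[-1,1]$, i.e.\ $\theta$ has density $\tfrac12\sin\theta$ on $[0,\pi]$. The ratio of two independent $\chi_3$ variables has density
\[
f(r)=c\,\frac{r^2}{(1+r^2)^3},\qquad r>0,
\]
because $r^2$ has an $F(3,3)$ law.

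\emph{Change of variables.} Consider the root $\la_+$ in the upper half-plane, $\varphi\in(0,\pi)$. Its joint density in polar coordinates is $f(r)\cdot\tfrac12\sin\varphi\,dr\,d\varphi$. Using $dx\,dy=r\,dr\,d\varphi$, this becomes
\[
\frac{c}{2}\,\frac{r\sin\varphi}{(1+r^2)^3}\,dx\,dy=\frac c2\,\frac{y}{(1+x^2+y^2)^3}\,dx\,dy .
\]
The empirical measure puts mass $\tfrac12$ on each of $\la_\pm$. The conjugate root contributes the mirror image in the lower half-plane, which produces the factor $|y|$.

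\emph{Normalization and the main check.} I would fix the constant by requiring total mass $1$. In polar coordinates,
\[
\int_{\bC}\frac{|y|}{(1+|\la|^2)^3}\,dx\,dy=4\int_0^\infty\frac{r^2}{(1+r^2)^3}\,dr=4\cdot\frac{\pi}{16}=\frac{\pi}{4},
\]
where $4=\int_0^{2\pi}|\sin\varphi|\,d\varphi$. This gives the constant $4/\pi$ in \eqref{eq:densGUE2}. Dividing by the density $\pi^{-1}(1+|\la|^2)^{-2}$ of $\si$ gives
\[
\frac{4|y|}{1+|\la|^2}=2|Y|,
\]
which is the stated density with respect to $\si$. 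The only step needing care is the variance bookkeeping in the second paragraph. If the three coordinates of $a$ did not have equal variance, isotropy and the independence of $r$ and $\theta$ would fail. Once that is verified, the remaining steps are routine calculus.
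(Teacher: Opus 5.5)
Your argument is correct and complete. The identity $\Delta(\la)=4(a+\la b)\cdot(a+\la b)$ holds, and all three Pauli coordinates have variance $\tfrac14$ under the normalization $\bE H_{ii}^2=\bE|H_{ij}|^2=\tfrac12$, so $a,b$ are independent isotropic Gaussians in $\bR^3$. The $F(3,3)$ law gives $f(r)=\frac{16}{\pi}\frac{r^2}{(1+r^2)^3}$, and the polar change of variables is right. The exceptional events $b=0$ (both crossings at $\la=\infty$) and $\sin\theta=0$ (a double real crossing) have probability zero, so almost surely $\mu_{A,B}=\frac12(\delta_{\la_+}+\delta_{\la_-})$.

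The paper does not prove this statement. It imports it from \cite[Thm.~3]{ShZa2} and uses it only as an input:
\begin{itemize}
\item to extract the exact $2\times2$ elliptic Ginibre energy profile $G_2$ in Example~\ref{ex:GUE2}, by inverting \eqref{eq:cdf};
\item as a check on the numerics ($\bE|Y|=2/3$).
\end{itemize}
So in the paper the logical direction is density $\Rightarrow$ energy profile. One could try to reverse it with the paper's own tools: compute $g_2=\bE\log|\zeta_1-\zeta_2|$ for the $2\times2$ elliptic Ginibre matrix and apply Remark~\ref{rem:finite_master}. However, that expectation is no easier to evaluate than your direct computation.

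Your route also gives more than the one-point intensity. It yields the full law of the crossing pair: a conjugate pair $-re^{\mp i\theta}$ with $r$ and $\theta$ independent. It also shows where the non-uniformity comes from. For $GUE_2$ the pair $(a,b)$ is a real $3\times2$ Gaussian matrix, invariant only under the real rotations of Remark~\ref{rem:SO2}, not the full $SU(2)$ action of Theorem~\ref{th:symmetry}. The uniform law of $\cos\theta$ then becomes the angular weight $\sin\varphi$, which is the source of the factor $|y|$.
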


The key structural point is that the normalized Hermitian pencil is exactly an
elliptic Ginibre matrix.

\begin{lemma}[Exact identification]\label{lem:identification}
Let $A_n,B_n$ be independent $GUE_n$ matrices and $\la\in\bC$. Then
$\widehat C_n(\la)=(A_n+\la B_n)/\sqrt{1+|\la|^2}$ is exactly a complex elliptic
Ginibre matrix with parameter $\tau(\la)=(1+\la^2)/(1+|\la|^2)$; that is,
$e^{-i\theta}\widehat C_n(\la)$, where $\tau=|\tau|e^{2i\theta}$, has the
distribution of
$\sqrt{\tfrac{1+|\tau|}2}\,H_1+i\sqrt{\tfrac{1-|\tau|}2}\,H_2$ with $H_1,H_2$
independent $GUE_n$. Moreover $|\tau(\la)|^2=1-Y(\la)^2$, so
$|\tau(\la)|=1$ exactly on $\bRP^1=\bR\cup\{\infty\}$.
\end{lemma}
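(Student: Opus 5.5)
The plan is to compare second moments. Both $\widehat C_n(\la)$ and the target matrix $\sqrt{\tfrac{1+|\tau|}2}\,H_1+i\sqrt{\tfrac{1-|\tau|}2}\,H_2$ are centered and jointly Gaussian in the real and imaginary parts of their entries. A centered real Gaussian vector is determined by its covariance. For a complex Gaussian vector that covariance is encoded by the two families of moments $\bE\, X_{ij}\overline{X_{kl}}$ and $\bE\, X_{ij}X_{kl}$. So it suffices to show that $e^{-i\theta}\widehat C_n(\la)$ and the target matrix have the same values of both.

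First I record the $GUE_n$ moments with the paper's normalization:
\[
\bE\, A_{ij}\overline{A_{kl}}=\tfrac1n\delta_{ik}\delta_{jl},\qquad \bE\, A_{ij}A_{kl}=\tfrac1n\delta_{il}\delta_{jk}.
\]
The second identity holds because $A_{kl}=\overline{A_{lk}}$. Independence of $A$ and $B$ kills all cross terms. Hence for $C=A+\la B$ I get
\[
\bE\, C_{ij}\overline{C_{kl}}=\tfrac1n(1+|\la|^2)\delta_{ik}\delta_{jl},\qquad \bE\, C_{ij}C_{kl}=\tfrac1n(1+\la^2)\delta_{il}\delta_{jk}.
\]
Dividing by $1+|\la|^2$, the matrix $\widehat C_n(\la)$ has moments $\tfrac1n\delta_{ik}\delta_{jl}$ and $\tfrac1n\tau\,\delta_{il}\delta_{jk}$. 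Multiplying by $e^{-i\theta}$ leaves the first moment unchanged and multiplies the second by $e^{-2i\theta}$, which turns $\tau$ into $|\tau|$.

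Next I compute the moments of $X=aH_1+ibH_2$ with $a^2=\tfrac{1+t}2$, $b^2=\tfrac{1-t}2$ and $t=|\tau|$. Using the $GUE$ moments above and independence of $H_1,H_2$:
\begin{itemize}
\item $\bE\, X_{ij}\overline{X_{kl}}=\tfrac1n(a^2+b^2)\delta_{ik}\delta_{jl}=\tfrac1n\delta_{ik}\delta_{jl}$;
\item $\bE\, X_{ij}X_{kl}=\tfrac1n(a^2-b^2)\delta_{il}\delta_{jk}=\tfrac1n t\,\delta_{il}\delta_{jk}$.
\end{itemize}
These match the moments of $e^{-i\theta}\widehat C_n(\la)$, so the two laws coincide.

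For the last claim, write $\la=x+iy$. Expanding gives
\[
|1+\la^2|^2=(1+x^2-y^2)^2+4x^2y^2=(1+|\la|^2)^2-4y^2 .
\]
Dividing by $(1+|\la|^2)^2$ yields $|\tau|^2=1-Y^2$. Therefore $|\tau|=1$ exactly when $y=0$. At $\la=\infty$ one has $\tau\to1$, or one argues directly in the homogeneous chart where $B$ alone is $GUE$, so the locus $|\tau|=1$ is $\bRP^1$.

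There is no real obstacle here. The only point needing care is the justification that two complex Gaussian matrices with equal Hermitian and bilinear covariances have the same law. This reduces to the real covariance of the vector of real and imaginary parts of the entries, which is a linear function of these two covariances.
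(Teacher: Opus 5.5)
Your proposal is correct and is essentially the paper's proof: both identify the law through the covariance and pseudo-covariance of the jointly Gaussian entries, match these with the moments of $\sqrt{(1+|\tau|)/2}\,H_1+i\sqrt{(1-|\tau|)/2}\,H_2$ after the phase rotation, and finish with the identity $|1+\la^2|^2=(1+|\la|^2)^2-4(\Im\la)^2$. One small slip: as $\la\to\infty$ one has $\tau(\la)\to e^{2i\arg\la}$, not $1$, so only $|\tau|\to1$; your alternative argument in the homogeneous chart, where the normalized matrix is $B$ itself, is the correct way to handle the point at infinity.
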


\begin{proof}
The entries of $\widehat C_n(\la)$ are jointly centered complex Gaussian, so their
law is determined by the covariance and pseudo-covariance. A direct computation
gives, for all $i,j$,
\[
        \bE|\widehat c_{ij}|^2=\frac1n,\qquad
        \bE\bigl[\widehat c_{ij}\widehat c_{ji}\bigr]
        =\frac{\bE|a_{ij}|^2+\la^2\,\bE|b_{ij}|^2}{1+|\la|^2}=\frac{\tau(\la)}n,
        \qquad \bE\bigl[\widehat c_{ij}^{\,2}\bigr]=0\ (i\ne j),
\]
all other second moments vanishing, which is precisely the covariance structure of
the elliptic Ginibre ensemble with parameter $\tau$; multiplying by the phase
$e^{-i\theta}$ makes the parameter real and rotates the spectrum. For the last
statement, writing $\la=x+iy$,
\[
        |\tau|^2=\frac{|1+\la^2|^2}{(1+|\la|^2)^2}
        =\frac{(1+x^2-y^2)^2+4x^2y^2}{(1+x^2+y^2)^2}
        =1-\frac{4y^2}{(1+x^2+y^2)^2}=1-Y^2 . \qedhere
\]
\end{proof}

The complex elliptic Ginibre ensemble with parameter $\tau\in[0,1)$ has joint
eigenvalue density of the form \eqref{eq:gas} with
\begin{equation}\label{eq:Vtau}
        V_\tau(z)=\frac{|z|^2-\tau\,\Re(z^2)}{1-\tau^2},
        \qquad
        \Delta V_\tau=\frac4{1-\tau^2},
\end{equation}
see \cite{SCS,FKS,FeinbergZee,ForresterBook}, and its empirical spectral
distribution converges to $\Ell_\tau$. Note that $\Ell_\tau$ is exactly the
equilibrium measure of $V_\tau$: its density $\frac1{4\pi}\Delta V_\tau=
\frac1{\pi(1-\tau^2)}$ on an ellipse of area $\pi(1-\tau^2)$ has total mass one.

\begin{proof}[Proof of Theorem~\ref{th:GUEmain}]
Let $\Omega:=\bC\setminus\bR$, so $\si(\bCP^1\setminus\Omega)=\si(\bRP^1)=0$. We
verify the hypotheses of Theorem~\ref{th:criterion}.

Fix a compact $K\subset\Omega$ and let $\delta>0$ be such that
$|\tau(\la)|\le1-\delta$ for $\la\in K$; this is possible since $|\tau|<1$ off
$\bR$ and $|\tau|$ is continuous. By Lemma~\ref{lem:identification} and
\eqref{eq:Vtau}, for $\la\in K$ the eigenvalues of $\widehat C_n(\la)$ form, after
a rotation, a determinantal gas \eqref{eq:gas} with
$\sup\Delta V_{|\tau|}=\frac4{1-|\tau|^2}\le\frac4{\delta}=:M_\delta$, uniformly on
$K$.

(G) holds with $\mu_\la=\Ell_{\tau(\la)}$: this is the elliptic law for Gaussian
elliptic ensembles, \cite{SCS,FKS,ForresterBook}. Its energy is finite, indeed
$I(\Ell_\tau)=-\tfrac14$ by Theorem~\ref{th:energy}.

(T) holds since $\bE\|\widehat C_n\|_{HS}^2=n$ and $T_n$ is a normalized sum of
squares of Gaussians, so $\bE T_n^2=O(1)$.

(E) and (B) hold by Proposition~\ref{prop:gasenergy}, applied with the uniform
bound $M_\delta$ and with $\supp\Ell_\tau$ contained in the disc of radius $2$:
$\bE\Lambda_n(\la)\to I(\Ell_{\tau(\la)})=-\tfrac14$, and
$\sup_n\sup_{\la\in K}\bE|\Lambda_n(\la)|<\infty$ by the last assertion of that
proposition, all bounds depending on $K$ only through $\delta$.

Hence $G(\la)=I(\Ell_{\tau(\la)})\equiv-\tfrac14$ is \emph{constant} on $\Omega$,
so $\Delta_\la G=0$ and $\Psi=\si$ on $\Omega$. Since $\si(\Omega)=1$,
Theorem~\ref{th:criterion} gives $\mu_n\to\si$ weakly in probability on $\bCP^1$.

The final statement is identical: if $A_n$ and $B_n$ are independent Gaussian
ensembles chosen so that $\widehat C_n(\la)$ is a complex elliptic Ginibre matrix
with some parameter $\tau(\la)$ of modulus $<1$ off a $\si$-null set, the same
proof applies verbatim, since only the values $I(\Ell_{\tau})=-\tfrac14$ enter. For
instance if $A_n$ is $GUE_n$ and $B_n$ is complex Ginibre then
$\bE[\widehat c_{ij}\widehat c_{ji}]=\frac1{n(1+|\la|^2)}$, so
$\tau(\la)=(1+|\la|^2)^{-1}\in(0,1]$ and $|\tau|<1$ off $\la=0$.
\end{proof}

\begin{corollary}\label{cor:conj}
The limit $\Psi=\lim_n\cP^{GUE}_n$ conjectured in \cite{ShZa2} exists and equals
the uniform measure on $\bCP^1$. In particular the Wigner-type and Girko-type
level-crossing laws coincide.
\end{corollary}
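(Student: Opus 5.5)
The plan is to read Corollary~\ref{cor:conj} as a restatement of Theorem~\ref{th:GUEmain}, once we say precisely what $\cP^{GUE}_n$ means. In \cite{ShZa2}, $\cP^{GUE}_n$ is the density of the expected level-crossing distribution of $A_n+\la B_n$ with $A_n,B_n$ independent $GUE_n$. So $\cP^{GUE}_n\,dx\,dy=\bE\mu_n$, and the conjectured limit $\Psi$ is the weak limit of $\bE\mu_n$ on $\bCP^1$. The first step is therefore to pass from the convergence in probability of the random measures $\mu_n$ to convergence of their means.

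For that step, fix a continuous $f$ on $\bCP^1$. Theorem~\ref{th:GUEmain} gives $\int f\,d\mu_n\to\int f\,d\si$ in probability. Since every $\mu_n$ is a probability measure on a compact space, $|\int f\,d\mu_n|\le\|f\|_\infty$ for every realization. Bounded convergence then gives
\[
\int f\,d\bE\mu_n=\bE\int f\,d\mu_n\to\int f\,d\si .
\]
Hence $\bE\mu_n\to\si$ weakly. This proves that the limit exists and identifies it as the uniform measure. No extra mass can escape to infinity or to $\bRP^1$, because $\bCP^1$ is compact and the mass count was already handled in the proof of Theorem~\ref{th:GUEmain} through Lemma~\ref{lem:mass}.

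For the second sentence of the corollary, note that the Girko-type law is $\si$ for every $n$ by Theorem~\ref{th:GE} (complex Ginibre pencils), so its limit is $\si$. We have just shown that the Wigner-type ($GUE$) limit is also $\si$, so the two laws coincide. The underlying reason is Theorem~\ref{th:energy}: the energy profile $G\equiv I(\Ell_{\tau(\la)})=-\tfrac14$ is constant, and by Proposition~\ref{prop:master} a constant $G$ forces the limit to be $\si$.

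The only point that needs care, as opposed to pure bookkeeping, is matching the notion of limit in the conjecture of \cite{ShZa2} (the mean density) with the in-probability statement of Theorem~\ref{th:GUEmain}. The bounded-convergence step above handles this.

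If one reads the conjecture as convergence of the densities $\cP^{GUE}_n$ themselves, the argument only yields weak convergence, and I would say so explicitly. For a stronger statement one could note that, by Remark~\ref{rem:SO2}, the densities depend on $|Y|$ alone. Then convergence of the cumulative function $\bE\mu_n\{|Y|\le y\}\to y$ for every $y$ follows from weak convergence, since the limit $\si$ puts no mass on any latitude.
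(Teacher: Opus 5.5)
Your proposal is correct and matches the paper. The paper states the corollary without a separate proof, as an immediate consequence of Theorem~\ref{th:GUEmain}, with the coincidence explained by Theorem~\ref{th:energy} and the Girko-type side given by Theorem~\ref{th:GE}. Your bounded-convergence step, which passes from $\mu_n\to\si$ in probability to $\bE\mu_n\to\si$ weakly, is exactly the detail the paper leaves implicit; it is needed because $\cP^{GUE}_n$ in \cite{ShZa2} is the density of the mean measure $\bE\mu_n$.
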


\begin{remark}[Why the numerics of \cite{ShZa2} suggested otherwise]\label{rem:rate}
The finite-$n$ Hermitian laws are genuinely non-uniform: $GUE_2$ has density
$2|Y|$.  Theorem~\ref{th:GUEmain} proves convergence but does not give a rate.
The key identity shows that the deviation of $\bE\mu_n$ from $\si$ is governed by
the oscillation of $\la\mapsto\bE\Lambda_n(\la)$, i.e.\ by the finite-$n$
correction to the logarithmic energy of the elliptic ensemble.  Coulomb-gas
free-energy heuristics suggest a correction of order $n^{-1}\log n$, and the
simulations in \S\ref{sec:numerics} are consistent with that scale, but we do not
prove it here.  The endpoint oscillation decreases from the exact value
$\tfrac12(1-\log2)=0.153426\ldots$ at $n=2$ to about $0.023$ at $n=64$, while
$\bE_{\mu_n}|Y|$ decreases from $2/3$ to about $0.535$.  Extrapolating only from
$n\le6$, as in Fig.~3 of \cite{ShZa2}, therefore naturally suggests a non-uniform
limit even though the true limit is uniform.
\end{remark}

\begin{remark}[Consistency at $n=2$]
Example~\ref{ex:GUE2} predicts, from the exact $GUE_2$ density
\eqref{eq:densGUE2}, that the $2\times2$ complex elliptic Ginibre ensemble has
energy profile satisfying $G_2(1)-G_2(0)=-\tfrac12(1-\log2)=-0.153426\ldots$.
Direct simulation of the $2\times2$ ensemble, using the closed form
$\zeta_1-\zeta_2=\sqrt{(\widehat c_{11}-\widehat c_{22})^2+4\widehat c_{12}
\widehat c_{21}}$ and $1.2\cdot10^8$ samples, gives
$-0.153371\pm0.000043$. This tests the key identity \eqref{eq:key}, the master
formula \eqref{eq:master} and the coordinate $Y$ simultaneously.
\end{remark}

\section{The real symmetry classes}\label{sec:real}

\subsection{What changes}
For real ensembles two things are different. First, the normalized pencil is no
longer determinantal, so Proposition~\ref{prop:gasenergy} does not apply and
hypothesis (E) of Theorem~\ref{th:criterion} must be imposed. Second, crossings can
be real, so $\bRP^1$ must be dealt with; by Lemma~\ref{lem:mass} this costs
nothing.

\subsection{Real i.i.d.\ pencils}
Here the situation is cleaner than one might expect: the correlation between
transposed off-diagonal entries vanishes identically, so the limiting spectral law
is circular even though an individual entry may have nonzero pseudo-variance.

\begin{lemma}\label{lem:iid_tau}
Let $A_n,B_n$ be independent random matrices with i.i.d.\ real (or complex) entries
of mean zero and variance $1/n$. Then, for every $\la\in\bC$, the entries of
$\widehat C_n(\la)$ are i.i.d.\ with mean zero and $\bE|\widehat c_{ij}|^2=1/n$,
and $\bE[\widehat c_{ij}\widehat c_{ji}]=0$ for $i\ne j$. Consequently the
empirical spectral distribution of $\widehat C_n(\la)$ converges to the circular
law $\Ell_0$ for every $\la$.
\end{lemma}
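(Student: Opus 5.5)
The plan is to compute the second-order structure of the entries of $\widehat C_n(\la)$ directly and then invoke the circular law for i.i.d.\ matrices.

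\textbf{Step 1: entries, independence, mean and variance.} Write $\widehat c_{ij}=(a_{ij}+\la b_{ij})/\sqrt{1+|\la|^2}$. The entry pairs $(a_{ij},b_{ij})$ are independent across $(i,j)$, and identically distributed, because $A_n$ and $B_n$ are independent with i.i.d.\ entries. Hence the $\widehat c_{ij}$ are i.i.d.\ with mean zero. Since $a_{ij}$ and $b_{ij}$ are independent and centered, the cross term vanishes and
\[
\bE|\widehat c_{ij}|^2=\frac{\bE|a_{ij}|^2+|\la|^2\bE|b_{ij}|^2}{1+|\la|^2}=\frac1n .
\]

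\textbf{Step 2: the transposed correlation.} For $i\ne j$ the entries $a_{ij}$ and $a_{ji}$ are independent and centered, and the same holds for $b$. Moreover the $a$'s are independent of the $b$'s. Expanding $\bE[\widehat c_{ij}\widehat c_{ji}]$ therefore gives four products of expectations of independent centered variables, all of which vanish.

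This is the contrast with Lemma~\ref{lem:identification}. There the Hermitian constraint $a_{ji}=\bar a_{ij}$ produced the term $\bE|a_{ij}|^2+\la^2\bE|b_{ij}|^2$, whereas here no such coupling exists. The pseudo-variance $\bE\widehat c_{ij}^{\,2}=(\bE a_{ij}^2+\la^2\bE b_{ij}^2)/(1+|\la|^2)$ may well be nonzero for real entries. It does not enter, because for i.i.d.\ matrices only the symmetric correlation $\bE[c_{ij}c_{ji}]$ affects the limit.

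\textbf{Step 3: the circular law.} Apply the circular law in its universal form (Tao--Vu, with Bai and G\"otze--Tikhomirov before them; see \cite{BC}). It states that for a matrix with i.i.d.\ complex centered entries of variance $1/n$, the ESD converges almost surely, hence in probability, to $\Ell_0$. This holds without assumptions on the pseudo-variance or on real versus complex structure, requiring only finite variance.

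\textbf{Main obstacle.} The one point needing care is confirming that the invoked theorem covers complex entries whose real and imaginary parts are correlated and unequal, which is the situation for real $\la$-combinations. Tao--Vu's theorem assumes only i.i.d.\ complex entries with unit variance, so it applies directly. The statement also covers real $\la$, where $\widehat C_n$ is a real i.i.d.\ matrix; this case is equally covered.
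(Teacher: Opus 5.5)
Your proposal is correct and follows the paper's proof. Independence of the pairs $(a_{ij},b_{ij})$ and $(a_{ji},b_{ji})$ gives $\bE[\widehat c_{ij}\widehat c_{ji}]=0$, and the Girko--Tao--Vu circular law needs only mean zero and unit variance of the complex atom variable, so it does not see the generally nonzero, $\la$-dependent pseudo-variance; your explicit variance computation fills in a line the paper states without detail.
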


\begin{proof}
For $i\ne j$ the pairs $(a_{ij},b_{ij})$ and $(a_{ji},b_{ji})$ are independent,
hence $\bE[\widehat c_{ij}\widehat c_{ji}]=\bE\widehat c_{ij}\,\bE\widehat c_{ji}=0$.
The entries $\widehat c_{ij}$ are i.i.d.\ complex with $\bE|\widehat c_{ij}|^2=1/n$.
Their pseudo-variance $\bE\widehat c_{ij}^{\,2}$ is generally nonzero and depends
on $\la$, but the circular law of Girko and Tao--Vu \cite{Gi2,TV} requires only
mean zero and finite variance and is insensitive to this scalar pseudo-variance.
\end{proof}

\begin{theorem}\label{th:iid}
Let $A_n,B_n$ be independent $n\times n$ random matrices with i.i.d.\ real or
complex entries of mean zero, variance $1/n$ and finite fourth moment. Assume (E), i.e.\ that for almost every $\la$,
\[
        \bE\,\Lambda_n(\la)\longrightarrow -\tfrac14 ,
\]
together with the local uniform bound (B). Then the empirical level-crossing
measures converge weakly in probability to the uniform measure $\si$ on $\bCP^1$.
\end{theorem}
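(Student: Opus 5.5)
The plan is to apply Theorem~\ref{th:criterion} with $\Omega:=\bC\setminus\bR$, so that $\si(\bCP^1\setminus\Omega)=\si(\bRP^1)=0$. The work then reduces to checking the four hypotheses (G), (T), (E), (B), identifying the profile $G$, and showing that $\Psi(\Omega)=1$.

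\textbf{(G).} Lemma~\ref{lem:iid_tau} gives that, for every fixed $\la$, the empirical spectral distribution of $\widehat C_n(\la)$ converges to the circular law $\Ell_0$. The entries of $\widehat C_n(\la)$ are i.i.d.\ with mean zero and variance $1/n$, so this holds by the Tao--Vu circular law, and the convergence is almost sure, hence in probability. The limit $\mu_\la=\Ell_0$ is compactly supported and, by Theorem~\ref{th:energy}, has $I(\Ell_0)=-\tfrac14>-\infty$.

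\textbf{(T).} Here $nT_n=\|\widehat C_n\|_{HS}^2=\sum_{i,j}|\widehat c_{ij}|^2$. Each entry satisfies $|\widehat c_{ij}|^2\le |a_{ij}|^2+|b_{ij}|^2$ by Cauchy--Schwarz, since $|a+\la b|^2\le(1+|\la|^2)(|a|^2+|b|^2)$. Writing $m_4$ for the common fourth moment, expanding the square of the sum and using independence gives
\[
\bE T_n^2\le \frac{1}{n^2}\Bigl(\sum_{(i,j)\neq(k,l)}\frac{4}{n^2}+n^2\cdot 4m_4\Bigr)\le 4+4n^2 m_4 .
\]
Since $m_4=O(n^{-2})$ under the variance-$1/n$ scaling with bounded normalized fourth moment, this is $O(1)$, and the bound is uniform in $\la\in\bC$. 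This is where the finite fourth moment hypothesis is used.

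\textbf{(E) and (B).} These are assumed, with $I(\mu_\la)=-\tfrac14$.

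\textbf{Conclusion.} Theorem~\ref{th:criterion} now gives $G\equiv-\tfrac14$ almost everywhere on $\Omega$. Hence $\Delta_\la G=0$ on $\Omega$ as a distribution, so $\Psi=\si$ on $\Omega$ and $\Psi(\Omega)=\si(\Omega)=1$. The criterion then yields $\mu_n\to\si$ weakly in probability on $\bCP^1$. Lemma~\ref{lem:mass} is what excludes any extra mass concentrating on the real crossings in $\bRP^1$.

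The only step needing care is (T); everything else is a direct citation. One also has to be careful that (G) is needed only for almost every $\la$, which is automatic here since it holds for every $\la$.
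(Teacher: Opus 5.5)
Your proof is correct and follows the paper's own argument: (G) comes from Lemma~\ref{lem:iid_tau} with $\mu_\la=\Ell_0$, (T) comes from the fourth-moment assumption, and (E), (B) are assumed, so $G\equiv-\tfrac14$, $\Psi=\si$, and Theorem~\ref{th:criterion} applies. The differences are cosmetic: the paper takes $\Omega=\bC$ rather than $\bC\setminus\bR$ (your choice only asks less of (B)), and in your (T) estimate the diagonal term is actually $4m_4$ rather than $4n^2m_4$, an overestimate that does no harm.
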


\begin{proof}
Hypotheses (G) and (T) of Theorem~\ref{th:criterion} hold by
Lemma~\ref{lem:iid_tau} and the fourth-moment assumption, with
$\mu_\la=\Ell_0$ for every $\la$, so $G\equiv I(\Ell_0)=-\tfrac14$ is constant and
$\Psi=\si$ on $\Omega=\bC$. Apply Theorem~\ref{th:criterion}.
\end{proof}

This is the promised strengthening of the conditional Girko-type statement: the
global spectral input is now a theorem rather than a hypothesis, uniformity in
$\la$ has been eliminated, the logarithmic tail hypothesis has become
Lemma~\ref{lem:tail}, and the hypothesis on $\bRP^1$ has disappeared. What remains,
(E), concerns a single scalar and is precisely the statement that the near-diagonal
pairs of eigenvalues do not contribute to the leading order of the logarithmic
energy. It follows from any two-point Wegner-type estimate: if
\begin{equation}\label{eq:wegner}
        \bE\,\#\bigl\{(i,j):i\ne j,\ |\zeta_i|,|\zeta_j|\le R,\
        |\zeta_i-\zeta_j|\le r\bigr\}\le C_R\,n^2r^2
        \quad\text{uniformly in }n,\ 0<r<1,
\end{equation}
then the dyadic decomposition of $\mathbbm1_{\{|z-w|\le\eps\}}|\log|z-w||$ gives
exactly the bound of Proposition~\ref{prop:neardiag}, and (E) follows. Estimate
\eqref{eq:wegner} is what local universality for non-Hermitian matrices
\cite{TVnonHerm,BYY} predicts, but we are not aware of a reference establishing it
in this generality.

\subsection{Real symmetric pencils}
Let $GOE_n$ be the Gaussian orthogonal ensemble on real symmetric matrices, with
off-diagonal entries of variance $1/n$ and diagonal entries of variance $2/n$. The
relevant exact result and conjecture from \cite{ShZa2} are:

\begin{THEOQ}[{\cite[Thm.~5]{ShZa2}}]\label{th:GOE2}
If $A,B$ are independent $GOE_2$ matrices, the level crossings are exactly uniform
on $\bCP^1$.
\end{THEOQ}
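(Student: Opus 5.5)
The plan is to compute everything explicitly, since for $n=2$ the discriminant factors into two linear forms whose roots are ratios of independent complex Gaussians. Write $A=\tfrac12(a_{11}+a_{22})I+A_0$ and $B=\tfrac12(b_{11}+b_{22})I+B_0$. The scalar parts do not affect eigenvalue differences, so the discriminant only sees $A_0,B_0$. By Lemma~\ref{lem:form}, for $n=2$ we have $D_{A,B}(\al,\be)=(\xi_1-\xi_2)^2$. The eigenvalue gap of a $2\times2$ real symmetric matrix $\begin{pmatrix}p&q\\ q&r\end{pmatrix}$ satisfies $(\xi_1-\xi_2)^2=(p-r)^2+4q^2$. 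Setting
\[
 x=(a_{11}-a_{22},\,2a_{12}),\qquad y=(b_{11}-b_{22},\,2b_{12}),
\]
we obtain
\[
 D_{A,B}(\al,\be)=(\al x_1+\be y_1)^2+(\al x_2+\be y_2)^2 .
\]

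Next I factor this sum of two squares over $\bC$. Put $z:=x_1+ix_2$ and $w:=y_1+iy_2$. Then
\[
 D_{A,B}(\al,\be)=(\al z+\be w)\,(\al\bar z+\be\bar w).
\]
So the two crossings are the points $[w:-z]$ and $[\bar w:-\bar z]$ of $\bCP^1$; in the affine chart they are $\la=-z/w$ and its complex conjugate. This is exactly the deterministic symmetry $\Delta_n(\bar\la)=\overline{\Delta_n(\la)}$ of Remark~\ref{rem:SO2}. With the $GOE$ normalization (off-diagonal variance $1/n$, diagonal variance $2/n$), the coordinates $a_{11}-a_{22}$ and $2a_{12}$ are independent centered real Gaussians. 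Both have variance $4/n$, because $\operatorname{Var}(a_{11}-a_{22})=2/n+2/n$ and $\operatorname{Var}(2a_{12})=4\cdot1/n$. Hence $z$ is a circularly symmetric complex Gaussian, and $w$ is an independent copy of it. This equality of variances is the one place where the precise $GOE$ diagonal variance is used, and I expect it to be the only point needing care. With a different diagonal scaling, $z$ would have nonzero pseudo-variance and uniformity would fail.

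Finally, $(z,w)$ is a standard Gaussian vector in $\bC^2$ up to scale, so its law is $U(2)$-invariant. Therefore $[w:-z]\in\bCP^1$ has the $SU(2)$-invariant law, which is $\si$. One can also check this directly: the ratio of two i.i.d.\ circular complex Gaussians has density $\pi^{-1}(1+|\la|^2)^{-2}$, which is \eqref{eq:unif}. Since $\si$ is invariant under $\la\mapsto\bar\la$, the conjugate crossing is also $\si$-distributed. The pencil is nondegenerate almost surely because $D_{A,B}\equiv0$ forces $z=w=0$. Hence $\bE\mu_{A,B}=\tfrac12(\si+\si)=\si$, which is the statement. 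The two crossings are not independent, since one is the conjugate of the other, but the statement concerns only the one-point intensity, so this does not matter.
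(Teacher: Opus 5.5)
Your proof is correct. The paper gives no argument for this statement; it is quoted from \cite[Thm.~5]{ShZa2}. So the useful comparison is with what the paper's own tools would give.

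Your individual steps all check out. The factorization $D_{A,B}(\al,\be)=(\al z+\be w)(\al\bar z+\be\bar w)$ is right. The check $\operatorname{Var}(a_{11}-a_{22})=\operatorname{Var}(2a_{12})=4/n$ is exactly where the $GOE$ ratio $2{:}1$ of diagonal to off-diagonal variance enters. The law of $-z/w$ is indeed $\si$.

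Theorem~\ref{th:symmetry} could not have been used here. A $GOE$ pencil is only $SO(2)$-invariant (Remark~\ref{rem:SO2}). Your computation also shows that the full crossing process $\{\la_0,\bar\la_0\}$ is not invariant under rotations that move $\bRP^1$. So uniformity here is a genuinely one-point phenomenon, and your factorization locates the hidden $U(2)$-invariance in each linear factor separately.

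The route closest to the paper would go through Remark~\ref{rem:finite_master}. For unit $v=(\al,\be)$, both $\al z+\be w$ and $\al\bar z+\be\bar w$ are circular Gaussians whose variance does not depend on $v$. Hence $\bE\Lambda_2=\tfrac12\bigl(\bE\log|\al z+\be w|+\bE\log|\al\bar z+\be\bar w|\bigr)$ is constant, and $\bE\mu=\si$ follows from $\bE\mu_n=\si+\frac1{2\pi}\Delta_\la\bE\Lambda_n$.

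Your direct computation is more elementary and gives slightly more: each of the two crossings is marginally $\si$-distributed, not just a randomly chosen one. Both arguments rest on the $n=2$ accident that the discriminant is a product of two conjugate linear forms with circular Gaussian coefficients. That is why neither extends to Conjecture~\ref{conj:GOE}.
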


\begin{conjecture}[{\cite{ShZa2}}]\label{conj:GOE}
The conclusion of Theorem~\ref{th:GOE2} holds for every $n\ge3$.
\end{conjecture}

Here the normalized pencil is a \emph{complex symmetric} Gaussian matrix:
$\widehat c_{ij}=\widehat c_{ji}$ with $\bE|\widehat c_{ij}|^2=1/n$ and
$\bE\widehat c_{ij}^{\,2}=\tau(\la)/n$.  This is a singular mirror-correlation
class and is not the determinantal elliptic Ginibre ensemble used in
\S\ref{sec:hermitian}.  To keep the reduction logically independent of a
particular global-law theorem, we isolate the required spectral input:
\begin{equation}\label{eq:Gsym}
 \tag{$G_{\rm sym}$}
 L_n(\la)\Longrightarrow \Ell_{\tau(\la)}
 \quad\text{in probability for a.e. }\la\in\bC\setminus\bR .
\end{equation}
Here a phase rotation is understood when $\tau(\la)$ is not real.  At the circular
point $\tau=0$, recent finite-$n$ analyses of the Gaussian complex-symmetric class
recover the circular law in the large-$n$ limit \cite{AFS,CM}.  The anisotropic
family required in \eqref{eq:Gsym}, which connects that point to the real-symmetric
endpoint, is a different spectral input and is kept explicit here.

\begin{theorem}\label{th:GOEcond}
Let $A_n,B_n$ be independent $GOE_n$ matrices.  Assume the global spectral input
\eqref{eq:Gsym}, the scalar energy condition
\[
 \bE\Lambda_n(\la)\longrightarrow I(\Ell_{\tau(\la)})=-\tfrac14
 \quad\text{for a.e. }\la\notin\bR,
\]
and the local uniform bound (B) on $\bC\setminus\bR$.  Then
$\mu_n\to\si$ weakly in probability on $\bCP^1$; in particular the asymptotic
form of Conjecture~\ref{conj:GOE} follows from these two spectral inputs.
\end{theorem}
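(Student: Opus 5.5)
This is a direct application of Theorem~\ref{th:criterion} with $\Omega:=\bC\setminus\bR$. Since $\si(\bRP^1)=0$, we have $\si(\bCP^1\setminus\Omega)=0$, so $\Omega$ is admissible. The plan is to check the four hypotheses (G), (T), (E), (B) on $\Omega$, identify the energy profile $G$, and observe that it is constant.

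For (G) we take $\mu_\la:=\Ell_{\tau(\la)}$, suitably phase-rotated when $\tau(\la)\notin\bR$. Then (G) is exactly the assumed input \eqref{eq:Gsym} for a.e.\ $\la\in\Omega$. This measure is compactly supported, lying in the disc of radius $2$, and Theorem~\ref{th:energy} gives $I(\mu_\la)=-\tfrac14>-\infty$. Since a rotation of the plane does not change logarithmic energy, the phase convention is harmless.

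For (T), the matrix $\widehat C_n(\la)$ has jointly Gaussian entries. Each entry has $\bE|\widehat c_{ij}|^2$ equal to $1/n$ off the diagonal and $2/n$ on the diagonal, because of the GOE diagonal variance. Hence $T_n=\frac1n\|\widehat C_n\|_{HS}^2$ is a normalized sum of $O(n^2)$ squared moduli of Gaussians with $\bE T_n=O(1)$. A second-moment computation, using Gaussian fourth moments and the fact that distinct symmetric pairs are independent, gives $\bE T_n^2=O(1)$. These bounds hold uniformly in $\la$, since the covariances of $\widehat c_{ij}$ are bounded uniformly in $\la$. The diagonal variance $2/n$ affects only the constant. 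Hypotheses (E) and (B) are precisely the scalar energy condition and the local uniform bound assumed in the statement.

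Theorem~\ref{th:criterion} then yields $\Lambda_n(\la)\to G(\la)=I(\Ell_{\tau(\la)})\equiv-\tfrac14$ for a.e.\ $\la\in\Omega$. Consequently $\Delta_\la G=0$ on $\Omega$, so $\Psi=\si$ there and $\Psi(\Omega)=\si(\Omega)=1$. The criterion, which invokes Lemma~\ref{lem:mass} internally, upgrades this to $\mu_n\to\si$ weakly in probability on all of $\bCP^1$. This step also rules out any hidden mass of crossings near $\bRP^1$.

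The remaining small point is to check that nondegeneracy holds almost surely. For absolutely continuous ensembles such as GOE pairs, $\{D\equiv0\}$ is a proper algebraic subset, so it is null. That it is proper follows from an explicit example, e.g.\ $A$ diagonal with distinct entries.

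The only step with real content is (T). It is routine, so I expect no genuine obstacle: all the difficulty has been isolated into the hypotheses \eqref{eq:Gsym}, (E) and (B).
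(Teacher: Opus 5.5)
Your proposal is correct and follows the paper's proof: apply Theorem~\ref{th:criterion} on $\Omega=\bC\setminus\bR$, take (G), (E) and (B) from the hypotheses and (T) from Gaussian moments, and get $G\equiv-\tfrac14$ from Theorem~\ref{th:energy}, so that $\Psi=\si$ has full mass on $\Omega$ and Lemma~\ref{lem:mass} yields convergence on all of $\bCP^1$. Your extra checks are details the paper leaves implicit: that the GOE diagonal variance $2/n$ only changes constants in (T), and that the pencil is almost surely nondegenerate by a diagonal example.
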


\begin{proof}
Hypothesis (T) is immediate from Gaussian moments.  Under \eqref{eq:Gsym}, (G) of
Theorem~\ref{th:criterion} holds with $\mu_\la=\Ell_{\tau(\la)}$; the assumed
energy convergence is (E), and (B) is part of the hypotheses.  By
Theorem~\ref{th:energy}, $G(\la)=I(\Ell_{\tau(\la)})\equiv-\tfrac14$ off
$\bRP^1$.  Theorem~\ref{th:criterion}, followed by Lemma~\ref{lem:mass}, gives the
claim.
\end{proof}

\begin{remark}[Reinterpretation of the pseudo-covariance conjecture]
Previous versions of this paper reduced Conjecture~\ref{conj:GOE} to the assertion that the
pseudo-covariance $\tau(\la)$ ``does not affect the leading $n^2$ term of the
log-discriminant''. Theorem~\ref{th:energy} explains the energy mechanism behind
that expectation: $\tau$ can deform the macroscopic spectrum from a disc towards a
segment while leaving the logarithmic energy of the elliptic law unchanged.  In
the complex-symmetric class, however, two logically separate inputs remain:
establish the global law \eqref{eq:Gsym}, and justify the interchange of limits in
$\Lambda_n\to I(\lim L_n)$ by excluding an anomalous contribution from very close
eigenvalue pairs.
\end{remark}

\begin{remark}[Numerical status]
The simulations of \S\ref{sec:numerics} indicate that for $GOE$ pencils the
profile $\bE\Lambda_n(\la)$ is flat, within the Monte Carlo resolution, for every
$n$ tested; see the middle panel of Figure~\ref{fig:energy}.  This is consistent
with, and would be stronger than, Conjecture~\ref{conj:GOE} if it held exactly.
It suggests that an additional finite-$n$ mechanism may be present, but we do not
know such a mechanism.
\end{remark}

\section{Diagonal deformation: exact endpoint and a conjectural crossover}\label{sec:transition}

It was observed in \cite{ShZa2} that enlarging the main diagonal of a real
symmetric pencil produces a strong departure from the spherical uniform law.  The
following discussion separates what can be proved directly from what is presently
supported only by numerics.  Write
\begin{equation}\label{eq:scaled}
        A_n^{(s)}=A_n^{\rm off}+s\,A_n^{\rm diag},
\end{equation}
where $A_n^{\rm off}$ and $A_n^{\rm diag}$ are the off-diagonal and diagonal parts
of a $GOE_n$ matrix, and similarly for $B_n^{(s)}$.  It is convenient in the
critical scaling window to set
\[
        s=\varsigma\sqrt{n/2},
\]
so that the diagonal entries have variance $\varsigma^2$, whereas the
off-diagonal entries retain variance $1/n$.

Remark~\ref{rem:SO2} applies for every $s$: the mean crossing measure is a function
of $|Y|$ alone.  Thus the apparent non-radiality in the affine $\la$-plane is an
axisymmetric dependence on the spherical height $Y$.

\subsection{The matrix-norm scale}
The diagonal deformation first becomes visible in operator norm near the
$\sqrt{n/\log n}$ scale.  This elementary fact is useful for orientation, but it
must not be confused with a theorem about nonnormal spectra or level crossings.

\begin{proposition}\label{prop:diag_opnorm}
Let $A_n$ be an unscaled $GOE_n$ matrix and let $s=s_n\ge0$.  Then almost surely
\[
 \bigl\|A_n^{(s_n)}-A_n\bigr\|_{\rm op}
 =O\!\left(|s_n-1|\sqrt{\frac{\log n}{n}}\right).
\]
Consequently this perturbation tends to zero in operator norm whenever
$|s_n-1|=o(\sqrt{n/\log n})$, in particular for every fixed $s$.
\end{proposition}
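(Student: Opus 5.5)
The plan is to reduce the statement to a bound on the maximum of $n$ Gaussian variables. By \eqref{eq:scaled},
\[
 A_n^{(s_n)}-A_n=(s_n-1)\,A_n^{\rm diag},
\]
and $A_n^{\rm diag}$ is diagonal, so its operator norm is its largest absolute diagonal entry:
\[
 \bigl\|A_n^{(s_n)}-A_n\bigr\|_{\rm op}=|s_n-1|\max_{1\le i\le n}|a_{ii}|.
\]
Here $s_n$ is deterministic. So the claim reduces to showing that almost surely $\max_i|a_{ii}|=O(\sqrt{\log n/n})$, with a random implied constant.

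For the $GOE_n$ normalization used in \S\ref{sec:real}, the $a_{ii}$ are independent $N(0,2/n)$. Write $a_{ii}=\sqrt{2/n}\,g_i$ with $g_i$ standard normal. The union bound and the Gaussian tail give
\[
 \bP\bigl(\max_i|g_i|>t\bigr)\le 2n e^{-t^2/2}.
\]
Take $t=\sqrt{6\log n}$. The right-hand side is then $2n^{-2}$, which is summable in $n$. By Borel--Cantelli, almost surely $\max_i|g_i|\le\sqrt{6\log n}$ for all large $n$. Hence almost surely
\[
 \max_i|a_{ii}|\le\sqrt{12\log n/n}
\]
eventually. Multiplying by $|s_n-1|$ gives the stated bound.

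The one point that needs care is the probability space across $n$: the statement does not specify how the matrices for different $n$ are coupled. The Borel--Cantelli step uses only the marginal law at each $n$, so it is valid for an arbitrary coupling, and the argument should say so explicitly.

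The consequence then follows directly. If $|s_n-1|=o(\sqrt{n/\log n})$, then $|s_n-1|\sqrt{\log n/n}\to0$, so the perturbation tends to zero almost surely. Fixed $s$ is a special case.

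I do not expect a real obstacle. The only things to check are that the union-bound exponent is chosen large enough to make the probabilities summable, and that the a.s. bound is stated as eventual in $n$.
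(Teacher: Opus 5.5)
Your proposal is correct and is the paper's argument: the difference is $(s_n-1)A_n^{\rm diag}$, whose operator norm is $|s_n-1|\sqrt{2/n}\max_i|g_i|$, and a Gaussian maximum bound gives the rate. You make explicit the ``standard Gaussian maximum estimate'' the paper only cites, via a union bound at $t=\sqrt{6\log n}$ and the first Borel--Cantelli lemma, and your remark that this needs no coupling across $n$ is accurate.
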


\begin{proof}
The difference is diagonal with entries
$(s_n-1)\sqrt{2/n}\,g_i$, where the $g_i$ are standard real Gaussians.  Hence
\[
 \bigl\|A_n^{(s_n)}-A_n\bigr\|_{\rm op}
 =|s_n-1|\sqrt{\frac2n}\max_{1\le i\le n}|g_i|
 =O\!\left(|s_n-1|\sqrt{\frac{\log n}{n}}\right)
\]
almost surely, by the standard Gaussian maximum estimate.
\end{proof}

For Hermitian matrices such a bound immediately controls eigenvalues.  For the
complex matrices $A_n^{(s)}+\la B_n^{(s)}$ with $\Im\la\ne0$, however, the problem
is nonnormal; operator-norm closeness alone does not justify stability of the
empirical spectrum, the logarithmic energy, or the crossing divisor.  We therefore
do not infer a level-crossing theorem from Proposition~\ref{prop:diag_opnorm}.

\subsection{The exactly diagonal Gaussian model}
At the opposite endpoint the crossing law can be computed without asymptotics.
Let $\si_{\bRP^1}$ denote the rotation-invariant probability measure on the real
projective line; in the affine coordinate it is the standard Cauchy law
$d\si_{\bRP^1}(x)=\pi^{-1}(1+x^2)^{-1}dx$.

\begin{theorem}[Diagonal endpoint]\label{th:diag_endpoint}
Let
\[
 A_n=\operatorname{diag}(a_1,\ldots,a_n),\qquad
 B_n=\operatorname{diag}(b_1,\ldots,b_n),
\]
where $(a_i,b_i)$ are i.i.d.\ standard real Gaussian vectors.  Then, for every
$n\ge2$,
\[
        \bE\mu_n=\si_{\bRP^1},
\]
and $\mu_n\Rightarrow\si_{\bRP^1}$ almost surely as $n\to\infty$.  For every
$n$, the mean logarithmic-energy profile $g_n(Y)=\bE\Lambda_n(Y)$ is, up to an
additive constant independent of $Y$,
\begin{equation}\label{eq:Gdiag}
        G_{\rm diag}(Y)=\frac12\log\frac{1+|Y|}{2}.
\end{equation}
Its distributional master formula is exactly the uniform measure on the equator
$\bRP^1\subset\bCP^1$.
\end{theorem}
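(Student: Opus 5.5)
The argument is direct: in the diagonal model the crossings are explicit, so each assertion reduces to a computation on pairs $(a_i,b_i),(a_j,b_j)$. The eigenvalues of $\al A_n+\be B_n$ are $\al a_i+\be b_i$. By Lemma~\ref{lem:form},
\[
D_{A,B}(\al,\be)=\prod_{i<j}\bigl(\al(a_i-a_j)+\be(b_i-b_j)\bigr)^2 .
\]
Each pair $i<j$ therefore contributes a double zero at the real point $q_{ij}=[\,b_j-b_i:\,a_i-a_j\,]\in\bRP^1$, that is, at $\la_{ij}=-(a_i-a_j)/(b_i-b_j)$. Hence
\[
\mu_n=\binom n2^{-1}\sum_{i<j}\delta_{q_{ij}} .
\]
The vector $(a_i-a_j,b_i-b_j)$ is a centered real Gaussian vector in $\bR^2$ with covariance $2I$. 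It is therefore $SO(2)$-invariant, so its direction is uniform and $\la_{ij}$ is standard Cauchy. Thus each $q_{ij}$ has law $\si_{\bRP^1}$, and linearity gives $\bE\mu_n=\si_{\bRP^1}$ for every $n\ge2$.

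For the almost sure convergence, I view $\int f\,d\mu_n$ as a $U$-statistic of order $2$ in the i.i.d.\ points $(a_i,b_i)$, with bounded kernel $h_f\bigl((a,b),(a',b')\bigr)=f\bigl([b'-b:a-a']\bigr)$. The kernel is defined off a null set. Hoeffding's strong law for $U$-statistics gives $\int f\,d\mu_n\to\int f\,d\si_{\bRP^1}$ almost surely for each fixed $f\in C(\bCP^1)$. Running this over a countable dense family of $f$ and using compactness of $\bCP^1$ yields $\mu_n\Rightarrow\si_{\bRP^1}$ almost surely.

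For the energy profile, normalize $\zeta_i(\la)=(a_i+\la b_i)/\sqrt{1+|\la|^2}$. By exchangeability,
\[
\bE\Lambda_n(\la)=\bE\log|\zeta_1-\zeta_2|,
\]
independently of $n$. Write $\zeta_1-\zeta_2=\sqrt2\,\xi$ with $\xi=(a+\la b)/\sqrt{1+|\la|^2}$ and $(a,b)$ standard real Gaussian. Then $\xi$ is centered complex Gaussian with $\bE|\xi|^2=1$ and $\bE\xi^2=\tau(\la)$, exactly as in Lemma~\ref{lem:identification}. After the phase rotation $e^{-i\theta}$, which does not change $|\cdot|$, $\xi$ has law $\gamma_{|\tau|}$.

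Now take $Z,W$ i.i.d.\ with law $\gamma_{|\tau|}$. Then $Z-W\overset{d}{=}\sqrt2\,\xi'$ with $\xi'\sim\gamma_{|\tau|}$, so $I(\gamma_{|\tau|})=\tfrac12\log2+\bE\log|\xi|$. Combining this with Lemma~\ref{lem:gaussenergy} and $\sqrt{1-|\tau|^2}=|Y|$ from Lemma~\ref{lem:identification}, I get
\[
g_n(Y)=\mathrm{const}+\tfrac12\log\frac{1+|Y|}2 .
\]
All these expectations are finite because $\log|\xi|$ is integrable for a nondegenerate Gaussian. The only subtle point is the real axis $Y=0$, where $\xi$ is real; there $\gamma_1$ is still a real Gaussian with integrable $\log$, so the formula holds there too.

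Finally, apply the distributional master formula in the finite-$n$ form of Remark~\ref{rem:finite_master}. With $G=G_{\rm diag}$ one has
\[
(1-Y^2)G'(Y)=\tfrac12\operatorname{sign}(Y)(1-|Y|).
\]
On each side of $0$ its derivative is $-\tfrac12\,dY$, which cancels the $\tfrac12\,dY$ in \eqref{eq:masterdist}. At $0$ it jumps from $-\tfrac12$ to $+\tfrac12$, which produces an atom of mass $1$. Hence the $Y$-marginal is $\delta_0$. By the axial $SO(2)$-invariance of Remark~\ref{rem:SO2}, this atom is spread uniformly along the latitude $Y=0$, i.e.\ along $\bRP^1$. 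This reproduces $\si_{\bRP^1}$, consistent with the first part.

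I expect the main care to be bookkeeping rather than depth. One point is verifying the multiplicity-$2$ zeros, so that $N=n(n-1)$ matches $\binom n2$ distinct pairs. The other is checking that Remark~\ref{rem:finite_master} applies to a profile with a kink, where the distributional reading is essential.
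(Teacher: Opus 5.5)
Your proposal is correct and follows essentially the same route as the paper: explicit real crossings $\la_{ij}$ with Cauchy law, the strong law for $U$-statistics, reduction of $g_n$ to the energy of an elliptic Gaussian cloud via Lemma~\ref{lem:gaussenergy}, and the unit jump of $(1-Y^2)G'$ at $Y=0$ in the distributional master formula. Your identification $g_n(\la)=I(\gamma_{|\tau(\la)|})$ through $Z-W\overset{d}{=}\sqrt2\,\xi'$ simply spells out the paper's remark that each difference of normalized diagonal entries is a fixed multiple of an elliptic Gaussian variable.
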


\begin{proof}
For $i<j$ the two diagonal eigenvalues cross at
\[
        \la_{ij}=-\frac{a_i-a_j}{b_i-b_j}\in\bR.
\]
The discriminant contains each factor with exponent two, so
\[
        \mu_n=\binom n2^{-1}\sum_{i<j}\delta_{\la_{ij}}.
\]
The numerator and denominator in $\la_{ij}$ are independent centered Gaussians
with the same variance; hence $\la_{ij}$ has the standard Cauchy distribution.
This proves the finite-$n$ identity for $\bE\mu_n$.  The almost-sure convergence is
the strong law for $U$-statistics applied to bounded continuous test functions on
the compact space $\bRP^1$.

For the energy statement, each difference of two normalized diagonal entries has
the law of a fixed scalar multiple of an elliptic Gaussian variable.  Therefore
the expectation of every summand in $\Lambda_n$ is the same.  Lemma~\ref{lem:gaussenergy}
and $\sqrt{1-|\tau(\la)|^2}=|Y|$ give \eqref{eq:Gdiag}, modulo a
$Y$-independent constant, for every $n$.
For $Y\ne0$ one has
\[
 (1-Y^2)G_{\rm diag}'(Y)=\frac12\operatorname{sgn}(Y)(1-|Y|).
\]
Its jump at $Y=0$ is $1$, while its ordinary derivative away from $0$ is
$-1/2$.  Thus the absolutely continuous part in \eqref{eq:master} cancels, and the
unit jump contributes total mass one on the equator, exactly as claimed.
\end{proof}

\subsection{Critical scaling}
The exact diagonal model identifies the large-diagonal endpoint, but it does not
by itself prove what happens for a fixed positive $\varsigma$ in the scaling
$s=\varsigma\sqrt{n/2}$.  We record the interpretation suggested by
Figure~\ref{fig:diag} as a conjecture.

\begin{conjecture}[Diagonal crossover]\label{conj:transition}
For each $\varsigma\ge0$, the pencils with
$s=\varsigma\sqrt{n/2}$ satisfy
$\bE\mu_n\Rightarrow\Psi_\varsigma$, and their mean energy profiles
$g_n(Y)=\bE\Lambda_n(Y)$ converge to a deterministic profile $G_\varsigma(Y)$.
The family is axisymmetric, $\Psi_0=\si$, and for every $\varsigma>0$ the limiting
mean law is non-uniform.  Moreover
\[
        \Psi_\varsigma\Longrightarrow\si_{\bRP^1}
        \qquad (\varsigma\to\infty),
\]
and, modulo additive constants,
$G_\varsigma(Y)\to\tfrac12\log((1+|Y|)/2)$.  The deviation from uniformity is
monotone in $\varsigma$ in the sense suggested by the distribution functions in
Figure~\ref{fig:diag}.
\end{conjecture}

The numerical data show a smooth interpolation between the two endpoint profiles.
They do not constitute a proof of existence, non-uniformity for every positive
$\varsigma$, or monotonicity.  Establishing those statements requires a global law
and logarithmic-energy control for the diagonally deformed complex-symmetric
ensemble.

\section{Numerical experiments}\label{sec:numerics}

\subsection{Method}
Estimating $\bE\mu_n$ by locating the $n(n-1)$ roots of $\Delta_n$ is numerically
unstable beyond small $n$. The finite-$n$ identity of Remark~\ref{rem:finite_master} suggests a much better
route: it suffices to estimate the scalar function
\[
        g_n(Y):=\bE\,\Lambda_n(\la)
        =\frac1{n(n-1)}\,\bE\sum_{i\ne j}\log|\zeta_i(\la)-\zeta_j(\la)| ,
\]
which requires only eigenvalues, never roots of a high-degree polynomial. We
sample $\la=it$ with $t\in[0,1]$, so that $Y=2t/(1+t^2)$ sweeps $[0,1]$, and use
the same realizations of $(A_n,B_n)$ for all $t$ (common random numbers), which
makes the estimates of the \emph{differences} $g_n(Y)-g_n(0)$ --- the only thing
the answer depends on --- far more accurate than the individual values. The
distribution function of $|Y|$ is then recovered from \eqref{eq:cdf}, and
$\bE_{\mu_n}|Y|$ from the equivalent integrated form
\[
        \bE_{\mu_n}|Y|=\tfrac12+2g_n(0)-4\int_0^1Y\,g_n(Y)\,dY ,
\]
which involves no differentiation at all. As a check, for $n=2$ this reproduces
$\bE|Y|=2/3$, in agreement with Theorem~\ref{th:GUE2}.

\subsection{Results}
Figure~\ref{fig:energy} shows $g_n(Y)-g_n(0)$ for the three symmetry classes.
For $GUE$ pencils the profile is visibly nonconstant at small $n$ and flattens as
$n$ grows, in accordance with Theorem~\ref{th:GUEmain}; for $GOE$ pencils it is
flat within Monte Carlo resolution at every $n$ tested; for real i.i.d.\ pencils it
flattens from the other side. Figure~\ref{fig:rate} shows the observed decay: the
real i.i.d.\ oscillation is close to $n^{-1}$ over the simulated range, while the
$GUE$ oscillation decays more slowly and is consistent with the heuristic
$O(n^{-1}\log n)$ scale discussed in Remark~\ref{rem:rate}.  No convergence rate is
claimed from these fits.

\begin{figure}[ht]
\begin{center}
\includegraphics[width=\textwidth]{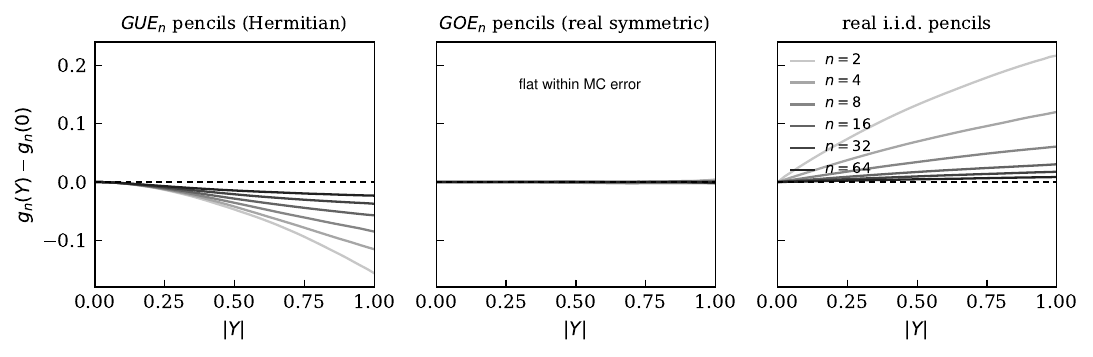}
\end{center}
\caption{The energy profile $g_n(Y)-g_n(0)$, whose constancy is equivalent to the
uniform law. Left: $GUE_n$ pencils, flattening as $n$ grows
(Theorem~\ref{th:GUEmain}). Middle: $GOE_n$ pencils, flat within Monte Carlo resolution at every tested $n$.
Right: real
i.i.d.\ pencils. Shades from light to dark correspond to $n=2,4,8,16,32,64$.}
\label{fig:energy}
\end{figure}

\begin{figure}[ht]
\begin{center}
\includegraphics[width=0.5\textwidth]{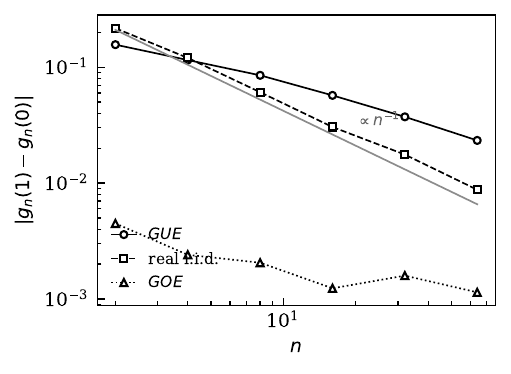}
\end{center}
\caption{Decay of the oscillation $|g_n(1)-g_n(0)|$ of the energy profile. For real
i.i.d.\ pencils the data are close to an $n^{-1}$ slope; for $GUE$ the decay is
slower; for $GOE$ the oscillation stays at the Monte Carlo noise floor.}
\label{fig:rate}
\end{figure}

Figure~\ref{fig:cdf} shows the distribution function of $|Y|$ under $\bE\mu_n$ for
$GUE$ pencils, computed from \eqref{eq:cdf}. At $n=2$ it agrees with the exact
answer $y^2$ of Theorem~\ref{th:GUE2}; as $n$ grows the curves migrate
monotonically towards the uniform law $y$. The mean $\bE_{\mu_n}|Y|$ takes the
values $0.667,0.635,0.605,0.577,0.553,0.535$ for $n=2,4,8,16,32,64$, against the
limit $1/2$.

\begin{figure}[ht]
\begin{center}
\includegraphics[width=0.52\textwidth]{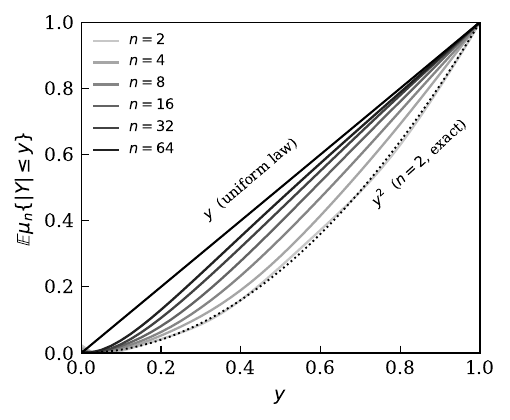}
\end{center}
\caption{Distribution function of $|Y|$ under $\bE\mu_n$ for $GUE_n$ pencils,
$n=2,4,8,16,32,64$ (light to dark), between the exact $n=2$ answer $y^2$ (dotted)
and the uniform law $y$ (solid).}
\label{fig:cdf}
\end{figure}

Figure~\ref{fig:diag} gives the numerical evidence behind
Conjecture~\ref{conj:transition}: as the diagonal scale $\varsigma$ increases, the
distribution of $|Y|$ moves from the uniform law towards concentration at $Y=0$,
and $\bE_{\mu_n}|Y|$ decreases from $1/2$ towards $0$.  At $n=48$ the measured
oscillations $g(1)-g(0)$ are $0.000,0.006,0.055,0.185,0.276$ for
$\varsigma=0,0.5,1,2,4$.  The exactly diagonal endpoint has oscillation
$\tfrac12\log2=0.3466\ldots$ by Theorem~\ref{th:diag_endpoint}.

\begin{figure}[ht]
\begin{center}
\includegraphics[width=\textwidth]{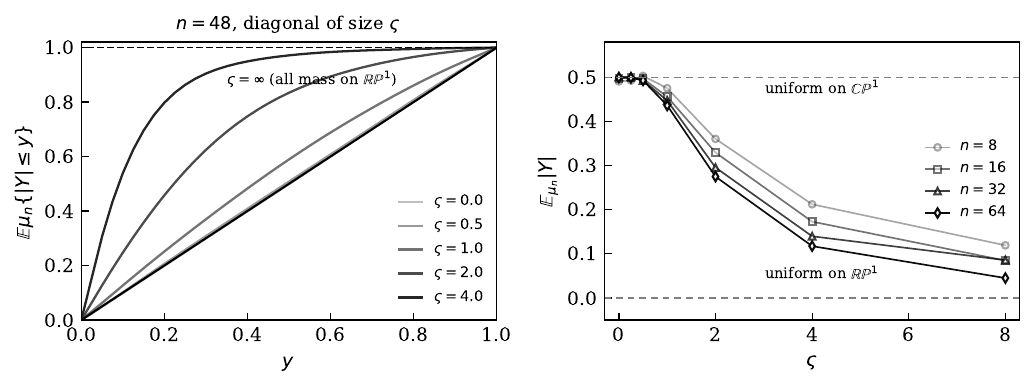}
\end{center}
\caption{Scaling the diagonal. Left: distribution function of $|Y|$ at $n=48$ for
diagonal scale $\varsigma=0,0.5,1,2,4$, with the two extremes --- the uniform law
on $\bCP^1$ (straight line) and the law concentrated on $\bRP^1$ (dashed) ---
indicated. Right: $\bE_{\mu_n}|Y|$ as a function of $\varsigma$ for
$n=8,16,32,64$, interpolating between $1/2$ and $0$.}
\label{fig:diag}
\end{figure}

We also verified Theorem~\ref{th:energy} directly by Monte Carlo integration over
the ellipse: the estimates of $I(\Ell_\tau)$ for
$\tau=0,0.2,\dots,0.9,0.99,1$ all agree with $-1/4$ to within $10^{-3}$.

\section{Open problems}\label{sec:problems}

\begin{enumerate}
\item Prove the scalar energy condition (E) for real i.i.d.\ pencils, for example
through a two-point estimate of the form \eqref{eq:wegner}.  This would make
Theorem~\ref{th:iid} unconditional under the stated moment assumptions.

\item For the $GOE$ pencil, establish the global complex-symmetric elliptic law
\eqref{eq:Gsym} in the precise form needed here and prove the corresponding energy
condition (E).  Separately, prove Conjecture~\ref{conj:GOE} in its exact finite-$n$
form.  The numerical flatness of $\bE\Lambda_n(\la)$ suggests that a hidden
finite-$n$ identity may exist.

\item Prove Conjecture~\ref{conj:transition}: establish existence of the critical
profiles $G_\varsigma$, determine them explicitly or implicitly, and decide whether
the crossover from $\si$ to $\si_{\bRP^1}$ is monotone in $\varsigma$.

\item Establish a quantitative rate in Theorem~\ref{th:GUEmain}.  The heuristic of
Remark~\ref{rem:rate} suggests an $O(n^{-1}\log n)$ first correction in suitable
weak norms, with an explicit profile in $Y$.

\item Beyond the intensity measure: describe the fluctuations of $\mu_n$ around
$\si$, i.e.\ the limit of $N(\int f\,d\mu_n-\int f\,d\si)$. By \eqref{eq:key} this
is a question about fluctuations of logarithmic energies of elliptic ensembles.

\item Characterize the axisymmetric measures on $\bCP^1$ that arise as limiting
level-crossing laws of Gaussian pencils.  In the smooth case the master formula
gives measures of the form
$\bigl(1+2\partial_Y[(1-Y^2)G'(Y)]\bigr)d\si$, while nonsmooth profiles may also
produce singular latitude masses.  Which energy profiles $G$ are admissible?

\item Higher-parameter families: for a generic two-parameter pencil
$\alpha A+\beta B+\gamma C$, the double-eigenvalue discriminant is a plane curve in
$\bCP^2$, while triple degeneracies occur only at a higher-codimension (generically
finite) subset.  Is there an analogue of Theorem~\ref{th:symmetry} for the random
discriminant curve and for its singular points?
\end{enumerate}

\medskip
\begin{ack}
The author thanks Professors Konstantin Zarembo and Andr\'e Galligo for numerous
discussions. During preparation of this manuscript, OpenAI ChatGPT (GPT-5.6 Sol)
was used for language editing, structural revision, and consistency checking. The
author critically reviewed the resulting text and remains responsible for all
mathematical statements, numerical claims, and conclusions.
\end{ack}

\section*{Data availability}
The numerical data supporting the figures and the simulation code are available
from the author upon request.


\begin{thebibliography}{99}

\bibitem[AGZ]{AGZ} Anderson, G.~W., Guionnet, A., Zeitouni, O.,
\emph{An Introduction to Random Matrices}. Cambridge Studies in Advanced
Mathematics 118, Cambridge University Press, Cambridge, 2010.

\bibitem[BC]{BC} Bordenave, C., Chafa\"i, D., Around the circular law,
\emph{Probab. Surv.} \textbf{9} (2012), 1--89.

\bibitem[BYY]{BYY} Bourgade, P., Yau, H.-T., Yin, J., Local circular law for
random matrices, \emph{Probab. Theory Related Fields} \textbf{159} (2014),
545--595.

\bibitem[FZ]{FeinbergZee} Feinberg, J., Zee, A., Non-Hermitian random matrix
theory: method of Hermitian reduction, \emph{Nuclear Phys. B} \textbf{504} (1997),
579--608.

\bibitem[Fo]{ForresterBook} Forrester, P.~J., \emph{Log-Gases and Random
Matrices}. London Mathematical Society Monographs 34, Princeton University Press,
Princeton, NJ, 2010.

\bibitem[FKS]{FKS} Fyodorov, Y.~V., Khoruzhenko, B.~A., Sommers, H.-J., Almost
Hermitian random matrices: crossover from Wigner--Dyson to Ginibre eigenvalue
statistics, \emph{Phys. Rev. Lett.} \textbf{79} (1997), 557--560.

\bibitem[GP]{GP} Galligo, A., Poteaux, A., Computing monodromy via continuation
methods on random Riemann surfaces, \emph{Theor. Comp. Sci.} \textbf{412} (2011),
1492--1507.

\bibitem[Gi1]{Gi1} Girko, V.~L., The $V$-transformation, \emph{Dokl. Akad. Nauk
Ukr. SSR, Ser. Mat.} \textbf{3} (1982), 5--6 (in Russian).

\bibitem[Gi2]{Gi2} Girko, V.~L., The circular law, \emph{Theory Probab. Appl.}
\textbf{29} (1984), 694--706.

\bibitem[Ka]{Ka} Kato, T., \emph{Perturbation Theory for Linear Operators}.
Classics in Mathematics, Springer-Verlag, Berlin, 1995.

\bibitem[Me]{Me} Mehta, M.~L., \emph{Random Matrices}, third edition. Pure and
Applied Mathematics 142, Elsevier/Academic Press, Amsterdam, 2004.

\bibitem[AFS]{AFS} Akemann, G., Fyodorov, Y.~V., Savin, D.~V.,
Spectral density and eigenvector nonorthogonality in complex symmetric random
matrices, arXiv:2511.21643 (2025).

\bibitem[CM]{CM} Crumpton, M.~J., Mezzadri, F., Interpolating non-Hermitian
universality classes A and $AI^\dagger$: eigenvalue density and transition regime,
arXiv:2606.03447 (2026).

\bibitem[ShZa]{ShZa1} Shapiro, B., Zarembo, K., Level crossing in random matrices.
I. Random perturbation of a fixed matrix, \emph{J. Phys. A} \textbf{50} (2017),
no.~4, 045201.

\bibitem[GrShZa]{ShZa2} Gr\o sfjeld, T., Shapiro, B., Zarembo, K., On level
crossing in random matrix pencils. II. Random perturbation of a random matrix,
\emph{J. Phys. A} \textbf{52} (2019), no.~21, 214001.

\bibitem[SCS]{SCS} Sommers, H.-J., Crisanti, A., Sompolinsky, H., Stein, Y.,
Spectrum of large random asymmetric matrices, \emph{Phys. Rev. Lett.} \textbf{60}
(1988), 1895--1898.

\bibitem[TV]{TV} Tao, T., Vu, V., Random matrices: universality of ESDs and the
circular law, \emph{Ann. Probab.} \textbf{38} (2010), 2023--2065.

\bibitem[TV2]{TVnonHerm} Tao, T., Vu, V., Random matrices: universality of local
spectral statistics of non-Hermitian matrices, \emph{Ann. Probab.} \textbf{43}
(2015), 782--874.

\bibitem[ZVW]{ZVW} Zirnbauer, M.~R., Verbaarschot, J.~J.~M., Weidenm\"uller,
H.~A., Destruction of order in nuclear spectra by a residual GOE interaction,
\emph{Nuclear Phys. A} \textbf{411} (1983), 161--180.

\end{thebibliography}
\end{document}